\documentclass[11pt]{article}
\usepackage{amsmath,amssymb,amsthm,bbm,fullpage}
\usepackage{graphicx}
\usepackage{hyperref}
\usepackage{xcolor} 
\usepackage{subcaption}
\usepackage{tikz}
\usepackage{booktabs}
\usepackage{setspace}

\newtheorem{definition}{Definition}

\newtheorem{theorem}{Theorem}
\newtheorem{proposition}{Proposition}
\newtheorem{corollary}{Corollary}
\newtheorem{lemma}{Lemma}
\newtheorem{remark}{Remark}

\usepackage{enumitem}

\title{The Economics of Proof-of-Useful-Work}
\author{Rafael Pass\thanks{Pearl Research Labs. Rafael Pass is also a faculty member at 
Cornell Tech and Technion, and a member of the Cornell-Technion 
Jacobs Institute. He is also a faculty member at Tel Aviv University 
(currently on leave). This work was done while consulting for 
Pearl Research Labs.}}
\date{\today}

\begin{document}
\maketitle

\begin{abstract}
Proof-of-work (PoW) blockchains rely on computational expenditure to secure a ledger supporting a native cryptocurrency. In existing systems such as Bitcoin, this expenditure is intentionally useless: the computation secures consensus but produces no external economic output. 
An emerging alternative---proof of useful work (PoUW)---enables the same computation to simultaneously secure the blockchain and generate 
economically valuable output. However, PoUW is often criticized on economic grounds: if the work is useful, attackers might be ``paid to 
attack," potentially weakening security.

We develop a competitive-equilibrium model of a PoUW blockchain in 
which compute can be allocated across pure mining, pure useful work---instantiated as machine-learning inference---or ``duplex" work that 
produces both with computational overheads. We provide a complete 
closed-form characterization of equilibrium allocations and prices as 
a function of the duplex overheads and a single economic parameter---
the token–inference ratio--- measuring token adoption relative to the inference market. This characterization reveals three regimes: 
``Bitconia," in which the economy reduces to classical PoW; ``Fortessia," 
in which duplex replaces mining, increasing security while useful output 
remains unchanged; and ``Duplexia," in which token rewards subsidize 
inference, lowering prices and expanding inference supply.

Contrary to the common strawman argument, PoUW does not make attacks 
economically cheap: once equilibrium prices are taken into account, the 
economic cost of a majority attack remains tied to the block reward. 
Moreover, in Duplexia, block rewards act as rebates on inference prices, 
generating additional socially useful computation that would not arise 
without the blockchain---an expansion monotonically increasing in token 
adoption and technological efficiency.
\end{abstract}


\newpage
\tableofcontents
\thispagestyle{empty}
\clearpage
\setcounter{page}{1}
\section{Introduction}
Proof-of-work (PoW) blockchains rely on 
computational resource expenditure to secure a 
distributed ledger (the \emph{blockchain}) that 
supports a native token with a monetary value. This paradigm was introduced by Nakamoto \cite{Nakamoto2008} through the Bitcoin
protocol, whose security today is sustained by a vast amount of
specialized compute and whose total market capitalization is at a trillion dollar scale.

At a high level, so-called \emph{miners} maintain the blockchain by
solving computational puzzles known as proof of work (PoW).
These puzzles have three key properties: (i) \emph{difficulty}---finding 
a solution requires a prescribed expected amount of computation; 
(ii) \emph{unpredictability}---solutions cannot be found faster than 
by brute-force search; and (iii) \emph{verifiability}---a solution 
can be efficiently checked by anyone. In Bitcoin, the difficulty is 
continuously calibrated so that the expected time to find a solution 
equals 10 minutes given the total computational power participating 
in the protocol. These puzzles regulate block production and secure 
the protocol's ``consensus'' mechanism, ensuring both 
\emph{consistency} and \emph{liveness}.\footnote{Consistency 
informally means that honest participants observe a common history 
of the ledger, up to small delays due to network propagation, while 
liveness guarantees that valid transactions can continue to be 
included over time.} Together, these properties allow the protocol 
to function as a secure infrastructure for transacting in the native 
token. Additionally, formal analyses show that Nakamoto-style 
protocols achieve the above properties provided that honest parties 
control a majority of the computational power devoted to mining 
\cite{GarayKiayiasLeonardos2015,PassSeemanShelat2017}.

\paragraph{Does PoW need to be \emph{useless}?}
In Bitcoin, the work performed by miners is \emph{useless} in an 
economic sense: the computation satisfies the three properties above, 
yet produces no external economic output beyond maintaining consensus 
itself. This feature has attracted criticism as ``wasted computation''.
At the same time, others view it as a deliberate design criterion:
the \emph{waste} is precisely what makes attacks economically painful,
since an attacker must burn real resources to acquire ``security weight".

An emerging alternative paradigm is to rely on \emph{proof of useful work} (PoUW) (see e.g. \cite{King2013Primecoin, Ball2017UsefulPoW, KomargodskiWeinsteinPoUW})
in which the same computation simultaneously secures a blockchain and
produces economically valuable output. 
Indeed, the recent breakthrough construction of Komargodski and 
Weinstein~\cite{KomargodskiWeinsteinPoUW} realizes proofs of useful work for matrix 
multiplication---the workhorse of machine learning inference---and has motivated 
the launch of the Pearl Network~\cite{pearl2026}, a blockchain 
built on this construction.

The hope of PoUW is clear: less waste.
But PoUW is often met with a seemingly decisive strawman objection:
if the work is useful, then an attacker can be ``paid to attack'' by
selling the useful output, so the effective cost of
attacking collapses.
In the extreme, if useful work were perfectly reusable (what will we refer to as $\alpha=1$ below), one might conclude that attacks become essentially free!

\paragraph{Our Results in a Nutshell:}
This paper develops a competitive-equilibrium model that allows us to
formally investigate the above intuitions and, more broadly, study how
the \emph{design of work}---the computational overheads of PoUW relative
to pure useful computation and to pure security-generating computation
(i.e., mining)---shapes equilibrium allocations, pricing, and security.
Notably, we provide a complete closed-form characterization of equilibrium allocations and prices as a function of the computational overheads and a single economic parameter---the
\emph{token--inference ratio (T/I ratio)}---which measures token adoption relative to the size of the useful computation market.

As we show, contrary to the common strawman argument, PoUW does \emph{not} make attacks 
economically cheap: once equilibrium prices are taken into account, the economic cost of a majority attack remains tied to the block reward; under the same T/I ratio, the cost is at least as large as in a classical Bitcoin-style PoW system.
Moreover, when the computational overheads are small and token
adoption is sufficiently high, the presence of the PoUW blockchain
\emph{increases the supply of socially useful computation}. In this
regime, block rewards effectively act as rebates on useful computation,
generating additional useful work that would not arise in the absence
of the blockchain. Furthermore, the magnitude of this effect grows with
token adoption. 

\subsection{Modeling a Proof of Useful Work Economy}
The goal of our model is to develop a tractable 
framework that isolates the \emph{first-order 
economic forces} at play in a PoUW system. To 
this end, we consider an equilibrium model with a single 
time period (the \emph{blocktime}), a constant 
per-block reward (i.e., a fixed $R$ tokens are 
issued each block), and no transaction fees.
We show in Appendix~\ref{app:extensions} that the model readily extends 
to a multi-epoch setting with ``halving schedules" and transaction fees, and we defer the 
details to there.
As our goal is to focus on the economic 
aspects, we abstract away the details of the underlying 
blockchain protocol and simply assume a Nakamoto-style 
consensus protocol, which has been proven to implement a 
secure public ledger supporting a native token 
\cite{GarayKiayiasLeonardos2015, PassSeemanShelat2017} 
for any PoW puzzle satisfying the standard difficulty, 
unpredictability, and verifiability properties---whether 
useless or useful. The PoUW is assumed to simply replace 
the PoW puzzle in this protocol, with difficulty calibrated 
to ensure the appropriate blocktime. Finally, as we discuss 
in more detail below, we abstract away the implementation 
details of the PoUW scheme itself, summarizing its 
computational efficiency through the overhead parameters 
$\alpha$ and $\gamma$ (to be defined below).
\paragraph{The Compute Market}
We focus on a setting in which compute resources 
can produce economically valuable output; for 
concreteness, we take this useful computation to 
be machine-learning (ML) inference, though nothing 
in the analysis depends on this: any useful task 
with exogenous demand can be substituted. We 
assume a continuum of competitive agents (each 
modeled by a single GPU) that can freely enter 
and allocate their compute resources across three 
activities: \emph{pure mining} $M$ (a.k.a.\ 
solo-mine), \emph{pure inference} $L$ 
(a.k.a.\ solo-inference), and \emph{duplex} 
(a.k.a.\ ``hybrid'') \emph{compute} $D$, which 
simultaneously contributes to security and 
produces useful output (i.e., inference).

Each operation corresponds to a computation performed by a single GPU during a single time-period (a.k.a. the blocktime); for concreteness, think of the blocktime as being 10 minutes; we refer to this computation as a ``compute operation'', or ``compute op'' for short. Let $e>0$ denote the cost of a compute op (i.e., the total cost of running 1 GPU for the blocktime); for concreteness, think of $e$ as 1 USD. We note $e$ is the \emph{all-in cost of compute} which includes e.g., rental cost of a GPU-hour, inclusive of electricity and other operating costs. (In the main model in this paper, we assume a fixed cost $e$; however, as we show in Section \ref{sec:heterogeneous_costs}, our main theorems extend to a setting with heterogeneous cost, specified by an upward sloping supply function.)

\paragraph{Duplex Compute and its Overheads}
Each $M$ operation generates one unit of 
``security'' (i.e., the ``hashrate'' of 1 GPU), 
and each $L$ operation generates one unit of 
inference. The duplex (a.k.a.\ ``hybrid'', or ``dual-use") 
operation $D$ performs both simultaneously, 
but at imperfect efficiency.
Formally, each duplex operation yields 
$1/\alpha$ units of inference and $1/\gamma$ 
units of security, where $\alpha \geq 1$ and 
$\gamma \geq 1$ capture the technological 
overheads (equivalently, $\alpha$ duplex 
operations are required to replicate the 
inference output of one $L$ operation, and 
$\gamma$ duplex operations are required to 
replicate the mining output of one $M$ 
operation). 
Think of $\alpha$ and $\gamma$ as 
being on the order of $1.3$: for the price of 
one compute op, duplex delivers $\frac{1}{1.3} 
\approx 0.77$ units of inference and 
$\frac{1}{1.3} \approx 0.77$ units of security, 
for a combined output of $\approx 1.54$ units 
(i.e., we get ``$1.54$-for-$1$").

To ground this in a concrete PoUW construction, 
consider the matrix multiplication scheme 
of~\cite{KomargodskiWeinsteinPoUW}, which requires 
performing a perturbed version of matrix 
multiplication together with additional ``hashes'' 
to enable public verifiability. A pure inference 
provider can skip the perturbation and hashes 
entirely, saving roughly 23\% of computation. 
A pure miner can start from an all-zero matrix, 
also saving roughly 23\%, since the perturbed 
multiplication is slightly cheaper on trivial inputs. 
(The figures are purely illustrative; $\alpha = \gamma = 1.3$
corresponds to a saving of $1 - 1/1.3 \approx 23\%$.)

\paragraph{The Security and Inference Markets}
Total ``effective security" (a.k.a. ``total hashrate") is defined as:
\begin{equation*}\label{eq:security}
S := M + \frac{D}{\gamma},
\end{equation*}
$S$ corresponds to the amount of computational operations needed to control all mining in the blockchain, and thus the number of computational operations needed violate the honest majority assumption in a single time period is $S/2$.
Total ``inference supplied" during the time-period is
\begin{equation*}\label{eq:infsupply}
Q := L + \frac{D}{\alpha}.
\end{equation*}
The above operations effectively give rise to two separate markets, an \emph{inference market} and a \emph{security market}, that are connected through the duplex operation:
\begin{itemize} 
\item {\bf The inference market:} The inference 
market is specified by some exogenous 
\emph{inference demand} given by a 
downward-sloping curve $\mathcal{D}(p)$, where 
$p$ denotes the dollar price of 1 unit of 
inference (i.e., the inference provided by 1 GPU 
in 1 time-period). The inference price $p$ 
adjusts so that the total inference supplied, 
$Q = L + D/\alpha$, equals the quantity 
demanded at that price, i.e., $Q = \mathcal{D}(p)$; 
at this \emph{market-clearing} price, each 
solo-inference compute op yields revenue $p$.

\item {\bf The security market:} The native 
token trades at price $P$ (in USD) and each 
block mints $R > 0$ new tokens. We model 
block production deterministically, with 
exactly one block produced per time 
period.\footnote{In Bitcoin-style protocols 
blocks arrive stochastically as a Poisson 
process, with an average inter-arrival time 
of roughly ten minutes. Modeling block 
production deterministically is without loss 
of generality for risk-neutral agents: 
normalizing the time period to equal the 
expected inter-block time, a miner 
contributing $1/S$ of total hashrate earns 
an expected block reward of $PR/S$ per 
period under stochastic arrival, which 
coincides with the deterministic model.} 
In PoW blockchains, starting with 
Bitcoin~\cite{Nakamoto2008}, block rewards 
are shared proportionally among miners: each 
unit of mining earns a share of the block 
reward proportional to its contribution to 
total hashrate $S$, so one unit of pure 
mining earns PoW revenue $PR/S$.
\end{itemize}

\paragraph{Per-operation Profits and Competitive Equilibrium}
In total, and taking into account the cost $e$ of each compute op, we have that  per-operation profits (for each of $M,L,D)$ are:
\begin{equation}\label{eq:profits}
\pi_M=\frac{P R}{S}-e,\qquad
\pi_L=p-e,\qquad
\pi_D=\frac{P R}{\gamma S}+\frac{p}{\alpha}-e.
\end{equation}

We assume that agents enter competitively (\emph{free entry}). Competitive (general) equilibrium (see e.g. \cite{MasColellWhinstonGreen1995}) therefore requires that profits satisfy
\[
\pi_X \le 0 \quad \text{for all activities } X,
\]
with
\[
\pi_X = 0 \quad \text{for every active activity } X>0 .
\]
This captures the standard competitive-entry logic: if some activity generated strictly positive profit, either additional agents would enter that
activity or prices would adjust until profits are driven to zero.
\footnote{The zero-profit conditions implicitly 
assume a continuum of miners and inference 
providers, so that each individual agent 
controls a negligible fraction of total 
hashrate $S$ and inference supply $Q$ and 
therefore takes prices as given. In the limit 
of a continuum of agents, Cournot-style 
\cite{Cournot1838} strategic interactions 
among miners vanish, and Bertrand competition 
\cite{Bertrand1883} on the inference side 
pins $p = e$ whenever solo inference is active.}

\paragraph{Broader Economic Structure: Conjoint Production Markets}
At an abstract level, our model combines two classical market structures 
through a novel coupling mechanism. The inference market is a standard 
\emph{Bertrand competition}~\cite{Bertrand1883}: with free entry and 
homogeneous compute cost, inference providers compete on price driving it to production cost (in our case $e$). 
The mining market, on the other hand, is a standard \emph{Tullock contest}~\cite{tullock1980}: 
a fixed prize (the block reward $PR$) is shared proportionally among miners 
according to their hashrate contribution, and with a continuum of players 
the contest converges to a competitive outcome with ``full prize dissipation": 
$eS = PR$. 

What is novel is the \emph{duplex operation} $D$, a joint 
action that simultaneously produces outputs in both markets---inference and 
security---thereby coupling two markets that would otherwise be completely 
independent. We refer to such markets as \emph{conjoint production markets}. While we instantiate this framework in 
the context of PoUW blockchains, the underlying structure---two markets connected through a joint action that simultaneously produces outputs 
in both---may have broader economic implications beyond the blockchain setting, which we leave for future work.
\subsection{The Structure of Equilibrium and the Design of Work}
The key parameter governing the structure of equilibria is the
\emph{duplex overhead index}
\[
\Delta \;:=\; \frac1\gamma+\frac1\alpha-1 ,
\]
which captures the aggregate efficiency of duplex computation relative
to performing mining and inference separately.
A unit of duplex compute produces $1/\gamma$ units of security and
$1/\alpha$ units of inference. To replicate this output using solo 
activities would require $1/\gamma$ units of $M$ and $1/\alpha$ units 
of $L$, at a total cost of $(\frac{1}{\gamma} + \frac{1}{\alpha})$ 
compute ops. Since one duplex operation costs exactly one compute op, 
the sign of $\Delta = \frac{1}{\gamma} + \frac{1}{\alpha} - 1$ 
determines whether duplex is efficient relative to mixing:
\begin{itemize}
\item If $\Delta<0$, then $\frac{1}{\gamma} + \frac{1}{\alpha} < 1$, 
so duplex is strictly dominated by splitting compute between $M$ and 
$L$: one can achieve the same inference and security output at strictly 
lower cost. Thus only $M$ and $L$ can be active and the system 
effectively operates as a classic PoW/Bitcoin-style economy.
\item If $\Delta>0$, then $\frac{1}{\gamma} + \frac{1}{\alpha} > 1$, 
so duplex is strictly cheaper than replicating its output via $M$ and 
$L$ separately, and $D$ must always be active in equilibrium.
\end{itemize}

\subsection{The Main Theorem}
\label{sec:maintheorem.intro}
\begin{figure}[t]
\centering
\includegraphics[width=1\linewidth]{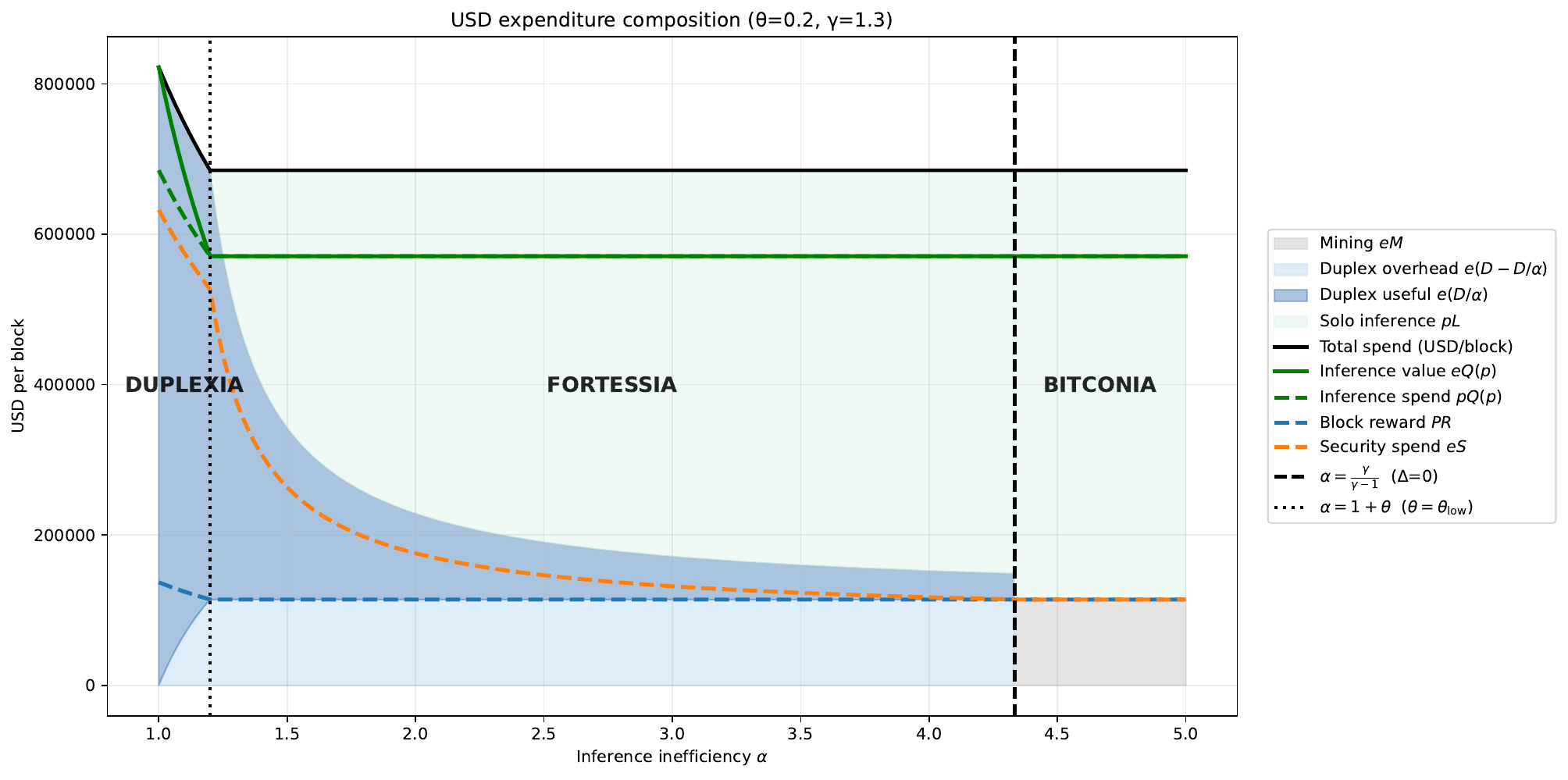}
\caption{Total spend in USD at duplex overhead
($\gamma = 1.3$), and its composition in terms of different activities. The labeled regions correspond to Bitconia, Fortessia, and Duplexia as
described in Theorem~\ref{thm:informal_full_transition}.
Note how in Duplexia, total amount spent on inference is lower than the value of this inference; this is due to the rebate on inference prices illustrated in Figure \ref{fig:rebate_low_gamma}.}
\label{fig:total_spend_low_gamma}
\end{figure}

\begin{figure}[t]
\centering
\includegraphics[width=1\linewidth]{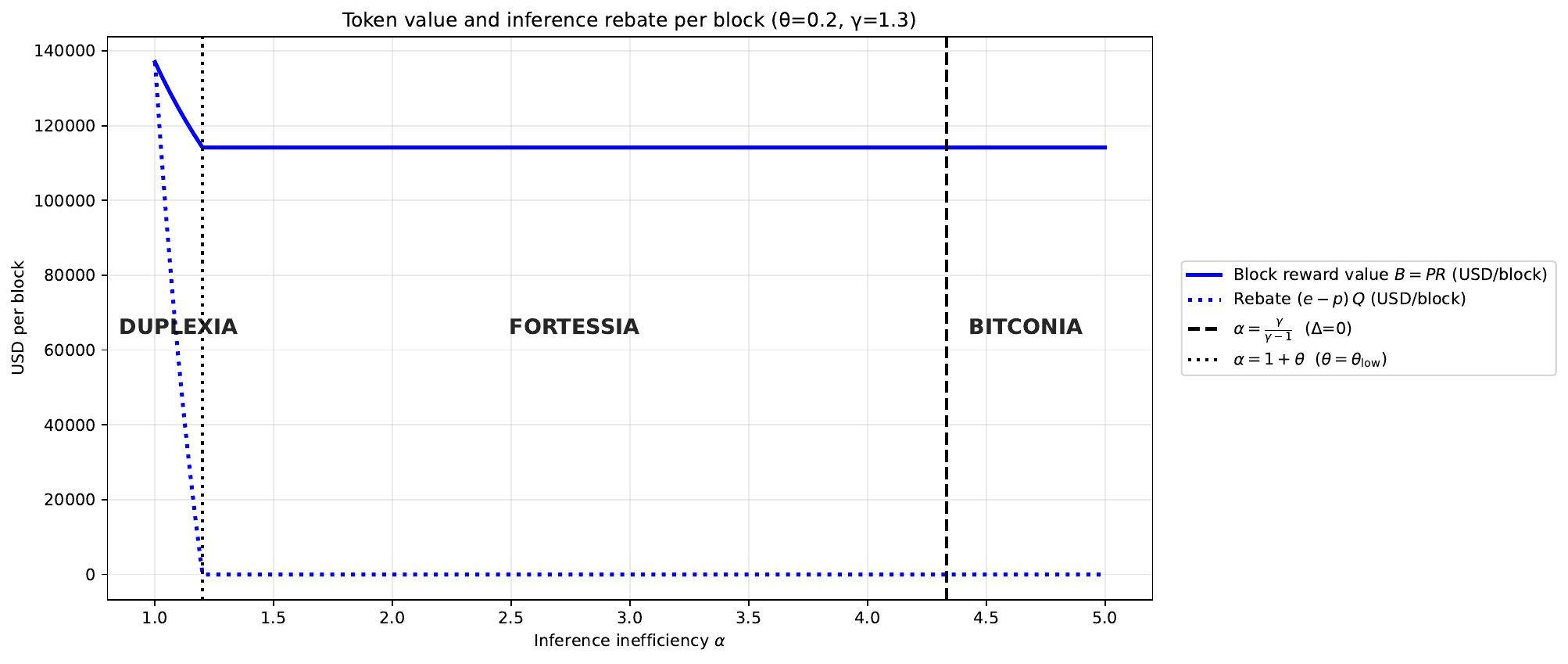}
\caption{Block-reward value and inference subsidy across regimes.
The subsidy emerges in Duplexia, where token rewards effectively reduce
the inference price, creating the feedback loop between usage, token
value, and security.}
\label{fig:rebate_low_gamma}
\end{figure}

In a competitive equilibrium, the token and 
inference markets need not be linked: the 
free-entry conditions alone do not pin down 
the relationship between the token price $P$ 
and the inference price $p$, and different 
equilibria may correspond to different 
relative magnitudes of token and inference 
market activity. Moreover, when $\Delta > 0$, 
multiple activity patterns are feasible---
$(D+L)$, $(D+M)$, and $(D$ only$)$---and the 
free-entry conditions alone do not select 
among them. To organize this multiplicity, 
we introduce a single scalar that captures 
the degree of economic linkage between the 
token economy and the inference market, and 
show that fixing this scalar uniquely pins 
down the equilibrium.

\paragraph{The Token/Inference Ratio (T/I ratio).}
We define the \emph{token--inference ratio 
(T/I ratio)}, denoted $\theta$, as the ratio 
between the dollar value of block rewards and 
the dollar value of inference transacted per 
block period:
\[
\theta = \frac{PR}{pQ}.
\]
In analogy with the price--earnings (P/E) 
ratio in equity valuation---where P/E 
summarizes the relationship between a firm's 
market value and its earnings as a single 
scalar, without requiring a theory of what 
determines either---the T/I ratio summarizes 
the degree of economic coupling between the 
token economy and the inference market. 

\paragraph{Main Characterization}
As we show, for any $\theta>0$, except for a knife-edge of 
technology parameters $\alpha,\gamma$, there is a \emph{unique} 
competitive equilibrium with T/I ratio $\theta$. As such T/I 
ratio is a natural metric that provides a \emph{one-dimensional 
parametrization} of the set of competitive equilibria: every 
equilibrium induces a value of $\theta$, and conversely, fixing 
$\theta$ uniquely pins down the equilibrium allocation and 
prices. Furthermore, as we discuss in Section~\ref{sec:monetary_intro} and 
in more detail in Appendix~\ref{sec:monetary_foundation}, under a simple 
monetary model, $\theta$ depends only on monetary primitives 
(settlement intensity, velocity, and hoarding behavior) and 
is independent of the compute technology parameters $\alpha$ 
and $\gamma$. This suggests that keeping $\theta$ fixed and 
varying the technological parameters is a natural methodology 
for evaluating the economic impact of improvements in duplex 
efficiency.

We assume throughout that inference demand is sufficiently elastic so that total spending $p\,\mathcal D(p)$ increases when
effective inference prices fall (elasticity greater than one), and ignore parameters that occur with measure 0. As we shall show, equilibrium structure is governed by two forces:
the technological efficiency of duplex compute $\Delta$ and the strength of cross-market linkage $\theta$: Technology determines whether duplex is viable at all; conditional on viability, the T/I ratio determines how much duplex activity reorganizes the economy.

\begin{theorem}[Equilibrium characterization via technology and T/I ratio, informal]
\label{thm:informal_full_transition}
Fix $(e,R)$ and a strictly downward-sloping inference demand curve
$Q=\mathcal D(p)$. For each value of the T/I ratio $\theta>0$, except for measure 0 set of values for $\Delta$ and $\theta$, 
the equilibrium is uniquely determined:
\begin{enumerate}
\item \textbf{Bitconia ($M{+}L$).}
If $\Delta<0$, the economy reduces to a Bitcoin-style world: mining and inference are fully decoupled:
\[
\begin{aligned}
M&>0, & D&=0, & L&>0, \\
p&=e, & Q&=\mathcal D(e), &
S&=M .
\end{aligned}
\]
Security is produced entirely by $M$, and inference by $L$. As a benchmark, let $Q^{\text{Bitconia}} = Q, \quad S^{\text{Bitconia}} = S$ denote the inference and security (i.e., ``hashrate") levels in such an economy.

\item \textbf{Fortessia ($D{+}L$).}
If $\Delta>0$ but $\theta<\theta_{\rm low}:=\alpha-1$, duplex replaces
solo mining while solo inference remains active:
\[
\begin{aligned}
M&=0, & D&>0, & L&>0, \\
p&=e, & Q&=Q^{\text{Bitconia}}, &
S&=\frac{D}{\gamma} > S^{\text{Bitconia}} .
\end{aligned}
\]
That is, useful output (i.e., inference) is unchanged relative to Bitconia, but security strictly increases (i.e.\ the security is \emph{fortified}).

\item \textbf{Duplexia ($D$ only, or $D{+}M$).}
If $\Delta>0$ and $\theta>\theta_{\rm low}$, solo inference disappears
and duplex becomes the only source of inference:
\[
\begin{aligned}
M&\ge 0, & D&>0, & L&=0, \\
p&<e, & Q&=\mathcal D(p) > Q^{\text{Bitconia}}, &
S&=M+\frac{D}{\gamma} > S^{\text{Bitconia}} .
\end{aligned}
\]
That is, token rewards effectively \emph{subsidize} useful work, lowering inference
prices and expanding demand. Additionally, as in Fortessia, security is also expanded w.r.t. Bitconia. In essence, we get \emph{duplex} (i.e., two-fold) gains.

Within this regime, the threshold $\theta_{\rm high}:=\frac{1}{\gamma-1}$ determines whether $M$ is active alongside duplex: if
$\theta<\theta_{\rm high}$, we obtain pure duplex ($D$ only), while if $\theta>\theta_{\rm high}$ pure mining also enters ($D{+}M$).
\end{enumerate}
Additionally, in all cases, we provide a simple closed-form expression of the equilibrium variables (as a function of technology parameters $\alpha,\gamma$, the T/I ratio $\theta$ and the exogenous demand function $\mathcal D$)---see Table \ref{tab:MLI_allocations:intro}---and in all cases 
\[
Q=\mathcal D(p),
\qquad
P=\frac{\theta\,p\,\mathcal D(p)}{R},
\]

\begin{table}[h!]
\centering
\caption{Equilibrium Prices and Allocations}
\label{tab:MLI_allocations:intro}
\renewcommand{\arraystretch}{1.6}
\begin{tabular}{|l|c|c|c|c|c|}
\hline
\textbf{Regime} 
& \textbf{Condition} 
& \textbf{Price $p$} 
& \textbf{$M$} 
& \textbf{$D$} 
& \textbf{$L$} \\ \hline

\textbf{M+L} 
& $\Delta < 0$ 
& $e$ 
& $\theta\,\mathcal D(e)$ 
& $0$ 
& $\mathcal D(e)$ 
\\ \hline

\textbf{D+L} 
& $\begin{matrix} \Delta > 0 \\ \theta < \theta_{\mathrm{low}} \end{matrix}$ 
& $e$ 
& $0$ 
& $\dfrac{\theta\alpha}{\alpha-1}\,\mathcal D(e)$ 
& $\Bigl(1-\dfrac{\theta}{\alpha-1}\Bigr)\mathcal D(e)$
\\ \hline

\textbf{D} 
& $\begin{matrix} \Delta > 0 \\ 
\theta \in (\theta_{\mathrm{low}},\theta_{\mathrm{high}}) \end{matrix}$ 
& $\dfrac{\alpha e}{1+\theta}$ 
& $0$ 
& $\alpha\,\mathcal D(p)$ 
& $0$ 
\\ \hline

\textbf{M+D} 
& $\begin{matrix} \Delta > 0 \\ 
\theta > \theta_{\mathrm{high}} \end{matrix}$ 
& $\alpha e\!\left(1-\dfrac{1}{\gamma}\right)$ 
& $\Bigl(\dfrac{\theta p}{e}-\dfrac{\alpha}
{\gamma}\Bigr)\mathcal D(p)$
& $\alpha\,\mathcal D(p)$ 
& $0$ 
\\ \hline
\end{tabular}
\end{table}
\end{theorem}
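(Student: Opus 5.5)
The plan is a case analysis over the seven possible activity supports $\{M,L,D\}\setminus\emptyset$. For each support I write down the zero-profit equalities for the active activities and the weak inequalities for the inactive ones. Then I add the identities $S=M+D/\gamma$, $Q=L+D/\alpha=\mathcal D(p)$ and $PR=\theta pQ$. Two observations cut down the cases. First, $\pi_M=0$ is equivalent to the full-dissipation identity $eS=PR$. Second, $L>0$ forces $p=e$. Supports without mining activity ($M=D=0$) are impossible, since then $S=0$ while $PR>0$ gives unbounded mining profit. Likewise $L=0$ with $Q>0$ requires $D>0$. So the candidate supports are $M{+}L$, $D{+}L$, $D$, $M{+}D$, and the knife-edge $M{+}D{+}L$.

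With $p=e$ and $Q=\mathcal D(e)$, the remaining support conditions come from comparing profits.
\begin{itemize}
\item $M{+}L$: full dissipation gives $S=M=\theta\mathcal D(e)$. The duplex condition $\pi_D=e(1/\gamma+1/\alpha-1)=e\Delta\le0$ then holds exactly when $\Delta\le0$. This establishes Bitconia and shows that $\Delta>0$ forces $D>0$.
\item $D{+}L$: $\pi_D=0$ with $p=e$ gives $PR/(\gamma S)=e(1-1/\alpha)$. Since $S=D/\gamma$, this yields $D=PR/(e(1-1/\alpha))=\theta\alpha\mathcal D(e)/(\alpha-1)$, and then $L=\mathcal D(e)-D/\alpha$. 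The requirement $L>0$ is exactly $\theta<\alpha-1$. For $M$ to stay inactive I need $\pi_M=PR/S-e=\gamma e(1-1/\alpha)-e\le0$, i.e.\ $\gamma(\alpha-1)/\alpha\le1$. This is equivalent to $\Delta\ge0$, so it holds.
\item $D$ only: $S=D/\gamma$ and $Q=D/\alpha$. Zero profit gives $e=PR/D+p/\alpha=\theta p/\alpha+p/\alpha$, hence $p=\alpha e/(1+\theta)$. The condition $\pi_L\le0$, i.e.\ $p\le e$, is $\theta\ge\alpha-1$. The condition $\pi_M=\gamma PR/D-e\le0$ reduces to $\gamma\theta p/\alpha\le e$, which is $\theta(\gamma-1)\le1$, i.e.\ $\theta\le\theta_{\rm high}$.
\item $M{+}D$: $\pi_M=0$ gives $PR/S=e$. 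Substituting into $\pi_D=0$ gives $p=\alpha e(1-1/\gamma)$. Then $D=\alpha\mathcal D(p)$, and $M=S-D/\gamma=(\theta p/e-\alpha/\gamma)\mathcal D(p)$. The requirement $M>0$ is $\theta>\alpha/(\gamma p/e)=1/(\gamma-1)$. The condition $p<e$ is $\Delta>0$.
\end{itemize}
In every case $P=\theta p\mathcal D(p)/R$ is immediate. This reproduces Table~\ref{tab:MLI_allocations:intro}.

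Uniqueness follows because, for fixed $\theta$ off the boundaries $\Delta=0$, $\theta=\alpha-1$, $\theta=1/(\gamma-1)$, the derived inequalities select exactly one support. I also need the threshold ordering $\alpha-1<1/(\gamma-1)$ whenever $\Delta>0$. This holds because $\Delta>0$ means $(\alpha-1)(\gamma-1)<1$. The three-activity support forces both $p=e$ and $p=\alpha e(1-1/\gamma)$, which is possible only when $\Delta=0$, so it lies in the measure-zero set.

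For the comparative claims I use the following.
\begin{itemize}
\item In Duplexia, $p<e$, and $\mathcal D$ is strictly decreasing, so $Q>\mathcal D(e)=Q^{\text{Bitconia}}$.
\item Security in Fortessia is $S=PR/(\gamma e(1-1/\alpha))$. This compares to Bitconia's $S=\theta\mathcal D(e)=PR/e$ at the same $\theta$, giving a factor $\alpha/(\gamma(\alpha-1))>1$ iff $\Delta>0$.
\item In Duplexia, $S$ is proportional to $\theta p\mathcal D(p)$ scaled against $e$. Comparing to $\theta\mathcal D(e)$ requires $p\mathcal D(p)$ versus $e\mathcal D(e)$, together with the factors $\gamma/\alpha\cdot\alpha/(1+\theta)$.
\end{itemize}

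The main obstacle is the Duplexia security comparison. The elasticity-greater-than-one assumption gives $p\mathcal D(p)>e\mathcal D(e)$ when $p<e$. However, the prefactor must still be checked. In the $D$-only case, $S=D/\gamma=(\alpha/\gamma)\mathcal D(p)$, so I need to show $(\alpha/\gamma)\mathcal D(p)>\theta\mathcal D(e)$. I would try to write $\mathcal D(p)\ge (e/p)\mathcal D(e)=((1+\theta)/\alpha)\mathcal D(e)$. This gives $S\ge ((1+\theta)/\gamma)\mathcal D(e)$, which exceeds $\theta\mathcal D(e)$ exactly when $\theta<1/(\gamma-1)$, the regime's own condition. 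In the $M{+}D$ case, $S=PR/e=\theta p\mathcal D(p)/e>\theta\mathcal D(e)$, again by elasticity.
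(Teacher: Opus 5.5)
Your proposal is correct and follows essentially the same route as the paper's proof of Theorem~\ref{thm:MLI_characterization}: you reduce to the four supports $(M{+}L)$, $(D{+}L)$, $(D)$, $(M{+}D)$, solve the zero-profit conditions together with $PR=\theta pQ$, and read off the regimes from free entry, feasibility, and the equivalence $\theta_{\rm low}<\theta_{\rm high}\iff\Delta>0$. Your explicit comparisons $S>S^{\text{Bitconia}}$ go beyond what the paper writes out (it asserts them but does not derive them), and they are right: Fortessia needs only $\Delta>0$ via the factor $\alpha/(\gamma(\alpha-1))$, while Duplexia relies on the standing elasticity assumption, and the strict inequality in the $M{+}D$ subcase requires strictly elastic demand.
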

\paragraph{Three Worlds and the Duplex 
``Feedback Loop''}
The equilibrium regimes above correspond to 
three qualitatively different economic 
``worlds''.

\begin{itemize}

\item {\bf Bitconia.}
When duplex is technologically inefficient 
($\Delta < 0$), the economy reduces to a 
classical proof-of-work blockchain. Security 
is generated entirely by mining, while useful 
computation occurs independently at price 
$p = e$. No interaction exists between the 
inference economy and the token economy.

\item {\bf Fortessia.}
When duplex becomes technologically efficient 
($\Delta > 0$) but the token economy remains 
relatively small ($\theta < \theta_{\rm low}$), 
duplex replaces pure mining as the source of 
security. Inference prices remain pinned at 
$p = e$, so the volume of useful computation 
is unchanged relative to Bitconia. The system 
is, however, strictly more secure: the number 
of computational operations needed to violate 
the honest majority assumption has increased, 
and security strictly exceeds wasted compute, 
so security efficiency is strictly improved 
(thus ``Fortessia'': the security is 
\emph{fortified}).

\item {\bf Duplexia.}
When both the technology and the token economy 
are sufficiently strong ($\Delta > 0$ and 
$\theta > \theta_{\rm low} = \alpha-1$) (i.e., the overhead $\alpha$ is sufficiently small compared to the T/I ratio), solo inference 
disappears ($L = 0$): all inference is now 
produced through duplex compute, and the 
inference price falls strictly below the 
compute cost, $p < e$, in contrast to 
Fortessia where $p = e$. Token rewards 
effectively subsidize inference, expanding 
demand beyond what would arise in the absence 
of the blockchain. Under the assumption that 
larger inference spending leads to higher 
token value---which holds, for example, under 
a fixed T/I ratio $\theta$, an assumption we 
justify in Section~\ref{sec:monetary_intro}---this 
gives rise to a \emph{positive feedback 
loop}: higher inference demand raises token 
value, which in turn amplifies 
the subsidy.\footnote{We use the term 
``feedback loop'' in the equilibrium sense: 
these are not sequential causal steps but 
mutually consistent conditions that hold 
jointly in the unique equilibrium characterized 
in Theorem~\ref{thm:informal_full_transition}.} 
As a result, the magnitude of the subsidy---
and hence the expansion of inference 
supply---grows monotonically with token 
adoption $\theta$ 
(Corollary~\ref{cor:P_monotone_theta}). Duplexia 
thus simultaneously expands useful computation 
and increases security relative to the 
classical PoW baseline---gains on both 
economic and security dimensions, hence 
``Duplexia''. 

The threshold $\theta_{\rm low} 
= \alpha - 1$ has a precise economic meaning: 
$\alpha - 1$ measures the efficiency loss of 
duplex work relative to solo inference, and 
once the T/I ratio $\theta$ exceeds this 
overhead, token rewards are sufficient to 
compensate for the loss in inference 
efficiency, so duplex compute dominates solo 
inference in equilibrium.
\end{itemize}

\paragraph{Graphical illustration.}
To illustrate the transition between the different worlds, we consider an 
isoelastic demand function $\mathcal{D}(p) = Kp^{-\varepsilon}$---where 
$K > 0$ is a scale parameter and $\varepsilon > 0$ is the price 
elasticity of demand---with parameters $(K, \varepsilon) = (571{,}000, 2)$\footnote{The value $571{,}000$ corresponds to an inference market of \$30B 
across $N = 52{,}560$ ten-minute blocks per year. The value $\varepsilon = 2$ 
corresponds to an elastic market in which lower prices unlock new use cases, 
consistent with the Jevons effect~\cite{jevons1865coal,saunders1992khazzoom,sorrell2009rebound}.}; 
see Section~\ref{sec:graph_elasticity} for sensitivity analysis across values 
of~$\varepsilon$.
Figure~\ref{fig:total_spend_low_gamma} shows the equilibrium total spending composition as the duplex inefficiency parameter $\alpha$ varies (for fixed $(e,\gamma,\theta)$).
For large $\alpha$, duplex is dominated and the economy operates in
\emph{Bitconia}, where mining and inference coexist independently.
As $\alpha$ decreases, the system moves into \emph{Fortessia}, where
duplex replaces pure mining while useful output remains unchanged,
leading to higher security for the same economic workload.

Finally, when $\alpha$ becomes sufficiently small, the economy enters
\emph{Duplexia}: solo inference disappears and duplex rewards
effectively subsidize useful work.
As illustrated in Figure~\ref{fig:rebate_low_gamma}, this subsidy lowers
the effective inference price, expands demand, and increases token
value.
The resulting feedback loop---lower $p$ increasing spending $pQ$,
which raises block-reward value $PR$ and hence token price $P$
---explains the joint rise in economic activity and security observed in Duplexia.

\subsection{Economic security and the strawman argument}
In the analysis so far, when discussing security, we have focused on the
\emph{computational} level $S$ securing the blockchain.
This is a natural benchmark: indeed, as mentioned, classical analyses of Nakamoto consensus \cite{GarayKiayiasLeonardos2015,PassSeemanShelat2017,GansHalaburda2024} show that the protocol remains secure provided
an attacker controls strictly less than $50\%$ of total mining power, so $S/2$ computational operations are needed to violate the security of the blockchain.

More recent works have instead considered a notion of
\emph{economic security}
\cite{budish2025,LeshnoPassShiEconomicSecurityRails},
which measures the \emph{economic cost} required to mount a majority attack.
A frequent criticism of proof-of-useful-work (PoUW) systems is that
useful computation may reduce this cost: because the work produces
valuable output, an attacker could in principle be ``paid to attack,''
potentially lowering the effective cost of acquiring security weight.
In the extreme limit $\alpha=1$, where duplex work produces useful output
without \emph{any} overhead, this heuristic suggests that attacks might even become
essentially costless.

Our equilibrium analysis shows that this intuition is 
flawed once considering prices obtained in equilibrium.
We define the economic cost of a $50\%$ attack as the 
minimum expenditure required to generate $S/2$ units of 
security, either via solo mining at cost $eS/2$, or via 
duplex compute at net cost $\frac{\gamma S}{2}(e - 
p/\alpha)$ (accounting for inference revenue earned 
during the attack):
\[
\mathrm{Cost}_{50\%}
\;:=\;
\min\!\left\{
\frac{eS}{2},
\;
\frac{\gamma S}{2}\!\left(e-\frac{p}{\alpha}\right)
\right\}.
\]
This models either a liveness attack, or a consistency 
attack under the assumption that a successful 
double-spend causes the token to crash, rendering block 
rewards unredeemable. We adopt this definition because, 
as observed in 
\cite{budish2025,LeshnoPassShiEconomicSecurityRails,GansHalaburda2024}, 
if the attacker can also redeem block rewards at their 
full current USD price, the net cost of attacking is 
zero even for Bitcoin, since block rewards exactly offset 
the computational expenditure. Moreover, 
\cite{LeshnoPassShiEconomicSecurityRails} show that a variant of the 
Nakamoto protocol can guarantee consistency 
unconditionally, with the honest-majority assumption 
required only for liveness; our cost measure therefore 
captures the economic resources required to violate 
security even in such strengthened protocols.

As we show, in any competitive equilibrium, 
the equilibrium prices $p$ and $P$ are such 
that the cost as defined above is exactly $PR/2$, 
regardless of which regime the economy 
operates in.

\begin{theorem}[Economic security, informal]
\label{thm:economic_security_monotone}
In any competitive equilibrium, the economic cost of a
$50\%$ attack is proportional to the block-reward value:
\[
\mathrm{Cost}_{50\%} =\;\frac{PR}{2}.
\]
\end{theorem}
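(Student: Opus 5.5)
The plan is to derive the statement directly from the free-entry conditions, without invoking the regime-by-regime characterization of Theorem~\ref{thm:informal_full_transition}. The key observation is that the two candidate attack costs per unit of security are exactly the quantities appearing in the mining and duplex profit expressions~\eqref{eq:profits}. Solo mining costs $e$ per unit of security. Duplex costs $\gamma(e-p/\alpha)$ per unit of security, net of inference revenue. Therefore
\[
\mathrm{Cost}_{50\%}=\frac{S}{2}\min\Bigl\{e,\;\gamma\bigl(e-\tfrac{p}{\alpha}\bigr)\Bigr\}.
\]
It suffices to show that in any equilibrium this minimum equals $PR/S$; the equality $\mathrm{Cost}_{50\%}=PR/2$ then follows. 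Since $R>0$ and, in any equilibrium with $P>0$, some security must be produced, I would first note that $S>0$. Indeed, if $S=0$, then $\pi_M=+\infty>0$, which violates the equilibrium conditions.

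\textbf{Lower bound.} Both terms in the minimum are at least $PR/S$. The condition $\pi_M\le 0$ gives $e\ge PR/S$. The condition $\pi_D\le 0$ gives $e-p/\alpha\ge PR/(\gamma S)$, that is, $\gamma(e-p/\alpha)\ge PR/S$. Hence the minimum is at least $PR/S$, and this part holds in every equilibrium regardless of which activities are active.

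\textbf{Attainment.} Since $S=M+D/\gamma>0$, at least one of $M$ and $D$ is strictly positive. If $M>0$, then $\pi_M=0$ gives $e=PR/S$. If $D>0$, then $\pi_D=0$ gives $\gamma(e-p/\alpha)=PR/S$. In either case the minimum is attained and equals $PR/S$. Multiplying by $S/2$ yields $\mathrm{Cost}_{50\%}=PR/2$.

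The main obstacle is not algebraic but a matter of interpretation. One must make sure the attacker faces the equilibrium prices $p$ and $P$, i.e., that a marginal attacker is price-taking, and that the degenerate case $S=0$ (or $P=0$) is excluded or handled. If $P=0$, then $S$ could be $0$ and the cost would be trivially $0=PR/2$, so I would treat that case separately as trivial. I would also remark that the argument never uses the sign of $\Delta$ or the value of $\theta$, which explains why the conclusion holds uniformly across Bitconia, Fortessia and Duplexia. As a sanity check, the closed forms in Table~\ref{tab:MLI_allocations:intro} can be used to confirm the result in each regime.
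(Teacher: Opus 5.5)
Your proposal is correct and is essentially the paper's proof of Theorem~\ref{thm:attack_cost}. The paper also reduces the claim to $\min\{e,\gamma(e-p/\alpha)\}=PR/S$, splits into the cases $M>0$ and $D>0$, and in each case combines that activity's zero-profit condition with the other activity's free-entry inequality; your lower-bound-then-attainment argument is the same reasoning, ordered differently, and the only other difference is that the paper simply assumes non-degeneracy ($S>0$) instead of discussing the $S=0$ or $P=0$ cases.
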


Theorem~\ref{thm:economic_security_monotone} captures a 
central economic insight: even when work is useful, once 
considering prices obtained in equilibrium, the cost of 
acquiring security weight is tied directly to the 
block-reward budget, just as in classical PoW systems. 
Furthermore, this result is robust to partial reward 
redemption: if the attacker can redeem a fraction $\nu 
\in [0,1]$ of block rewards, the attack cost becomes 
$(1-\nu)PR/2$, which remains identical to the 
corresponding cost for a classical Bitcoin-style PoW 
system (see Section~\ref{sec:attack_cost} for a formal 
derivation).

Moreover, assuming elastic demand and keeping a fixed 
T/I ratio, token rewards are weakly increasing as the 
duplex inefficiency parameter $\alpha$ decreases. 
Consequently, the economic cost of a $50\%$ attack is 
weakly higher in Fortessia and Duplexia than in Bitconia. 
The intuition differs across regimes. In Fortessia, 
duplex overhead replaces mining waste one-for-one, so 
the expenditure required to generate security remains 
unchanged; see Figure~\ref{fig:compute_low_gamma}. In 
Duplexia, duplex rewards subsidize useful computation, 
lowering the inference price and expanding demand. Under 
elastic demand this raises token value and therefore 
increases the economic cost of a $50\%$ attack---in 
essence, the attacker must not only bear the overhead of 
duplex computation, but also pay the inference subsidy, 
which together equal the token price.%

\subsection{Social Value of PoUW}
\label{sec:intro_social_value}

A central question in the economics of PoUW is to what extent PoUW
eliminates the ``useless computation" of Bitcoin-style POW mining. While,
intuitively, PoUW removes ``pure mining waste", it introduces two new sources of
inefficiency. First, duplex compute incurs overhead relative to solo
inference, which can be thought of as a form of burnt work. Second,
and more interestingly, in Duplexia, the block reward subsidy expands
the inference market, leading to
production of units that are assigned to consumers who value them
\emph{below their resource cost}---a so-called \emph{deadweight loss} \cite{harberger1964}. To analyze
these effects, we define the net social value of PoUW.

\begin{definition}[Net social value, informal]
Let
\[
  SV_{\rm inf}(e)
  := \int_0^{\mathcal{D}(e)} v_{\mathcal{D}}(q)\,dq - e\,\mathcal{D}(e)
\]
denote the social value of the standalone inference optimum: a world
in which inference is provided at cost $p = e$ with no
blockchain, where $v_{\mathcal{D}}(q)$ is the value to consumers of
the $q$-th unit of inference. The \emph{net social value} of PoUW is
\[
  \Delta SV
  := \underbrace{\int_0^{Q} v_{\mathcal{D}}(q)\,dq}_{\text{value of inference to consumers}}
  - \underbrace{e(M+D+L)}_{\text{resource cost of all compute}}
  + V_{\rm chain}
  \;-\; SV_{\rm inf}(e),
\]
where $Q$ is total inference supplied, $M+D+L$ is total compute, and
$V_{\rm chain}$ is the value of the blockchain.
\end{definition}

We focus on pure Duplexia, the empirically relevant Duplexia
subregime.\footnote{Mixed Duplexia requires $\theta > \theta_{\rm
high} = \frac{1}{\gamma-1}$, which for $\gamma = 1.3$ demands $\theta
> 3.33$, a token economy more than three times the size of the
inference market.}

\begin{theorem}[Informal; see Proposition~\ref{prop:social_value}
and Corollaries~\ref{cor:sv_duplexia}]
\label{thm:informal_sv}
In pure Duplexia ($\Delta > 0$, $\theta \in (\theta_{\rm
low},\theta_{\rm
high})$), the equilibrium inference price is $p =
\frac{\alpha}{1+\theta}e < e$ and the net social value is:
\[
  \Delta SV
  = V_{\rm chain}
  - \underbrace{
      \int_{\mathcal{D}(e)}^{\mathcal{D}(p)}
      \bigl(e - v_{\mathcal{D}}(q)\bigr)dq
    }_{\text{deadweight loss on expansion}}
  - \underbrace{
      e(\alpha-1)\,\mathcal{D}(p)
    }_{\text{duplex overhead}}.
\]
\end{theorem}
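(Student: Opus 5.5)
The plan is to substitute the pure-Duplexia allocation from Theorem~\ref{thm:informal_full_transition} (Table~\ref{tab:MLI_allocations:intro}) directly into the definition of $\Delta SV$, and then regroup the terms. There are four steps.

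\textbf{Step 1: the price.} In the regime $\Delta>0$, $\theta\in(\theta_{\rm low},\theta_{\rm high})$ the table gives $M=L=0$ and $D=\alpha\mathcal D(p)$. As a sanity check, I would re-derive the price from the zero-profit condition $\pi_D=0$. With $M=0$ we have $S=D/\gamma$, and from the definition of $\theta$ we have $PR=\theta pQ=\theta p\mathcal D(p)$. Hence
\[
\frac{PR}{\gamma S}=\frac{\theta p\mathcal D(p)}{D}=\frac{\theta p}{\alpha}.
\]
The condition $\pi_D=0$ then reads $\frac{\theta p}{\alpha}+\frac{p}{\alpha}=e$, which gives $p=\frac{\alpha e}{1+\theta}$. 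The inequality $p<e$ is exactly the condition $\theta>\alpha-1=\theta_{\rm low}$.

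\textbf{Step 2: total compute.} Since $M=L=0$, total compute is $M+D+L=D=\alpha\mathcal D(p)$. With $Q=\mathcal D(p)$, the net social value becomes
\[
\Delta SV=V_{\rm chain}+\int_0^{\mathcal D(p)}v_{\mathcal D}(q)\,dq-e\alpha\mathcal D(p)-\int_0^{\mathcal D(e)}v_{\mathcal D}(q)\,dq+e\mathcal D(e).
\]

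\textbf{Step 3: split the cost term.} I would write
\[
e\alpha\mathcal D(p)=e\mathcal D(p)+e(\alpha-1)\mathcal D(p).
\]
The second piece is the stated duplex overhead.

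\textbf{Step 4: collect the remainder.} What is left is
\[
\int_{\mathcal D(e)}^{\mathcal D(p)}v_{\mathcal D}(q)\,dq-e\bigl(\mathcal D(p)-\mathcal D(e)\bigr)=-\int_{\mathcal D(e)}^{\mathcal D(p)}\bigl(e-v_{\mathcal D}(q)\bigr)\,dq.
\]
This uses that $\mathcal D(p)>\mathcal D(e)$, which follows from $p<e$ and strict monotonicity of $\mathcal D$. This is the deadweight-loss term, and adding it to $V_{\rm chain}$ and the duplex overhead gives the claimed formula.

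\textbf{Interpretation and main obstacle.} To justify calling this term a deadweight loss, I would note that $v_{\mathcal D}$ is the inverse demand function. On the interval $(\mathcal D(e),\mathcal D(p))$ it satisfies $p< v_{\mathcal D}(q)< e$, so the integrand is positive. The only genuine subtlety is making sure that the value $v_{\mathcal D}$ is well defined as inverse demand, so that the integrals are finite. That amounts to assuming $\int_0^{\mathcal D(e)}v_{\mathcal D}<\infty$, which is already implicit in the definition of $SV_{\rm inf}(e)$. Everything else is algebra.
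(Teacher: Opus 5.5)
Your proof is correct, but it reaches the formula by a different route than the paper.

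\textbf{The paper's route.} The paper first proves a regime-independent decomposition (Proposition~\ref{prop:social_value}). Using the zero-profit identity $e(M+D+L)=pQ+PR$ (Lemma~\ref{lem:zero_profit_identity}), it writes
\[
\Delta SV = V_{\rm chain} - \text{(deadweight loss)} + (e-p)\mathcal D(p) - PR
\]
for every non-degenerate equilibrium. Only then does it specialize to pure Duplexia (Corollary~\ref{cor:sv_duplexia}): it uses $PR=\theta p\mathcal D(p)$ and $p(1+\theta)=\alpha e$ to collapse ``subsidy minus block reward'' into $-e(\alpha-1)\mathcal D(p)$.

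\textbf{Your route.} You bypass both the zero-profit identity and the T/I condition in the welfare computation. Once $M=L=0$, the accounting identity $Q=D/\alpha$ gives total compute $\alpha\mathcal D(p)$ directly. The overhead then appears literally as the extra $(\alpha-1)$ compute ops per unit of inference.

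\textbf{What each buys.} Your argument is more elementary. It shows the welfare formula is a pure accounting fact, valid for any outcome with $M=L=0$ that clears at some price $p$. The equilibrium conditions enter only through the price formula of your Step~1. However, it is tied to $M=0$: in mixed Duplexia it would pick up an extra $-eM$ term.

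The paper's argument yields a decomposition valid in every regime. It reuses this for Corollary~\ref{cor:sv_solo} and for the Bitconia comparison (Corollary~\ref{cor:sv_comparison}). It also exposes an economic identity that your route hides: in pure Duplexia, the block reward exceeds the inference subsidy by exactly the duplex overhead.

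\textbf{A minor point.} Step~4 does not actually need $\mathcal D(p)>\mathcal D(e)$ or strict monotonicity of $\mathcal D$. The identity holds for oriented integrals regardless. Monotonicity matters only for your interpretive remark that the integrand $e-v_{\mathcal D}(q)$ is positive.
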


The deadweight loss measures the efficiency cost of the inference
subsidy: block rewards push the price below cost, inducing
consumption of units $q \in (\mathcal{D}(e), \mathcal{D}(p)]$ whose
consumer value $v_{\mathcal{D}}(q)$ falls short of their resource
cost $e$. The loss per unit is $e - v_{\mathcal{D}}(q) \in [0,e-p]$,
strictly less than the full resource cost $e$. 
Since $p = e$ maximizes standalone inference social value, this
deadweight loss is second order in the price discount $e - p$ by
the classical result of Harberger~\cite{harberger1964}, and hence
small when the subsidy is modest.
The duplex overhead $e(\alpha-1)\mathcal{D}(p)$ measures the cost
of the overhead of using duplex operations to provide the compute
needed for the inference demand $\mathcal{D}(p)$.

\begin{proposition}[Informal; see Corollary~\ref{cor:sv_solo}]
In Bitconia and Fortessia ($L > 0$), the equilibrium inference price $p = e$ and the net social value is:
\[
  \Delta SV = V_{\rm chain} - PR.
\]
\end{proposition}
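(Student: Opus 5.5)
The plan is to substitute the equilibrium conditions into the definition of $\Delta SV$. In Bitconia and Fortessia solo inference is active ($L>0$), so the zero-profit condition $\pi_L=0$ gives $p=e$. Market clearing then gives $Q=\mathcal D(e)$, and the consumer-value term $\int_0^{Q} v_{\mathcal D}(q)\,dq$ coincides with the one appearing in $SV_{\rm inf}(e)$. After cancelling it,
\[
\Delta SV = V_{\rm chain} - e(M+D+L) + e\,\mathcal D(e),
\]
and it remains to show that total compute expenditure equals $e\,\mathcal D(e) + PR$.

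For this we use the zero-profit conditions of the active activities, multiply each by the mass of that activity, and sum. Every inactive activity has mass zero, and every active one has $\pi_X=0$, so $\sum_X X\pi_X = 0$ holds in any equilibrium. Writing this sum out using \eqref{eq:profits}:
\[
M\Bigl(\tfrac{PR}{S}-e\Bigr) + L(p-e) + D\Bigl(\tfrac{PR}{\gamma S}+\tfrac{p}{\alpha}-e\Bigr) = 0 .
\]
Grouping terms gives
\[
\frac{PR}{S}\Bigl(M+\frac{D}{\gamma}\Bigr) + p\Bigl(L+\frac{D}{\alpha}\Bigr) = e(M+D+L).
\]
Since $M+D/\gamma = S$ and $L+D/\alpha = Q$, this is the accounting identity $e(M+D+L) = PR + pQ$. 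Setting $p=e$ and $Q=\mathcal D(e)$ yields $e(M+D+L) = PR + e\,\mathcal D(e)$, and substituting back gives $\Delta SV = V_{\rm chain}-PR$.

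The only delicate point is that the identity needs $S>0$, so that the terms $PR/S$ are well defined. This follows because $R>0$, and any equilibrium with a positive token price must have some mining or duplex activity; otherwise $\pi_M$ would be unbounded. The identity also needs the zero-profit conditions to hold exactly on every active activity, which is part of the equilibrium definition. As a consistency check, this agrees with the closed forms in Table~\ref{tab:MLI_allocations:intro}. In M+L, spending is $e\theta\mathcal D(e)+e\mathcal D(e)$ with $PR=\theta e\mathcal D(e)$. In D+L, $D+L=\bigl(1+\frac{\theta}{\alpha-1}(\alpha-1)\bigr)\mathcal D(e)$, so the spending is again $PR+e\,\mathcal D(e)$.
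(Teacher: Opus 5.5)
Your argument is correct and essentially matches the paper's. Both derive $p=e$ from $\pi_L=0$ and rely on the zero-profit identity $e(M+D+L)=pQ+PR$ (Lemma~\ref{lem:zero_profit_identity}); the paper passes this through the general decomposition of Proposition~\ref{prop:social_value}, where the deadweight-loss and subsidy terms vanish at $p=e$, whereas you substitute directly. Your concern about $S>0$ is covered by the paper's standing restriction to non-degenerate equilibria.
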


Whether PoUW improves on Bitcoin depends on what is held fixed.
At the same block reward $PR$, Duplexia strictly dominates Bitcoin
(see Proposition~\ref{prop:social_value}): Bitcoin burns $PR$
entirely on mining without producing useful output, while Duplexia
spends the same $PR$ but also produces inference that consumers
value. 
The more appropriate comparison holds fixed the T/I ratio $\theta$,
consistent with how we compare equilibria throughout the paper. At the same $\theta$, Duplexia avoids
the mining waste $PR = \theta\,e\,\mathcal{D}(e)$ that Bitcoin burns
on mining, but introduces two residual costs: the deadweight
loss from the inference subsidy and the duplex overhead on the
expanded inference quantity. Whether these residual costs are smaller
than the avoided mining waste depends on the demand function and the
technology parameters---it is not guaranteed in general.

Under isoelastic demand $\mathcal{D}(p) = Kp^{-\varepsilon}$ with
$\varepsilon > 1$, we obtain a clean characterization: Duplexia
strictly dominates Bitcoin at the same $\theta$ if and only if
$\theta < \frac{1}{\varepsilon-1}$. The threshold
$\frac{1}{\varepsilon-1}$ decreases with elasticity: more elastic
demand means a larger inference expansion, amplifying both the
overhead and the deadweight loss on expanded compute, so Duplexia
dominates over a smaller range of $\theta$. For $\varepsilon = 2$,
the threshold is $\theta < 1$ --- Duplexia dominates whenever
inference market revenue exceeds block reward revenue. 

\subsection{A Simple Monetary Foundation for the T/I Ratio}
\label{sec:monetary_intro}

As we show in Appendix~\ref{sec:monetary_foundation}, 
the T/I ratio $\theta$ admits a natural monetary 
interpretation. In particular, we show that if
\begin{enumerate}
    \item {\em a fixed multiple $\theta_{\rm raw}$ of 
    inference spending is settled in the native token}, and
    \item {\em circulating tokens follow a simple 
    stock-flow process, with newly issued tokens 
    partially entering circulation and a fraction 
    periodically withdrawn into hoarding,}
\end{enumerate}
then any stationary equilibrium must satisfy a fixed 
T/I ratio $\theta$, which depends only on 
$\theta_{\rm raw}$ and the parameters of the monetary 
dynamics model, and is independent of the underlying 
compute technology parameters $\alpha$ and $\gamma$.

In more detail, let $pQ$ denote the dollar value of 
inference spending per block period, and suppose that 
a fraction $\theta_{\rm raw}$ of this spending is 
settled in the native token, so that the total dollar 
value of token-denominated transactions per block 
period equals
\[
\mathcal{E} = \theta_{\rm raw} \, pQ.
\]
The multiple $\theta_{\rm raw}$ may exceed one, 
allowing for token-denominated demand beyond inference 
settlement alone. We relate $\mathcal{E}$ to the 
token price $P$ via Fisher's classical equation of 
exchange~\cite{Fisher1911}: the total value of 
transactions equals the price level times the 
velocity of money times the money stock, i.e., 
$\mathcal{E} = PVX$, where $X$ is the stock of 
circulating tokens and $V$ is the average number 
of times each circulating token is used per block 
period (the ``velocity''). The circulating stock 
$X$ evolves according to a simple stock-flow 
process: each block period, a fraction $\nu \in 
(0,1]$ of newly issued tokens enters circulation 
(capturing the idea that miners sell a portion of 
their rewards), while a fraction $\lambda \in 
(0,1]$ of circulating tokens exits circulation 
into hoarding (capturing the tendency of token 
holders to withdraw tokens from active use). In 
steady state, inflows equal outflows, giving 
$\lambda X = \nu R$ and hence $X = (\nu/\lambda)R$. 
Substituting $X = (\nu/\lambda)R$ into the 
equation of exchange $\mathcal{E} = PVX$ gives
\[
\mathcal{E} = PV \cdot \frac{\nu}{\lambda} R
= \frac{\nu V}{\lambda} \cdot PR.
\]
Substituting $\mathcal{E} = \theta_{\rm raw} \, pQ$ 
and rearranging yields
\[
PR = \frac{\lambda}{\nu V} \cdot \mathcal{E} 
= \frac{\lambda}{\nu V} \cdot \theta_{\rm raw} \, pQ 
= \theta \, pQ,
\]
where 
\[
\theta := \frac{\lambda}{\nu V} \, \theta_{\rm raw}
\]
summarizes settlement intensity ($\theta_{\rm raw}$), 
hoarding behavior ($\lambda, \nu$), and velocity 
($V$) into a single index linking inference spending 
to token value. Crucially, $\theta$ is independent 
of the compute technology parameters $\alpha$ and 
$\gamma$.

Following Fisher's equation-of-exchange 
framework~\cite{Fisher1911},
and Friedman's 
restatement of the quantity 
theory~\cite{friedman1956}, 
we treat $V$, $\lambda$, 
and $\nu$ as stable empirical parameters rather than 
endogenous outcomes. Any model in which these respond 
to prices or adoption necessarily introduces 
behavioral assumptions---about expectation formation, 
hoarding decisions, or responses to inference 
subsidies---that are difficult to discipline 
empirically and inevitably more ad hoc than the 
stability assumption. This perspective 
therefore \emph{justifies} treating $\theta$ as a 
\emph{behavioral constant} when analyzing how 
changes in technology reshape equilibrium structure.

\subsection{The Structure of Competitive Equilibria without a fixed T/I Ratio}
Theorem~\ref{thm:informal_full_transition} characterizes equilibrium when
the T/I ratio $\theta$ is fixed, enabling an ``apples-to-apples''
comparison between PoUW equilibria and the classical PoW benchmark
(Bitconia). The same framework can also be used to understand the
structure of equilibria even \emph{without fixing the T/I ratio},
once we restrict attention to \emph{robust} equilibria---that is, equilibria with \emph{strict} complementarity; assuming $\Delta \neq 0$, this rules out the measure 0 set of parameters on which Theorem \ref{thm:informal_full_transition} did not apply.

\paragraph{Security Efficiency}
Without fixing the T/I ratio, absolute security levels cannot be
directly compared across equilibria: higher token rewards induce more mining activity. Instead, we here compare
the \emph{security efficiency} of equilibria, defined as the amount
of security produced relative to the amount of \emph{``wasted
computation"} $W$, defined as: 
\[
W := M + D\Bigl(1-\frac1\alpha\Bigr),
\]
i.e., the amount of computation that does not correspond to useful
inference. A key observation is that the difference between security and wasted
compute equals
\[
S-W = D\Delta.
\]
Thus whenever $\Delta>0$ and duplex compute is active, security
strictly exceeds wasted computation: part of the system's security is
obtained ``for free,'' without additional wasted compute.

\paragraph{The Structure of General Equilibria}
Combining the above observation (i.e., that robust equilibria exclude the measure 0 set of parameters where Theorem~\ref{thm:informal_full_transition} does not apply), Theorem~\ref{thm:informal_full_transition} yields the following classification of general competitive equilibria:

\begin{theorem}[Equilibrium regimes and security efficiency, informal]
\label{thm:informal_efficiency}
Fix primitives with $\Delta\neq 0$, and consider any non-degenerate robust
competitive equilibrium $(M,D,L,S,Q,p,P)$.
Then exactly one of the following holds:
\begin{enumerate}
\item \textbf{Case $\Delta<0$: Bitconia ($M+L$).} $M>0,\quad L>0,\quad D=0$ and
\[
p=e,\quad Q=\mathcal D(e),\quad S=M =W,\quad P=\frac{eS}{R}.
\]
\item \textbf{Case $\Delta>0$:} exactly one of the following holds.
\begin{enumerate}
\item[(a)] \textbf{Fortessia ($D{+}L$).} $M=0,\quad D>0,\quad L>0$ and
\[
p=e,\quad Q=\mathcal D(e),\quad S=\frac{D}{\gamma}>W,\quad
P=\frac{\gamma S}{R}\,e\Bigl(1-\frac1\alpha\Bigr).
\]

\item[(b)] \textbf{Duplexia ($L=0$).} $D>0,\quad L=0$, and exactly one of the following subcases holds.
\begin{enumerate}
\item[(b1)] \textbf{Pure Duplexia ($D$ only).} $M=0$ and
\[
p\in\Bigl(\alpha e\Bigl(1-\frac1\gamma\Bigr),\,e\Bigr),\qquad
P=\frac{\gamma S}{R}\Bigl(e-\frac{p}{\alpha}\Bigr)\;<\;\frac{eS}{R}.
\]

\item[(b2)] \textbf{Mixed Duplexia ($M{+}D$).} $M>0$ and
\[
p=\alpha e\Bigl(1-\frac1\gamma\Bigr),\qquad
P=\frac{eS}{R}.
\]
In both subcases,
\[
p<e,\qquad S > W,\qquad Q=\mathcal D(p) \geq D(e).
\]
\end{enumerate}
\end{enumerate}
\end{enumerate}
\end{theorem}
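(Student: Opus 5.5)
The plan is to work directly from the free-entry conditions in \eqref{eq:profits} and classify robust equilibria by their active set. Strict complementarity means an inactive activity has strictly negative profit. Non-degeneracy means $S>0$ and $Q>0$, so some activity producing security and some producing inference are active. Write $r:=PR/S$ for the per-unit security revenue. The conditions are then $r\le e$, $p\le e$ and $r/\gamma+p/\alpha\le e$, with equality exactly on the active activities. The key identity I will use repeatedly is that if $M$ and $L$ are both active then $r=p=e$, which gives $\pi_D=e\Delta$.

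\textbf{Case $\Delta<0$.} Suppose $D>0$. Then $r/\gamma+p/\alpha=e$ with $r,p\le e$, so
\[
e=\frac{r}{\gamma}+\frac{p}{\alpha}\le e\Bigl(\frac1\gamma+\frac1\alpha\Bigr)<e,
\]
a contradiction. Hence $D=0$, and non-degeneracy forces $M>0$ and $L>0$. The zero-profit conditions give $r=e$ and $p=e$. From these, $Q=\mathcal D(e)$, $S=M=W$, and $P=eS/R$.

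\textbf{Case $\Delta>0$.} First rule out $D=0$: then $M,L>0$, so $\pi_D=e\Delta>0$, violating free entry. So $D>0$, and $r/\gamma+p/\alpha=e$. Next rule out $M>0$ and $L>0$ holding simultaneously, since that again gives $\pi_D=e\Delta\neq0$. This leaves three patterns.
\begin{enumerate}
\item[(a)] $L>0$, $M=0$. Then $p=e$, so $r=\gamma e(1-1/\alpha)$. This yields $P=rS/R$ of the stated form, $Q=\mathcal D(e)$ and $S=D/\gamma$.
\item[(b2)] $M>0$, $L=0$. Then $r=e$, so $p=\alpha e(1-1/\gamma)$ and $P=eS/R$. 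Here $p<e$ is equivalent to $\alpha(1-1/\gamma)<1$, i.e.\ $\Delta>0$.
\item[(b1)] $D$ only. Strict inactivity of $M$ and $L$ gives $r<e$ and $p<e$. Since $r=\gamma(e-p/\alpha)<e$, we get $p>\alpha e(1-1/\gamma)$. Then $P=\gamma S(e-p/\alpha)/R<eS/R$.
\end{enumerate}
In every subcase with $D>0$, $S-W=D\Delta>0$. In both Duplexia subcases $p<e$, so downward-sloping demand gives $Q=\mathcal D(p)\ge\mathcal D(e)$. Finally, $S=D/\gamma>W$ in Fortessia follows from $M=0$ and the same identity.

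\textbf{Main obstacle.} The step I expect to need the most care is the role of robustness, specifically strict complementarity. It is what makes the cases mutually exclusive ("exactly one") and what gives the strict price bounds in (b1). Without it, boundary equilibria could appear: for example, $D$ only with $p=e$ coincides with Fortessia with $L=0$. The invariant that rules these out is that in a robust equilibrium each zero-profit condition holds if and only if the corresponding activity is active, so I will state that equivalence explicitly at the outset.
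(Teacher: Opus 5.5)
Your proof is correct, and it takes a different route from the paper.

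\textbf{The paper's route.} The paper does not redo the case analysis. It reduces to Theorem~\ref{thm:MLI_characterization}, in four steps:
\begin{enumerate}
\item It shows $p>0$, using $\gamma>1$: if $p=0$ then $L=0$, so $D>0$ and $PR/S=\gamma e>e$.
\item It disposes separately of the corner $P=0$, which forces $\alpha=1$ and lands in Fortessia.
\item Only then does it define $\theta=PR/(pQ)>0$, and it observes that robustness excludes the knife-edges $\theta\in\{\theta_{\rm low},\theta_{\rm high}\}$.
\item It reads the four regimes off the $\theta$-characterization.
\end{enumerate}

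\textbf{Your route.} You work directly with the three free-entry inequalities in the variables $(r,p)$, where $r=PR/S$. This is self-contained and more elementary. Because you never divide by $pQ$ or $PR$, the cases $p=0$ and $P=0$ need no special treatment; the $\alpha=1$, $P=0$ corner is simply subcase (a) with $r=0$. You also never use $\gamma>1$. Your $\Delta<0$ step exposes the underlying geometry: duplex's zero-profit condition cannot be met under the caps $r\le e$ and $p\le e$.

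\textbf{What the paper's route buys.} It shows the robust classification is a corollary of the main $\theta$-theorem. Every robust equilibrium therefore corresponds to a non-knife-edge value of $\theta$, and this comes with explicit formulas, such as $p=\alpha e/(1+\theta)$ in (b1).

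\textbf{A correction to your closing remark.} The mutual exclusivity of the cases is automatic, because they are distinguished by disjoint active sets. The steps that rule out $D>0$ when $\Delta<0$ and force $D>0$ when $\Delta>0$ use only weak free entry. Robustness is needed exactly for the strict inequalities in (b1): $p<e$, $p>\alpha e(1-1/\gamma)$, and $P<eS/R$. Without it, a $D$-only equilibrium could sit at $p=e$ or at $r=e$, and then it would satisfy none of the listed cases.
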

In summary, when duplex technology is efficient ($\Delta>0$), PoUW
improves security efficiency relative to classical PoW. In the
Fortessia regime, this manifests purely as more efficient security w.r.t. Bitconia, while in Duplexia the system additionally delivers lower inference prices and greater useful computation. However, without fixing the T/I ratio, we cannot determine in which of those regimes we will end up.

\subsection{Conclusions and Discussion}
\label{sec:discussion}

We have developed a competitive equilibrium model of a 
PoUW blockchain in which compute can be allocated 
across pure mining ($M$), pure inference ($L$), 
and duplex work ($D$) that produces both 
simultaneously. Our main contributions are as 
follows.

\paragraph{Three Regimes.} We provide a complete 
closed-form characterization of the competitive 
equilibrium, which exhibits three qualitatively 
distinct regimes determined by the duplex 
technology efficiency $\Delta$ and the T/I ratio 
$\theta$: \emph{Bitconia} ($\Delta \leq 0$; 
duplex is inefficient, the economy reduces to 
classical PoW with no duplex activity), 
\emph{Fortessia} ($\Delta > 0$, $\theta < 
\theta_{\rm low}$; duplex replaces pure mining 
$M = 0$, security compute increases while inference 
output is unchanged), and \emph{Duplexia} 
($\Delta > 0$, $\theta > \theta_{\rm low}$; 
duplex replaces pure inference $L = 0$, inference 
prices fall below marginal cost and, under strictly elastic demand, the inference 
market expands). The transition between regimes 
is governed by interplay between the duplex overheads and the T/I ratio $\theta$, which 
measures token adoption relative to the size of 
the inference market.

\paragraph{Economic Security.} Contrary to the common 
strawman argument that PoUW makes attacks 
economically cheap (since attackers are 
``paid'' to attack via useful work), the economic 
cost of a majority attack equals $PR/2$ in all 
regimes, identical to Bitcoin in Fortessia and 
at least as high in Duplexia. 

\paragraph{Expansion of Socially Useful Computation.}
In the Duplexia regime, block rewards subsidize inference prices, which under elastic demand,
expands the inference market beyond what would arise without the
blockchain: more inference is supplied at a lower price, which under elastic demand leads to an increase in the total market size. This
expansion comes with two social value costs --- deadweight loss from
the inference subsidy and duplex overhead --- that must be weighed
against the avoided mining waste. At the same block reward $PR$,
Duplexia strictly dominates Bitcoin in terms of net social value.
At the same T/I ratio $\theta$, Duplexia dominates Bitcoin if and
only if the avoided mining waste exceeds these two costs---which
under isoelastic demand with elasticity $\varepsilon > 1$ holds if
and only if $\theta < \frac{1}{\varepsilon-1}$.

\medskip
The above summarizes our formal results. We now 
expand on some of the economic intuitions behind 
them, and discuss qualitative implications and 
limitations that go beyond the model.

\paragraph{Are We Getting a ``Free Lunch''?}
At first glance, the inference expansion may appear paradoxical:
the PoUW blockchain produces additional useful computation
apparently ``for free.'' The funding comes from inflationary token
issuance (seigniorage)---the same mechanism that finances
proof-of-work security in Bitcoin, and that governments use when
they finance public expenditure through money creation. In a PoUW
system, the same issuance not only secures the blockchain but also
subsidizes inference, pushing the price below marginal cost.

But this is not a free lunch. Subsidies distort prices and generate
deadweight loss~\cite{harberger1964}: the inference price reduction
attracts consumption of units whose value to consumers falls short
of their resource cost. PoUW therefore replaces the pure mining
waste of Bitcoin with two smaller costs: deadweight loss from the
inference subsidy and duplex overhead. Whether this is a net
improvement depends on the parameters, as characterized formally
in Section~\ref{sec:sv_graphs}.

That said, our analysis is conservative in one important respect.
We benchmark against the competitive inference price $p = e$, but
in practice market frictions may cause the standalone inference
price to exceed the true resource cost $e_{\rm phys} \leq e$; for example, under monopoly pricing the classic Lerner formula~\cite{Lerner1934} implies a wedge between price $e$ and production cost. 
In
that case the inference market already underproduces relative to
the social optimum, and the PoUW subsidy is partially corrective
rather than purely distortionary. Under such conditions, the
blockchain is effectively removing a pre-existing market
inefficiency, and may generate strictly positive net social value
even accounting for the duplex overhead---and even if the
blockchain provides no transaction settlement value whatsoever
($V_{\rm chain} = 0$); see Remark~\ref{rem:ephys} for a more detailed discussion and analysis.

\paragraph{Monotonicity and Incentive Alignment.}
As we formally show in 
Corollaries~\ref{cor:P_monotone_alpha} 
and~\ref{cor:P_monotone_theta}, under elastic 
demand, equilibrium token value and cost of 
attack grow monotonically with token adoption 
$\theta$ and duplex efficiency $\Delta$. 
This creates natural incentive alignment: 
inference providers benefit from greater token 
adoption (lower inference prices, larger market), 
giving them a natural incentive to promote the 
PoUW token; and token holders benefit from 
improvements in duplex technology (higher token 
value), giving them a direct financial stake in 
funding such research. 

\paragraph{On the Role of Demand Elasticity.}
Our results above all rely on inference demand being elastic (i.e., total 
inference spending increases as prices fall). This should be expected, 
consistent with the Jevons 
effect~\cite{jevons1865coal,saunders1992khazzoom,sorrell2009rebound}, for AI 
inference and more broadly for useful-computation applications where lower 
prices unlock new use cases rather than merely substituting for existing ones. 
We note that the strength of the above effects grows with the degree of 
elasticity: higher elasticity amplifies both the expansion of inference supply 
and the appreciation of token value in Duplexia.

Notably, if demand becomes inelastic, we no longer obtain expansion in token 
price and inference spending---both shrink relative to the Bitconia 
baseline (see Section~\ref{sec:graph_elasticity} for more details). However, as we note (see Remark \ref{rem:inference_value_monotone}) even under inelastic demand, the blockchain still delivers an 
expansion of socially useful computation: \emph{more inference is supplied at a lower 
price}, even if the token and security amplification effects are absent.

\paragraph{ASICs versus GPUs and the Honest Majority Assumption.}
Bitcoin mining has given rise to ASICs---specialized hardware that
is $100$--$1000\times$ more energy efficient than GPUs for
Bitcoin's PoW, but has few productive uses outside mining. PoUW,
by contrast, is designed so that the same GPUs used for AI
inference are also used for consensus. Our formal security
analysis (Theorem~\ref{thm:attack_cost}) is a \emph{flow-cost}
analysis: it measures the one-period economic cost of acquiring
$S/2$ units of security, abstracting from stock effects such as
hardware resale values and token holdings. In practice these
stock effects matter. We discuss three dimensions along which
the two paradigms differ.

\begin{enumerate}

\item \emph{Hardware and token value after an attack.}
At first sight, a successful attack causes more value destruction
in ASIC-based PoW than in GPU-based PoUW: ASICs are
single-purpose and lose most of their value if the attacked chain
crashes, whereas GPUs retain outside value in inference,
training, and rendering. This suggests ASICs provide a stronger
hardware-destruction deterrent.

However, hardware value may not be the dominant stock loss.
The implied value of the global Bitcoin ASIC fleet is on the
order of tens of billions of dollars: Using a network hashrate
of roughly 1,000~EH/s~\cite{CoinWarz2026} and ASIC prices of
approximately \$20 per TH/s~\cite{SimpleMining2026} yields an
estimated fleet value of roughly \$20 billion. Miner-attributed
Bitcoin holdings are estimated at approximately \$50
billion~\cite{VanEck2026}---roughly two to three times larger.
A crash that destroys the token therefore destroys token
holdings that may exceed the hardware capital stock, and this
channel exists in both paradigms.

Even focusing only on hardware destruction, the comparison
between ASIC PoW and GPU PoUW is non-trivial. In Fortessia,
solo-inference GPUs outside the consensus pool retain their
full value after a crash, so ASIC PoW  dominates on
hardware destruction. In Duplexia, however, the blockchain
subsidy supports total GPU compute demand of
$\alpha\mathcal{D}(p)$ per period; after the token crashes,
the subsidy disappears and GPU demand falls to $\mathcal{D}(e)$.
If GPU capital values are proportional to the income they earn,
the fraction of PoUW-supported GPU capital value destroyed by
the attack is:
\[
\frac{\alpha\mathcal{D}(p)-\mathcal{D}(e)}{\alpha\mathcal{D}(p)}
= 1 - \frac{\rho^\varepsilon}{\alpha},
\]
where $\rho := p/e < 1$ and the second equality uses isoelastic
demand $\mathcal{D}(x) = Kx^{-\varepsilon}$. This fraction
depends on the duplex overhead $\alpha$ and the demand
elasticity $\varepsilon$: a larger subsidy (smaller $\rho$) or
more elastic demand (larger $\varepsilon$) implies greater GPU
value destruction. Since the global AI inference market is
potentially much larger than the Bitcoin ASIC market, the total
GPU capital loss can exceed the ASIC capital loss even when the
percentage loss is smaller.

\item \emph{Redirectability and market coverage.}
In Bitcoin, since ASICs have no use outside mining, essentially all existing
ASICs are deployed in equilibrium: a $50\%$ attack thus requires
controlling essentially $50\%$ of all ASICs in the world. In Fortessia, only
a fraction of GPUs participate in consensus, so an adversary
could redirect existing solo-inference GPUs without any new
acquisition. In Duplexia, all modeled inference demand is served
through duplex compute, so the consensus pool already covers the
subsidy-supported inference market---though GPUs used for
training, private inference, or other tasks remain outside.

\item \emph{Observability.}
Idle ASICs generate little observable economic activity, so a
manufacturer or state actor can produce and warehouse a large
stockpile invisibly, deploying it only at the moment of attack.
The honest majority assumption for Bitcoin is therefore
essentially unverifiable empirically. GPUs, by contrast, are
actively used for valuable economic activity: large-scale
diversion or hoarding carries a significant opportunity cost
and would disrupt ongoing commercial activity in ways that are
difficult to conceal, though not impossible.

\end{enumerate}

We leave a formal dynamic model of these stock effects for
future work.

\paragraph{Availability of Useful Work.}
Our model assumes that inference demand flows freely to duplex miners, just 
as it would to standard inference providers, at the market-clearing price 
$p$. In practice, however, attracting inference demand requires more than 
compute: a duplex miner must also acquire users with inference queries, 
which may require building infrastructure, establishing reputation, or 
developing distribution channels---not unlike any service provider seeking 
to attract customers. This friction is not captured by the overhead 
parameters $(\alpha, \gamma)$, and may in practice lead to aggregation of 
duplex miners around large inference providers, analogous to the emergence 
of mining pools in Bitcoin. We leave a formal treatment of such demand 
acquisition frictions for future work.

\subsection{Related Work}
\label{sec:related}
\paragraph{Economic security of proof-of-work.}
The economic security of proof-of-work blockchains has been 
studied by Budish \cite{Buddish2018}, Leshno, Pass, and Shi 
\cite{LeshnoPassShiEconomicSecurityRails}, and Gans and Halaburda 
\cite{GansHalaburda2024}. These works show that if attackers 
can redeem block rewards, the economic cost of attacking 
Bitcoin is zero, motivating our definition of attack cost. 

\paragraph{Token pricing and adoption.}
Athey et al.\ \cite{Athey2016BitcoinPricing} consider a transaction-demand 
perspective in which token value is derived from its use as a 
settlement medium---the same logic underlying our monetary 
foundation in Appendix~\ref{sec:monetary_foundation}. They observe experimentally that ``\emph{aggregate data is qualitatively consistent with the theory that prices are determined by current transaction volume and a fairly stable velocity}". Cong, Li, and 
Wang \cite{CongLiWang2021} develop a dynamic asset-pricing 
model in which token value derives from aggregating 
heterogeneous users' transactional demand on a platform. 

\paragraph{Cross-chain compute allocation.}
Bissias et al.\ \cite{bissias2019greedy,bissias2020pricing} 
study competitive equilibria in which miners allocate 
compute across multiple proof-of-work blockchains. This is the ``cross-chain analogue" of our free-entry 
competitive equilibrium within a single chain: where they 
ask how compute allocates across chains, we ask how it 
allocates across activities ($M$, $D$, $L$) within a 
chain. However, our setting is richer in a fundamental 
way: duplex compute $D$ simultaneously participates in 
both the security market and the inference market, 
creating a coupling between the two markets that has no 
analogue in settings where each unit of compute is 
allocated to exactly one activity. It is precisely this 
leakage through $D$ that generates the three-regime 
structure and the inference subsidy feedback loop in 
Duplexia.

\paragraph{Miner Equilibria and External-utility PoW.}
Fiat, Karlin, Koutsoupias, and Papadimitriou 
\cite{FiatKarlinKoutsoupiasP2019} study energy equilibria 
in proof-of-work mining, characterizing Nash equilibria 
under heterogeneous miner costs. 
Bar-On, Komargodski, and Weinstein 
\cite{BarOnKomargodskiWeinstein2025} extend the framework 
of \cite{FiatKarlinKoutsoupiasP2019} to incorporate 
external rewards (``coupons''), modeling PoUW as a setting 
where miners face subsidized costs due to useful work 
revenue. They characterize the resulting Nash equilibrium 
and study its implications for decentralization, showing 
that broadly distributed access to useful tasks increases 
rather than decreases network decentralization. 

Our model 
differs in two key respects: we adopt a competitive 
equilibrium framework with free entry rather than a 
finite-player Nash equilibrium, and we explicitly model 
the inference market and its price, allowing us to study 
how token rewards and inference prices co-determine 
equilibrium allocations. In particular, our 
Theorem~\ref{thm:attack_cost} shows that endogenous price 
adjustment---absent from fixed-price coupon 
models---is precisely what neutralizes the apparent 
attack-cost reduction from external rewards.

\paragraph{Concrete PoUW constructions.}
The idea of replacing useless PoW puzzles with useful 
computation has a long history. Primecoin \cite{King2013Primecoin} 
was among the first deployed cryptocurrencies to use a 
useful PoW puzzle, requiring miners to find Cunningham 
chains of prime numbers. Ball et al.\ \cite{Ball2017UsefulPoW} 
provided the first theoretical foundations for PoUW, 
constructing proofs of useful work from average-case 
hardness assumptions. Cao et al.\ \cite{cao2024optimization} 
propose an optimization-based PoUW scheme with accompanying 
security analysis. Oleksak et al.\ \cite{oleksak2025zksnark} 
propose a zk-SNARK marketplace built on PoUW, in which 
miners generate zero-knowledge proofs as useful work. 
Chatterjee et al.\ \cite{chatterjee2025scalowork} propose 
a PoUW scheme supporting distributed pool 
mining. 
These constructions face overheads from 
proof generation and domain-structure 
constraints on the useful computation. 
The most practically relevant instantiation 
of PoUW to date is the construction of Komargodski and 
Weinstein \cite{KomargodskiWeinsteinPoUW}, who show how 
to realize proofs of useful work for matrix multiplication, 
a fundamental primitive in modern machine-learning 
applications.
\section{Model and Equilibrium Definition}
\label{sec:model}

\subsection{Activities and Technology}
\label{activities.sec}
\paragraph{Activities and Compute Operations}
A continuum of competitive agents can freely enter and allocate compute across three activities:
\[
M\ge 0 \quad\text{(solo mining)},\qquad
D\ge 0 \quad\text{(duplex)},\qquad
L\ge 0 \quad\text{(solo inference)}.
\]
Each \emph{compute-op} has constant cost $e>0$ dollars (e.g., GPU rental + electricity cost). 1 compute op corresponds to the amount of computation by a single GPU in the blocktime (a.k.a. time period) for the underlying PoW blockchain. For concreteness, think of 1 compute op as the number of operations performed by a single GPU during 10 minutes (the Bitcoin blocktime), and corresponding to $e = \$1$ and LLM inference for roughly 1M tokens.

\paragraph{Outputs and Overheads}
A solo inference compute op produces one unit of inference and no security, while a solo mining compute op produces one unit of
security and no inference. A duplex compute op produce both inference and security but at reduced rates. We capture the reduced rate through two ``overhead'' parameters:
\begin{itemize}
\item \emph{Inference inefficiency} $\alpha \ge 1$: one duplex operation
produces $1/\alpha$ units of useful computation (inference).
\item \emph{Security inefficiency} $\gamma \ge 1$: one duplex operation
produces $1/\gamma$ units of PoW security.
\end{itemize}
\noindent Total ``effective security" (a.k.a. ``hashrate") is defined as:
\begin{equation}\label{eq:security}
S := M + \frac{D}{\gamma},
\end{equation}
and total inference supplied is
\begin{equation}\label{eq:infsupply}
Q := L + \frac{D}{\alpha}.
\end{equation}

\paragraph{The Design of Work and the Duplex Overhead Index}
It is useful to interpret $(\alpha,\gamma)$ geometrically.
A duplex compute op produces the output bundle
\[
\left(\frac{1}{\alpha},\,\frac{1}{\gamma}\right)
\]
in the two-dimensional space of inference and security. Any proof-of-work system can trivially emulate a PoUW technology by
simply mixing solo mining and solo inference: Suppose a fraction
$\lambda$ of compute is allocated to solo inference and the remaining
$1-\lambda$ to solo mining. One unit of compute then produces
\[
(\lambda,\,1-\lambda)
\]
units of useful computation and security respectively. This is
equivalent to a PoUW technology with
\[
\alpha = \frac{1}{\lambda}, 
\qquad
\gamma = \frac{1}{1-\lambda}.
\]
More generally, such simple mixing generates all $(\alpha,\gamma)$
pairs satisfying
\[
\frac{1}{\alpha}+\frac{1}{\gamma}=1.
\]
To summarize the joint efficiency of duplex computation relative to
simple mixing, we define the \emph{duplex overhead index}
\[
\Delta := \frac{1}{\gamma}+\frac{1}{\alpha}-1.
\]
The sign of $\Delta$ measures whether duplex computation produces
more or less combined output than can be obtained by simply splitting
compute between mining and useful computation, as illustrated in Figure \ref{fig:delta_frontier}.

\begin{figure}[t]
\centering
\begin{tikzpicture}[scale=5]

\fill[gray!15] (0,0) -- (0,1) -- (1,0) -- cycle;

\draw[->] (0,0) -- (1.06,0);
\draw[->] (0,0) -- (0,1.06);

\node at (0.52,-0.12) {useful computation $(1/\alpha)$};
\node[rotate=90] at (-0.12,0.52) {security $(1/\gamma)$};

\draw[thick] (0,1) -- (1,0);

\fill (0,1) circle (0.012);
\node[above left] at (0,1) {pure mining};

\fill (1,0) circle (0.012);
\node[below right] at (1,0) {pure inference};

\node[rotate=-45] at (0.52,0.56) {$\Delta = 0$};

\node at (0.80,0.78) {$\Delta > 0$};
\node at (0.80,0.71) {\small non-trivial PoUW};

\node at (0.40,0.28) {$\Delta < 0$};
\node at (0.40,0.21) {\small dominated by mixing};

\node[right] at (0.60,0.11) {\small trivial mixing frontier};

\end{tikzpicture}

\caption{
Duplex output space. A duplex operation producing
$(1/\alpha,1/\gamma)$ units of useful computation and security
corresponds to a point in the plane. The line $\Delta=0$,
equivalently $\frac{1}{\alpha}+\frac{1}{\gamma}=1$, is the
\emph{trivial mixing frontier}: every point on this frontier can be
obtained by splitting compute between pure inference and pure mining.
The shaded region satisfies $\Delta<0$ and is dominated by simple
mixing, while points above the frontier satisfy $\Delta>0$ and
represent non-trivial PoUW.
}
\label{fig:delta_frontier}
\end{figure}
\paragraph{A Note on ``Primitive" PoUW technologies.}
At first sight one might think that new PoUW technologies can always be
constructed by mixing existing ones. Indeed, if $D_1$ and $D_2$ are two
PoUW activities producing inference and security outputs
$(x_1,y_1)$ and $(x_2,y_2)$, then any convex combination
\[
D^{\delta}=\delta D_1+(1-\delta)D_2
\]
produces the bundle
\[
(\delta x_1+(1-\delta)x_2,\;\delta y_1+(1-\delta)y_2),
\]
which could in principle be represented by new parameters
$(\alpha',\gamma')$ in the model. For example, taking $D_1=D$ and
$D_2=L$ yields a ``diluted'' duplex technology.

While such a construction is technically valid, it does not represent a
distinct \emph{primitive} technology in the economic environment modeled here.
In practice, miners are free to allocate their computational resources
across tasks, so any convex combination of activities can already be
implemented by splitting effort across the underlying tasks.
Consequently, introducing a diluted activity such as $D^{\delta}$
without also including the underlying activities $D_1$ and $D_2$ would
misrepresent the feasible choice set: rational miners could deviate and
run the underlying tasks directly.

For the activities $M$, $L$, and $D$ to meaningfully describe the mining
environment, they should therefore be interpreted as \emph{primitive}
or \emph{atomic} PoUW tasks. That is, the output bundle of the duplex
activity $D$ should not itself arise from a convex combination of other
feasible primitive tasks. 

\subsection{Inference market}
Inference sells at price $p\ge 0$ dollars per inference unit, where an inference unit is the amount of inference generated by 1 compute op (e.g., think of this as 1M LLM tokens). Demand is specified by a standard demand function:
\begin{equation}\label{eq:demand}
Q = \mathcal D(p)
\end{equation}
and market clearing requires
\begin{equation}\label{eq:inf_clearing}
L+\frac{D}{\alpha}=\mathcal D(p).
\end{equation}
where $Q$ corresponds to the inference demand for a single time-period (i.e., the blocktime, $\approx 10$ min) in inference units ($\approx 10M$ tokens).

\begin{definition}[Demand function]\label{def:demand}
A demand function is a map $\mathcal{D}:[0,\infty)\to[0,\infty)$ that is 
continuous and weakly decreasing, with $\mathcal{D}(p)$ finite for all 
$p\ge 0$.
\end{definition}
We thus treat the inference demand function as \emph{exogenous}: while the actual inference quantify supplied in equilibrium will depend on e.g., token adoption and prices endogenously, the \emph{mapping} from prices to quantities is assumed not to shift with token adoption or token price.

\subsection{Per-op profits}

In each time-period, each block mints a fixed quantity $R>0$ of tokens (in expectation), which trade at 
dollar price $P$. Each inference unit trades at dollar price $p$ (with total quantity $Q$ per time-period). Let $B:=PR$ denote the expected dollar value of the minted tokens per period. Under proportional PoW rewards, one unit of solo mining earns expected PoW revenue $B/S$. A duplex op contributes $1/\gamma$ security units and so earns $B/(\gamma S)$. Solo inference yields revenue $p$ per op; duplex yields $p/\alpha$ inference revenue per op.

Expected per-op profits are
\begin{equation}\label{eq:profits}
\pi_M=\frac{B}{S}-e,\qquad
\pi_L=p-e,\qquad
\pi_D=\frac{B}{\gamma S}+\frac{p}{\alpha}-e.
\end{equation}


\subsection{Equilibrium definition}
We are now ready to state our equilibrium definition. We use a standard general equilibrium definition (see e.g. \cite{MasColellWhinstonGreen1995}).

\begin{definition}[Outcome]
\label{def:outcome}
An \emph{outcome} is a tuple 
$(M, D, L, S, Q, p, P) \in \mathbb{R}_+^7$.
\end{definition}

\begin{definition}[Competitive equilibrium with free entry]\label{def:CE}
Fix primitives $(e,\alpha,\gamma,R)$ with
$e>0$, $\alpha\ge1$, $\gamma \ge 1$, $R>0$.
A competitive equilibrium is an outcome 
$(M,D,L,S,Q,p,P)$ such that:

\begin{enumerate}

\item \textbf{Accounting and market clearing:}
\[
S \;=\; M+\frac{D}{\gamma},
\qquad
Q \;=\; L+\frac{D}{\alpha},
\qquad
Q \;=\; \mathcal D(p).
\]

\item \textbf{Free-entry complementarity:}
with per-op profits
\[
\pi_M=\frac{B}{S}-e,\qquad
\pi_D=\frac{B}{\gamma S}+\frac{p}{\alpha}-e,\qquad
\pi_L=p-e,
\]
where $B:= P R$, we require for each $X\in\{M,D,L\}$,
\[
\begin{aligned}
X &\ge 0 && (\text{feasibility}),\\
\pi_X &\le 0 && (\text{free entry}),\\
X>0 &\Rightarrow \pi_X=0 && (\text{zero profit}) .
\end{aligned}
\]
\end{enumerate}
\end{definition}
\begin{remark}[Mixed vs.\ pure strategies]
Allowing players to randomize over activities does not change the set of
equilibrium outcomes. Since profits are deterministic, expected profit is
linear in the mixing probabilities, so any optimal mixed strategy assigns
positive probability only to profit-maximizing activities. Aggregate
outcomes depend only on the total mass of players performing each activity,
and any mixed allocation can therefore be represented by an equivalent
pure-strategy allocation.
\end{remark}
\paragraph{Non-degenerate equilibria:} Throughout the paper, we restrict attention to \emph{non-degenerate equilibria} where $Q > 0$ and $S>0$.

\subsection{Robust equilibria/Strict complementarity}
The complementarity condition in Definition \ref{def:CE} allows knife-edge equilibria in which an
inactive activity earns exactly zero profit. Such equilibria are not
robust to arbitrarily small perturbations. We therefore introduce a
refinement, analogous to Selten’s notion of
\emph{trembling-hand perfection}~\cite{Selten1975}.

\begin{definition}[Robust competitive equilibrium]
\label{def:robust_CE}
A \emph{robust} competitive equilibrium is a competitive equilibrium
such that for each activity $X\in\{M,D,L\}$,
\[
X=0 \;\Rightarrow\; \pi_X<0,
\qquad
X>0 \;\Rightarrow\; \pi_X=0 .
\]
\end{definition}
A simple but insightful observation is that unless solo inference is active, the inference price must be \emph{subsidized} in robust equilibria.
\begin{proposition}
[Inference subsidy without solo inference]
\label{lem:robust_subsidy}
In any robust equilibrium, if $L=0$ then $p<e$.
If $\mathcal D$ is strictly decreasing, this implies
\[
Q=\mathcal D(p)>\mathcal D(e).
\]
\end{proposition}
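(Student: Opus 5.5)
The proof is a direct consequence of the strict complementarity condition in Definition~\ref{def:robust_CE}, applied to solo inference. Suppose we are in a robust equilibrium with $L=0$. Since $L$ is inactive, robustness requires $\pi_L<0$ rather than merely $\pi_L\le 0$. By the per-op profit formula, $\pi_L=p-e$, so we immediately obtain $p<e$. That is the whole first claim. Neither the mining or duplex conditions nor non-degeneracy is needed here; they only matter in the sense that such an equilibrium must exist for the statement to be non-vacuous.

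For the second claim, we use that $\mathcal D$ is strictly decreasing. Since $p<e$, this gives $\mathcal D(p)>\mathcal D(e)$. Market clearing in Definition~\ref{def:CE} gives $Q=\mathcal D(p)$, and hence $Q>\mathcal D(e)$. Note that $p\ge 0$ lies in the domain of $\mathcal D$ by Definition~\ref{def:outcome}, so the comparison is well defined.

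There is essentially no obstacle. The only point to stress is that the conclusion genuinely requires robustness. In a non-robust competitive equilibrium one could have $L=0$ with $\pi_L=0$, i.e.\ $p=e$; this is exactly the knife-edge that strict complementarity excludes. We would remark on this briefly after the argument, and note that with $L=0$ and non-degeneracy ($Q>0$) we must have $D>0$. The subsidy is therefore delivered through duplex compute, consistent with the Duplexia regime described in Theorem~\ref{thm:informal_full_transition}.
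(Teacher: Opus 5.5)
Your proposal is correct and is essentially the paper's own proof: robustness forces $\pi_L = p - e < 0$ when $L=0$, and strict monotonicity of $\mathcal D$ together with market clearing $Q=\mathcal D(p)$ gives $Q>\mathcal D(e)$. Your side remarks (that robustness is what excludes the knife-edge $p=e$, and that non-degeneracy then forces $D>0$) are accurate but not needed for the statement.
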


\begin{proof}
If $L=0$, robustness requires $\pi_L<0$.  
Since $\pi_L=p-e$, we obtain $p<e$.  
Strict monotonicity of $\mathcal D$ yields the strict inequality for $Q$.
\end{proof}


\subsection{Price pinning by activity and the Impossibility of Triple Activity}
We observe some simple implications of competitive entry. 
\begin{proposition}[Price pinning by active activities]
\label{prop:price_pinning}
In any competitive equilibrium:
\begin{enumerate}
\item \textbf{Inference pins the inference price.}
If $L>0$, then
\[
p=e .
\]

\item \textbf{Mining pins the token price.}
If $M>0$, then
\[
P=\frac{eS}{R}.
\]

\item \textbf{Duplex pins a price trade-off.}
If $D>0$, then token and inference prices satisfy
\[
P=\frac{\gamma S}{R}\Bigl(e-\frac{p}{\alpha}\Bigr).
\]
\end{enumerate}
\end{proposition}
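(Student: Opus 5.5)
Each part is the zero-profit condition of Definition~\ref{def:CE} for the corresponding active activity, solved for the relevant price using the definitions $B=PR$ and the non-degeneracy assumption $S>0$ (so that divisions by $S$ are legitimate). No use of the demand curve or of the other activities is needed, so I will treat the three parts independently and in order.

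For part~1, if $L>0$ then the zero-profit requirement gives $\pi_L=p-e=0$, hence $p=e$ directly. For part~2, if $M>0$ then $\pi_M=B/S-e=0$, i.e.\ $PR/S=e$; since $S>0$ in a non-degenerate equilibrium and $R>0$ by assumption on the primitives, I multiply through by $S$ and divide by $R$ to get $P=eS/R$. For part~3, if $D>0$ then $\pi_D=\frac{PR}{\gamma S}+\frac{p}{\alpha}-e=0$; rearranging gives $\frac{PR}{\gamma S}=e-\frac{p}{\alpha}$, and multiplying by $\gamma S/R$ (again valid as $S,R>0$, $\gamma\ge1$) yields $P=\frac{\gamma S}{R}\bigl(e-\frac{p}{\alpha}\bigr)$.

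There is no real obstacle; the only point requiring care is to note explicitly that the profit expressions $\pi_M,\pi_D$ are well defined only when $S>0$, which is guaranteed by the standing restriction to non-degenerate equilibria (and in fact if $M>0$ or $D>0$ then $S=M+D/\gamma>0$ automatically, so the restriction is not even needed for parts~2 and~3). I would add that remark to make the argument self-contained.
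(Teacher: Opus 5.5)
Your proof is correct and follows the paper's argument exactly: each part is the zero-profit condition $\pi_X=0$ for the active activity $X$, solved for the relevant price. Your added remark that $M>0$ or $D>0$ already forces $S=M+D/\gamma>0$, which justifies dividing by $S$, is a small clarification the paper leaves implicit.
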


\begin{proof}
Each statement follows from the complementarity conditions
$X \ge 0, \pi_X \le 0, X\pi_X = 0$ applied to the per-op profit functions:
\[
\pi_M=\frac{PR}{S}-e,\qquad
\pi_L=p-e,\qquad
\pi_D=\frac{PR}{\gamma S}+\frac{p}{\alpha}-e .
\]
If $L>0$, then the condition $\pi_L=0$ implies $p=e$, pinning the inference price to the marginal cost of compute. If $M>0$, then $\pi_M=0$ implies $PR/S=e$. Solving for $P$ yields $P=eS/R$, pinning the token price to the security-adjusted cost of mining.
If $D>0$, then $\pi_D=0$ implies:
\[ \frac{PR}{\gamma S} = e - \frac{p}{\alpha} \]
Multiplying both sides by the factor $\frac{\gamma S}{R}$ isolates the token price:
\[ P = \frac{\gamma S}{R}\Bigl(e-\frac{p}{\alpha}\Bigr). \]
\end{proof}

\noindent We finally observe that ``triple-activity" can only happen with $\Delta = 0$:
\begin{corollary}[Knife-edge triple activity]
\label{cor:no_interior}
There is no competitive equilibrium with $M>0$, $D>0$, and $L>0$
simultaneously unless $\Delta = 0$.
\end{corollary}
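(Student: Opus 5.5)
Suppose $M>0$, $D>0$ and $L>0$ hold at once in some competitive equilibrium. The plan is to apply Proposition~\ref{prop:price_pinning} to all three active activities and show that the three zero-profit conditions together force $\Delta=0$.

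First, since $L>0$, the zero-profit condition $\pi_L=0$ gives $p=e$. Second, since $M>0$, $\pi_M=0$ gives $PR/S=e$; this uses non-degeneracy, since $S\ge M>0$ holds automatically here. Third, since $D>0$, $\pi_D=0$ gives
\[
\frac{PR}{\gamma S}+\frac{p}{\alpha}=e .
\]
Substituting the first two identities into the third yields $\frac{e}{\gamma}+\frac{e}{\alpha}=e$. Dividing by $e>0$ gives $\frac1\gamma+\frac1\alpha=1$, which is exactly $\Delta=0$. This contradicts the hypothesis $\Delta\neq 0$ and proves the corollary.

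The argument can be read more conceptually: duplex profit is a linear combination of the other two profits,
\[
\pi_D=\frac{1}{\gamma}\pi_M+\frac{1}{\alpha}\pi_L+e\Delta .
\]
This identity is a direct check from~\eqref{eq:profits}. If $\pi_M=\pi_L=0$, then $\pi_D=e\Delta$, which is nonzero unless $\Delta=0$.

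There is no substantive obstacle. The only point needing care is that $S>0$, so that $\pi_M$ is well defined. This follows from $M>0$, or from the standing non-degeneracy restriction.
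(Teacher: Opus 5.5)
Your proof is correct and follows essentially the same route as the paper: pin $p=e$ from $L>0$ and $PR/S=e$ from $M>0$, then substitute both into $\pi_D=0$ to obtain $\frac{1}{\gamma}+\frac{1}{\alpha}=1$, i.e.\ $\Delta=0$. Your identity $\pi_D=\frac{1}{\gamma}\pi_M+\frac{1}{\alpha}\pi_L+e\Delta$ checks out and is a compact restatement of the same computation.
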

\begin{proof}
Suppose an equilibrium exists where all three activities are active ($M, D, L > 0$). By Proposition~\ref{prop:price_pinning}:
\begin{enumerate}
    \item $L > 0$ implies $p = e$.
    \item $M > 0$ implies $\frac{PR}{S} = e$.
    \item $D > 0$ implies $\frac{PR}{\gamma S} = e - \frac{p}{\alpha}$.
\end{enumerate}
Substituting the first two conditions ($p=e$ and $PR/S = e$) into the third condition yields:
\[
\frac{e}{\gamma} = e - \frac{e}{\alpha}.
\]
Dividing both sides by $e$ (since $e > 0$) gives:
\[
\frac{1}{\gamma} = 1 - \frac{1}{\alpha} \quad \implies \quad \frac{1}{\gamma} + \frac{1}{\alpha} - 1 = 0.
\]
This is precisely the condition $\Delta = 0$. 
\end{proof}

\subsection{The Security Efficiency of Duplex}
\label{secefficiency.sec}
In general, we cannot say anything about the level of security $S$ in equilibria in absolute terms. We observe, however, that in any equilibrium where Duplex is inactive (i.e., in an $M+L$ equilibrium), $S=M$, so all security comes from ``wasted computation.'' In contrast, when $D$ is active, we obtain additional security without any additional wasted computation.
Formally, let
\[
W:=M+D\Bigl(1-\frac1\alpha\Bigr)
\]
denote the amount of ``wasted'' (i.e., useless) computation.

\begin{proposition}[Duplex improves security efficiency]
\label{prop:SI_duplex}
In any non-degenerate competitive equilibrium,
$S-W = D\Delta$.
\end{proposition}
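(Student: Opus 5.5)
The identity follows directly from the definitions, so the plan is a one-line computation with no equilibrium conditions needed. I will start from the accounting identity $S = M + D/\gamma$ in Definition~\ref{def:CE} and the definition $W = M + D(1 - 1/\alpha)$.

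Subtracting, the $M$ terms cancel exactly, leaving
\[
S - W = \frac{D}{\gamma} - D + \frac{D}{\alpha} = D\Bigl(\frac{1}{\gamma} + \frac{1}{\alpha} - 1\Bigr) = D\Delta,
\]
by the definition of the duplex overhead index $\Delta$.

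None of the complementarity or market-clearing conditions are used; non-degeneracy plays no role beyond making $S$ and $Q$ meaningful. The statement therefore holds for any outcome satisfying the accounting identity, and in particular in every competitive equilibrium.

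There is no real obstacle. The only point worth noting is the interpretation: when $\Delta > 0$ and $D > 0$, we get $S > W$. By Corollary~\ref{cor:no_interior} and Proposition~\ref{prop:price_pinning}, this is exactly the situation in the Fortessia and Duplexia regimes. When $D = 0$, as in Bitconia, we get $S = W = M$.
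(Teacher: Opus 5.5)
Your proof is correct and matches the paper's argument exactly: subtract $W = M + D(1-1/\alpha)$ from $S = M + D/\gamma$, cancel the $M$ terms, and collect $D(1/\gamma + 1/\alpha - 1) = D\Delta$. As you note, no equilibrium conditions are used, so the identity holds for any outcome satisfying the accounting definitions.
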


\begin{proof}
By definition of security,
\[
S = M + \frac{D}{\gamma},
\]
and by definition of wasted computation,
\[
W = M + D\Bigl(1-\frac1\alpha\Bigr).
\]
Subtracting gives
\[
S-W
=
D\Bigl(\frac1\gamma+\frac1\alpha-1\Bigr)
=
D\Delta.
\]
\end{proof}
In particular, as we show later in Theorem~\ref{thm:MLI_characterization}, 
away from the knife-edge $\Delta=0$, duplex is active only when $\Delta>0$. 
Combining this with the identity above yields
\[
D>0 \;\Longrightarrow\; S>W.
\]
\subsection{On the Economic Cost of $50\%$ Attacks.}
\label{sec:attack_cost}

A common criticism of blockchains based on proof-of-useful-work is that allowing
useful computation during mining allegedly reduces the economic cost of
attacking the network: an adversary could ``get paid'' for useful work
while simultaneously attempting to compromise security.  In the extreme limit $\alpha=1$, where duplex work produces useful output
without \emph{any} overhead, this heuristic suggests that attacks might even become
costless.

 We now formalize this and 
show that the criticism is unfounded \emph{once considering prices obtained in equilibrium}. We start by formalize the economic cost of a majority attack. An attacker wishing to acquire $S/2$ units of security can do so via two methods.

\medskip\noindent\textbf{Cost via solo mining.} Each solo 
mining operation produces one unit of security at cost 
$e$. To produce $S/2$ units of security, the attacker 
therefore requires $S/2$ mining operations, for a total 
expenditure of
\[
\frac{eS}{2}.
\]

\medskip\noindent\textbf{Cost via duplex compute.} Each 
duplex operation produces $1/\gamma$ units of security 
at cost $e$, but also generates $1/\alpha$ units of 
inference, earning inference revenue $p/\alpha$. The net 
cost per duplex operation is therefore $e - p/\alpha$. 
To produce $S/2$ units of security via duplex, the 
attacker requires $\gamma S/2$ duplex operations (since 
each contributes $1/\gamma$ units of security), for a 
total net expenditure of
\[
\frac{\gamma S}{2}\left(e - \frac{p}{\alpha}\right).
\]

\medskip\noindent The attacker optimally chooses the 
cheaper of the two methods. This motivates the following 
definition.

\begin{definition}[Cost of a $50\%$ attack]
\label{def:cost50}
Let $\Pi = (M,D,L,S,Q,p,P)$ be a competitive equilibrium.
The \emph{economic cost of a $50\%$ attack} is defined as the minimal
dollar expenditure required to generate $\tfrac12 S$ units of security
using the available security technologies (solo mining or duplex):
\[
\mathrm{Cost}_{50\%}(\Pi)
\;:=\;
\min\!\left\{
\frac{eS}{2},
\;
\frac{\gamma S}{2}\!\left(e-\frac{p}{\alpha}\right)
\right\}.
\]
The first term corresponds to producing $\tfrac12 S$ units of security
via solo mining, while the second term corresponds to producing the
same security via duplex compute, accounting for inference revenue
$p/\alpha$ earned per duplex unit.
\end{definition}
\noindent This definition does not account for any mining rewards obtained 
by the attacker, modeling either (i) a liveness attack, or 
(ii) a consistency attack under the assumption that a successful 
double-spend causes the token to crash, rendering block rewards 
unredeemable. 

We adopt this definition because, as observed in 
\cite{Buddish2018,LeshnoPassShiEconomicSecurityRails,GansHalaburda2024}, 
if attackers also receive block rewards that can be redeemed 
at their full current USD price, then the net cost of 
attacking is zero \emph{even for Bitcoin}, since block rewards 
exactly offset the computational expenditure. Additionally, 
\cite{LeshnoPassShiEconomicSecurityRails} show that a variant of the Nakamoto 
protocol can guarantee consistency \emph{unconditionally}, with the 
honest-majority assumption required only for liveness. Our 
cost measure therefore captures the economic resources 
required to violate security even in such strengthened protocols.

\begin{theorem}[Attack cost equals half the block-reward budget]
\label{thm:attack_cost}
In any non-degenerate competitive equilibrium,
the economic cost of a $50\%$ attack equals
\[
\mathrm{Cost}_{50\%}=\frac{PR}{2}.
\]
\end{theorem}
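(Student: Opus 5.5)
The plan is to use only the free-entry inequalities together with the zero-profit condition for whichever security-producing activity is active. We never need to know which regime the economy is in. Write $B=PR$ and recall that non-degeneracy gives $S>0$. Rewriting the two terms in the minimum of Definition~\ref{def:cost50},
\[
\frac{eS}{2}, \qquad \frac{\gamma S}{2}\Bigl(e-\frac{p}{\alpha}\Bigr),
\]
each is $S/2$ times a per-unit-security cost: $e$ for solo mining and $\gamma(e-p/\alpha)$ for duplex. The free-entry conditions $\pi_M\le 0$ and $\pi_D\le 0$ say precisely that $B/S\le e$ and $B/S\le \gamma(e-p/\alpha)$. Multiplying by $S/2>0$, both terms of the minimum are at least $B/2$, so $\mathrm{Cost}_{50\%}\ge PR/2$.

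For the reverse inequality we use that $S=M+D/\gamma>0$, so at least one of $M,D$ is strictly positive. If $M>0$, Proposition~\ref{prop:price_pinning}(2) gives $P=eS/R$, so the first term equals $eS/2=PR/2$. If $D>0$, Proposition~\ref{prop:price_pinning}(3) gives $PR=\gamma S(e-p/\alpha)$, so the second term equals $PR/2$. Either way some term of the minimum attains $PR/2$, and combined with the lower bound this gives $\mathrm{Cost}_{50\%}=PR/2$.

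The only step needing care is making sure the lower bound applies to both terms even when the corresponding activity is inactive. This is exactly where the free-entry inequality $\pi_X\le 0$ for all $X$ (not just active ones) is used. One might worry that the duplex net cost $e-p/\alpha$ could be small or negative, which is the ``paid to attack'' intuition. But $\pi_D\le 0$ forces $\gamma(e-p/\alpha)\ge B/S>0$ whenever $B>0$, and in the degenerate case $B=0$ both sides of the claimed identity are consistent, since the lower bound then reads $\ge 0$. No case split over the regimes of Theorem~\ref{thm:informal_full_transition} is required.
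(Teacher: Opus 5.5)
Your proof is correct and essentially the paper's: the free-entry inequalities $\pi_M\le 0$ and $\pi_D\le 0$ bound both terms below by $PR/2$, and the zero-profit condition of whichever of $M,D$ is active (at least one is, since $S>0$) shows that this bound is attained. The paper instead splits into cases $M>0$ and $D>0$ and uses the other activity's free-entry condition in each case, which is the same argument reorganized.
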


\begin{proof}
By Definition~\ref{def:cost50},
\[
\mathrm{Cost}_{50\%}
=
\min\!\left\{
e\cdot\frac{S}{2},
\;
\gamma\Bigl(e-\frac{p}{\alpha}\Bigr)\cdot\frac{S}{2}
\right\}.
\]
It therefore suffices to show that, in equilibrium,
\[
\min\!\left\{
e,
\;
\gamma\Bigl(e-\frac{p}{\alpha}\Bigr)
\right\}
=
\frac{PR}{S}.
\]

\medskip
If $M>0$, then mining is active and Proposition~\ref{prop:price_pinning}
implies  $\frac{PR}{S}=e$. Since $\pi_D\le 0$, we also have
$\frac{PR}{\gamma S}+\frac{p}{\alpha}-e\le 0$, or equivalently
$\gamma(e-\frac{p}{\alpha})\ge \frac{PR}{S}=e$. Hence
\[
\min\!\left\{e,\gamma\Bigl(e-\frac{p}{\alpha}\Bigr)\right\}=e=\frac{PR}{S}.
\]

\medskip
If $D>0$, then duplex is active and Proposition~\ref{prop:price_pinning}
implies $\frac{PR}{\gamma S}+\frac{p}{\alpha}=e$, hence
$\frac{PR}{S}=\gamma(e-\frac{p}{\alpha})$. Since $\pi_M\le 0$ we have
$\frac{PR}{S}-e\le 0$, i.e.\ $e\ge \frac{PR}{S}=\gamma(e-\frac{p}{\alpha})$.
Thus
\[
\min\!\left\{e,\gamma\Bigl(e-\frac{p}{\alpha}\Bigr)\right\}
=
\gamma\Bigl(e-\frac{p}{\alpha}\Bigr)
=
\frac{PR}{S}.
\]

\medskip
Since the equilibrium is non-degenerate, either $M>0$, or $D>0$ (or else $S = 0$), so the above cases are exhaustive.
\end{proof}

\begin{remark}[Robustness to partial reward redemption]
Theorem~\ref{thm:attack_cost} considers the cost of 
either a liveness attack or a consistency attack that 
causes the token value to crash, rendering block rewards 
unredeemable. We emphasize that the result is also 
robust to relaxing this assumption: if the attacker can 
redeem a fraction $\nu \in [0,1]$ of the block reward 
during the attack, then the net cost of acquiring $S/2$ 
units of security becomes
\[
\mathrm{Cost}_{50\%} = \min\left(\frac{eS}{2} - 
\nu\frac{PR}{2},\ \frac{\gamma S}{2}\left(e - 
\frac{p}{\alpha}\right) - \nu\frac{PR}{2}\right).
\]
Since $\nu PR/2$ is subtracted from both terms, we have
\[
\mathrm{Cost}_{50\%} = \min\left(\frac{eS}{2},\ 
\frac{\gamma S}{2}\left(e - 
\frac{p}{\alpha}\right)\right) - \nu\frac{PR}{2}
= \frac{PR}{2} - \nu\frac{PR}{2} = \frac{(1-\nu)PR}{2},
\]
where the second equality uses 
Theorem~\ref{thm:attack_cost} (applied with $\nu=0$), 
which establishes that the minimum of the two terms 
equals $PR/2$. This is exactly equal to the 
corresponding attack cost against a classical 
Bitcoin-style PoW system under the same assumption, 
since in that case $\mathrm{Cost}_{50\%}^{\mathrm{Bitcoin}} 
= eS/2 - \nu PR/2 = (1-\nu)PR/2$ by the same argument. 
Thus, for any fixed $\nu \in [0,1]$, PoUW neither 
increases nor decreases the economic cost of a majority 
attack relative to Bitcoin.
\end{remark}

\section{Equilibria Parameterized by the Token-Inference Ratio}
\label{sec:linkage}
\label{sec:equilibrium}
Our model features two interacting markets: an inference market,
with dollar price $p$ and quantity $Q$, and a token market,
with token price $P$ and per-block issuance $R$.
In general, competitive equilibria do not determine
the relationship between these two markets; different equilibria
may correspond to different relative magnitudes of inference spending and token valuation.

To organize the equilibrium set, we introduce a scalar measure
of ``cross-market linkage"---think of this as measure of adoption of the token.

\paragraph{The Token-inference Ratio (T/I ratio)}
For any non-degenerate competitive equilibrium with $pQ>0$ and $PR>0$,
define the \emph{token-inference ratio (T/I ratio)} by
\begin{equation}
\label{eq:MLI}
\theta \;=\; \frac{PR}{pQ}.
\end{equation}
This ratio measures the per-block dollar value
captured by token issuance relative to contemporaneous inference-market
expenditure. It summarizes the degree of linkage between the
inference market and the token market.

\begin{definition}[$\theta$-T/I ratio  Competitive Equilibrium]\label{def:CE:TI}
Fix primitives $(e,\alpha,\gamma,R,\theta)$ with
$e>0$, $\alpha\ge1$, $\gamma>1$, $R>0$, and $\theta>0$.
A $\theta$-T/I ratio competitive equilibrium is a competitive equilibrium $(M,D,L,S,Q,p,P)$ satisfying  Equation \ref{eq:MLI}. 
\end{definition}

\paragraph{T/I ratio uniquely pins equilibria}
A central technical result of this paper is that the T/I ratio
provides a complete one-dimensional parameterization of competitive
equilibria. Away from knife edges, for each admissible value of
$\theta$, there exists a \emph{unique} non-degenerate equilibrium, and
every such equilibrium corresponds to exactly one value of $\theta$.
Varying $\theta$ therefore traces the entire set of equilibria
and reveals how the economy transitions across distinct
regimes.

\paragraph{T/I Ratio as a Measure of ``Effective Market Adoption".}
As we show in Appendix~\ref{sec:monetary_foundation}, the T/I ratio $\theta$
admits a simple steady-state monetary foundation. In particular, we show that
if (i) a fixed multiple $\theta_{\rm raw}$ of inference spending is settled in
the native token, and (ii) circulating tokens follow a simple monetary
dynamics model, then any stationary
equilibrium must satisfy a fixed T/I ratio $\theta$.
Moreover, $\theta$ depends only on $\theta_{\rm raw}$ and the parameters of the
monetary dynamics model, and is independent of the underlying compute technology. If we think of $\theta_{\rm raw}$ as a measure of market adoption of the token, then $\theta$ becomes a measure of \emph{``effective" market adoption} (normalized by exogenous monetary parameters). 

\subsection{Equilibrium Pivots}
As we shall see, the equilibrium landscape is first and foremost determined by the  \emph{duplex overhead index} $\Delta$ (recall, $\Delta = \frac{1}{\gamma}+\frac{1}{\alpha}-1$).
Next, when $\Delta>0$, two threshold values of the T/I ratio determine the equilibrium regime:
\begin{equation}\label{eq:theta_thresholds_eqregime}
\theta_{\mathrm{low}} \;:=\; \alpha-1,
\qquad
\theta_{\mathrm{high}} \;:=\; \frac{1}{\gamma-1}.
\end{equation}

\begin{lemma}[Ordering of T/I ratio thresholds]
\label{lem:theta_order}
\[
\Delta>0
\quad\Longleftrightarrow\quad
\theta_{\mathrm{low}}<\theta_{\mathrm{high}}.
\]
\end{lemma}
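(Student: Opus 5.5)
This is a direct algebraic equivalence, so the plan is to transform the inequality $\theta_{\mathrm{low}}<\theta_{\mathrm{high}}$ through a chain of reversible steps until it becomes $\Delta>0$. The relevant range is the one fixed in Definition~\ref{def:CE:TI}, namely $\alpha\ge 1$ and $\gamma>1$. In this range $\gamma-1>0$, so $\theta_{\mathrm{high}}=\frac{1}{\gamma-1}$ is finite and positive. Likewise $\alpha>0$ and $\gamma>0$, so we may freely multiply or divide by these quantities without reversing inequalities.

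The key steps, in order, are as follows. Start from $\alpha-1<\frac{1}{\gamma-1}$. Multiply both sides by $\gamma-1>0$ to obtain the equivalent inequality $(\alpha-1)(\gamma-1)<1$. Expanding gives $\alpha\gamma-\alpha-\gamma+1<1$, which simplifies to $\alpha\gamma<\alpha+\gamma$. Dividing by $\alpha\gamma>0$ then yields
\[
1<\frac{1}{\gamma}+\frac{1}{\alpha},
\]
which is precisely $\Delta>0$. Every step is an equivalence, so both directions follow simultaneously.

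The only point that needs care is the sign of $\gamma-1$ in the first multiplication, and this is the one place where the hypothesis $\gamma>1$ (rather than $\gamma\ge 1$) is used. I would state explicitly that the lemma is understood in the setting of Definition~\ref{def:CE:TI}. I would also remark on the boundary case $\gamma=1$: there $\theta_{\mathrm{high}}=+\infty$ and $\Delta=1/\alpha>0$, so the equivalence still holds under the natural convention that $\theta_{\mathrm{high}}=\infty$. There is no real obstacle beyond this bookkeeping.
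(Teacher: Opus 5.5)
Your proof is correct and takes essentially the same approach as the paper: both are a chain of reversible algebraic equivalences. The paper rewrites $\Delta>0$ as $\frac{1}{\alpha}>\frac{\gamma-1}{\gamma}$, inverts to get $\alpha<\frac{\gamma}{\gamma-1}$, and subtracts $1$, whereas you multiply through by $\gamma-1>0$ and divide by $\alpha\gamma$; both arguments rely on $\gamma>1$ in the same way.
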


\begin{proof}
\[
\Delta>0
\;\Longleftrightarrow\;
\frac{1}{\gamma}+\frac{1}{\alpha}-1>0
\;\Longleftrightarrow\;
\frac{1}{\alpha}>1-\frac{1}{\gamma}
=\frac{\gamma-1}{\gamma}
\;\Longleftrightarrow\;
\alpha<\frac{\gamma}{\gamma-1}.
\]
Subtracting $1$ from both sides yields
\[
\theta_{\mathrm{low}}=\alpha-1
\;<\;
\frac{\gamma}{\gamma-1}-1
=\frac{1}{\gamma-1}
=\theta_{\mathrm{high}} .
\]
\end{proof}

\subsection{Equilibrium Characterization via the T/I ratio}
\label{sec:equilibria_support}
We now characterize equilibria indexed by the T/I ratio. 

\begin{theorem}[Equilibrium Characterization via the T/I ratio]
\label{thm:MLI_characterization}
Fix primitives $(e,\alpha,\gamma,R)$ with
$e>0$, $\alpha\ge1$, $\gamma>1$, and $\mathcal D(e)>0$. 
For every $\theta>0$, away from the cases that $\Delta = 0$, or $\Delta > 0$ but $\theta\in \{\theta_{\mathrm{low}},\theta_{\mathrm{high}}\}$, there exists a unique non-degenerate $\theta$-T/I ratio competitive
equilibrium. The inference price $p$ and allocation are determined as follows:
\begin{enumerate}
\item \textbf{Low-efficiency technology ($\Delta<0$): Bitconia ($M{+}L$).}

The unique equilibrium has $M,L>0$ and $D=0$, with
\[
p=e.
\]

\item \textbf{High-efficiency technology ($\Delta>0$): Duplex always active ($D>0$).}

The regime is determined by $\theta$:
\begin{enumerate}
\item[\textup{(a)}] \textbf{Fortessia ($D{+}L$)}  
if $\theta < \theta_{\mathrm{low}}$: $p=e$.

\item[\textup{(b)}] \textbf{Pure Duplexia ($D$)}  
if $\theta_{\mathrm{low}} < \theta < \theta_{\mathrm{high}}$:
\[
p=\frac{\alpha}{1+\theta}e.
\]

\item[\textup{(c)}] \textbf{Mixed Duplexia ($M{+}D$)}  
if $\theta > \theta_{\mathrm{high}}$:
\[
p=\alpha e\Bigl(1-\frac{1}{\gamma}\Bigr).
\]
\end{enumerate}
\end{enumerate}
In all cases,
\[
Q=\mathcal D(p),
\qquad
P=\frac{\theta\,p\,\mathcal D(p)}{R},
\]
The full allocation $(M,D,L)$ and inference price $p$ in each regime is summarized in
Table~\ref{tab:MLI_allocations}. 
Furthermore, in all cases the equilibrium is robust.

Moreover:
\begin{itemize}
\item If $\Delta > 0$ and $\theta \in \{\theta_{\mathrm{low}},
\theta_{\mathrm{high}}\}$: there exists a unique competitive
equilibrium (the $(D)$ pattern), but it is non-robust.
\item If $\Delta = 0$: there exists a continuum of competitive
equilibria parametrized by $D \in [0, \min(\gamma\theta Q,
\alpha Q)]$.
\end{itemize}
\end{theorem}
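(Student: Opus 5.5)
The plan is to classify equilibria by their \emph{activity pattern}, i.e.\ by which of $M,D,L$ are strictly positive, and then solve each pattern in closed form. Non-degeneracy ($S>0$, $Q>0$) rules out the empty pattern and the patterns $\{M\}$ (which forces $Q=0$) and $\{L\}$ (which forces $S=0$). Corollary~\ref{cor:no_interior} rules out $\{M,D,L\}$ unless $\Delta=0$. So for $\Delta\neq 0$ only four candidates remain: $M{+}L$, $D{+}L$, $D$ and $M{+}D$.

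For each candidate the argument has the same three steps:
\begin{itemize}
\item[(i)] Use the zero-profit conditions of the active activities (Proposition~\ref{prop:price_pinning}) together with the T/I condition $PR=\theta pQ$ and the accounting identities to solve uniquely for $p$, $Q=\mathcal D(p)$, $P=\theta p\mathcal D(p)/R$ and the allocation.
\item[(ii)] Read off the positivity constraints on the active variables.
\item[(iii)] Impose $\pi_X\le 0$ on the inactive activities.
\end{itemize}
Steps (ii) and (iii) produce an admissibility region in $(\Delta,\theta)$ for each pattern. The main claim is that these four open regions partition the parameter space away from the stated knife edges. That gives both existence (the pattern whose region contains $(\Delta,\theta)$ yields an equilibrium) and uniqueness (no other pattern is admissible, and within a pattern the solution is unique). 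Robustness follows because the inequalities in step (iii) hold strictly inside each region.

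The computations are as follows.
\begin{itemize}
\item[(1)] $M{+}L$. Here $p=e$ and $PR=eS=eM$, so $M=\theta\mathcal D(e)$ and $L=\mathcal D(e)$. The inactive condition is $\pi_D=e\Delta\le 0$, so this pattern requires $\Delta\le0$ (strictly, $\Delta<0$, for robustness).
\item[(2)] $D{+}L$. Here $p=e$ and $Q=\mathcal D(e)$. Duplex zero profit with $S=D/\gamma$ gives $PR=e(1-\tfrac1\alpha)D$, so $D=\tfrac{\theta\alpha}{\alpha-1}\mathcal D(e)$ and $L=(1-\tfrac{\theta}{\alpha-1})\mathcal D(e)$. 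Positivity of $L$ requires $\theta<\alpha-1$, which also forces $\alpha>1$. The inactive condition $\pi_M=\gamma e(1-\tfrac1\alpha)-e\le 0$ is equivalent to $\Delta\ge 0$.
\item[(3)] $D$ only. Here $Q=D/\alpha$, $S=D/\gamma$ and $PR=\theta pD/\alpha$. Duplex zero profit becomes $\theta p/\alpha+p/\alpha=e$, so $p=\alpha e/(1+\theta)$ and $D=\alpha\mathcal D(p)$. The inactive conditions are $\pi_L<0\iff\theta>\alpha-1$ and $\pi_M=\gamma\theta e/(1+\theta)-e<0\iff\theta<\tfrac{1}{\gamma-1}$.
\item[(4)] $M{+}D$. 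Here $PR=eS$, and duplex zero profit gives $p=\alpha e(1-\tfrac1\gamma)$. Then $D=\alpha\mathcal D(p)$, $S=\theta p\mathcal D(p)/e$, and $M=S-D/\gamma$, which matches Table~\ref{tab:MLI_allocations}. The inactive condition $\pi_L<0$ is equivalent to $\Delta>0$, and positivity $M>0$ is equivalent to $\theta\alpha(\gamma-1)/\gamma>\alpha/\gamma$, i.e.\ $\theta>\theta_{\rm high}$.
\end{itemize}

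By Lemma~\ref{lem:theta_order}, for $\Delta>0$ the regions of patterns (2), (3) and (4) are $(0,\theta_{\rm low})$, $(\theta_{\rm low},\theta_{\rm high})$ and $(\theta_{\rm high},\infty)$, which are disjoint and exhaust $\theta>0$ apart from the two thresholds. Pattern (1) is admissible only for $\Delta\le 0$, and patterns (2)--(4) only for $\Delta\ge0$, so for $\Delta<0$ the $M{+}L$ equilibrium is the unique one.

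For the knife edges, take $\Delta>0$.
\begin{itemize}
\item At $\theta=\theta_{\rm low}$, the $D{+}L$ formulas force $L=0$, so only the $D$ pattern survives, now with $\pi_L=0$.
\item At $\theta=\theta_{\rm high}$, the $M{+}D$ formulas force $M=0$, so again only the $D$ pattern survives, now with $\pi_M=0$.
\end{itemize}
In both cases the equilibrium is unique but not robust.

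For $\Delta=0$, the relations $p=e$, $PR/S=e$ and $\pi_D=0$ hold simultaneously. Any $D\in[0,\alpha Q]$ then works, with $L=Q-D/\alpha$ and $M=S-D/\gamma$, where $S=\theta Q$. The constraint $M\ge0$ gives $D\le\gamma\theta Q$, which yields the stated continuum of equilibria.

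The algebra above is routine. I expect the main obstacle to be the bookkeeping needed to make the partition argument airtight:
\begin{itemize}
\item verifying that every strict/weak inequality boundary lines up exactly with $\Delta=0$ or with $\theta\in\{\theta_{\rm low},\theta_{\rm high}\}$;
\item checking that no pattern is admissible outside its region, including the easily overlooked case $\alpha=1$, where $D{+}L$ is empty;
\item confirming non-degeneracy ($p>0$, $\mathcal D(p)>0$) in the $D$ and $M{+}D$ patterns, where $\mathcal D(p)\ge\mathcal D(e)>0$ because $p<e$ there and $\mathcal D$ is weakly decreasing.
\end{itemize}
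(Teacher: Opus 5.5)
Your proposal is correct and follows essentially the same route as the paper's proof. Both arguments enumerate the four admissible activity patterns (using non-degeneracy and the exclusion of triple activity for $\Delta\neq0$), solve each pattern through the zero-profit, feasibility and free-entry steps, conclude via the partition of $(\Delta,\theta)$-space from Lemma~\ref{lem:theta_order}, and treat the knife edges the same way. Your explicit check that $\mathcal D(p)\ge\mathcal D(e)>0$ in the $D$ and $M{+}D$ patterns is slightly more careful than the paper, which takes $Q>0$ for granted there.
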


\begin{table}[h!]
\centering
\caption{Equilibrium Prices and Allocations}
\label{tab:MLI_allocations}
\renewcommand{\arraystretch}{1.6}
\begin{tabular}{|l|c|c|c|c|c|}
\hline
\textbf{Regime} 
& \textbf{Condition} 
& \textbf{Price $p$} 
& \textbf{$M$} 
& \textbf{$D$} 
& \textbf{$L$} \\ \hline

\textbf{M+L} 
& $\Delta < 0$ 
& $e$ 
& $\theta\,\mathcal D(e)$ 
& $0$ 
& $\mathcal D(e)$ 
\\ \hline

\textbf{D+L} 
& $\begin{matrix} \Delta > 0 \\ \theta < \theta_{\mathrm{low}} \end{matrix}$ 
& $e$ 
& $0$ 
& $\dfrac{\theta\alpha}{\alpha-1}\,\mathcal D(e)$ 
& $\Bigl(1-\dfrac{\theta}{\alpha-1}\Bigr)\mathcal D(e)$
\\ \hline

\textbf{D} 
& $\begin{matrix} \Delta > 0 \\ 
\theta \in (\theta_{\mathrm{low}},\theta_{\mathrm{high}}) \end{matrix}$ 
& $\dfrac{\alpha e}{1+\theta}$ 
& $0$ 
& $\alpha\,\mathcal D(p)$ 
& $0$ 
\\ \hline

\textbf{M+D} 
& $\begin{matrix} \Delta > 0 \\ 
\theta > \theta_{\mathrm{high}} \end{matrix}$ 
& $\alpha e\!\left(1-\dfrac{1}{\gamma}\right)$ 
& $\Bigl(\dfrac{\theta p}{e}-\dfrac{\alpha}{\gamma}\Bigr)\mathcal D(p)$
& $\alpha\,\mathcal D(p)$ 
& $0$ 
\\ \hline

\end{tabular}
\end{table}
\begin{remark}[On dilution and the role of $\alpha$]
The transition into Duplexia occurs when $\theta>\theta_{\rm low}=\alpha-1$,
which notably does not depend on $\gamma$. At first glance, this may seem
surprising: one might think that any PoUW technology can be ``improved'' by
diluting it with solo inference, i.e., by forming a convex combination
$D^\delta=(1-\delta)D+\delta L$, which lowers $\alpha$ (while increasing
$\gamma$). Since the Duplexia threshold depends only on $\alpha$, this
suggests that one could induce an earlier transition into Duplexia by
considering such diluted technologies.

This intuition is misleading---it is only valid in a scenario when the original activity $D$ is not available. In reality, the original $D$ is available, and this activity strictly dominates 
$D^\delta$ whenever $D^\delta$ strictly dominates $L$: Since payoffs are linear in activity bundles, for any
prices we have
\[
\pi(D^\delta)=(1-\delta)\pi(D)+\delta\pi(L).
\]
In particular, if $D^\delta$ strictly dominates $L$ (i.e., $\pi(D^\delta)>\pi(L)$),
then necessarily $\pi(D)>\pi(L)$, which implies $\pi(D)>\pi(D^\delta)$.
Thus, whenever $D^\delta$ would be profitable relative to $L$ (i.e., we are in Duplexia), the original
duplex activity $D$ is strictly more profitable than $D^\delta$.

Therefore, in any environment where $D$ remains available, the diluted
activity $D^\delta$ cannot arise in a robust equilibrium. More generally, as
discussed in Section~\ref{activities.sec}, such non-primitive ``diluted'' technologies do
not affect the equilibrium analysis, as they can be replicated by agents
through mixtures of the original activities $D$ and $L$. The parameter
$\alpha$ should thus be interpreted as describing the overhead of an \emph{atomic}
technology, and the threshold
$\theta_{\rm low}=\alpha-1$ reflects a genuine economic condition rather
than an artifact of the modeling.
\end{remark}

\paragraph{Proof overview.}
Recall that our model combines two classical market structures through
the duplex action $D$: the security market is a Tullock contest
\cite{tullock1980} (a fixed prize $PR$ shared among miners
proportionally to their mining effort; free entry drives $PR/S = e$)
and the inference market is a Bertrand market \cite{Bertrand1883}
(free entry with homogeneous cost $e$; free entry drives $p = e$).
In the absence of the duplex operation, these two markets are
completely independent and can be analyzed in isolation. With the
duplex operation present, they become coupled: a duplex operator
simultaneously participates in both, earning revenue on each side.
In such a scenario, we can no longer analyze the two markets in
isolation (indeed, this is what led to the flawed strawman
``zero-cost attack argument''; see Section~\ref{sec:attack_cost}).

We note, however, that when a standalone activity (i.e., $M$ or $L$)
is active, it pins the revenue on its respective side of the market,
and the coupling can be broken:
\begin{itemize}
    \item $L > 0$ pins $p = e$, so the net cost of one duplex operation
    becomes $e - e/\alpha$ (since producing $1/\alpha$ units of inference
    earns revenue $e/\alpha$). Thus, $D$ can be thought of as a pure
    mining operation with cost $\gamma(e - e/\alpha) = \gamma e(1-1/\alpha)$
    per unit of security, and we can solve the standard Tullock
    competition with this reduced cost.

    \item $M > 0$ pins $PR/S = e$, so the net cost of one duplex
    operation becomes $e - e/\gamma$ (since producing $1/\gamma$ units
    of security earns revenue $e/\gamma$). Thus, $D$ can be thought of
    as a pure inference operation with cost $\alpha(e - e/\gamma) =
    \alpha e(1-1/\gamma)$ per unit of inference, and we can solve the
    standard Bertrand competition with this reduced cost.
\end{itemize}
When neither standalone activity is active (the (D) case), there is a
single action by definition, and the zero-profit condition for $D$
combined with the T/I condition gives a single equation in the single
unknown $p$.

\begin{proof}[Proof of Theorem~\ref{thm:MLI_characterization}]

In any non-degenerate competitive equilibrium, $S, Q > 0$, so at least
one of $M, D$ is active and at least one of $L, D$ is active. Moreover,
$M, D, L$ cannot all be simultaneously active: if $M, D, L > 0$ then
$\pi_M = 0$ gives $PR/S = e$, $\pi_L = 0$ gives $p = e$, and $\pi_D = 0$
gives $e/\gamma + e/\alpha - e = e\Delta = 0$, contradicting $\Delta \neq
0$. Hence exactly one of the following four activity patterns holds:
$$\text{(M+L)}, \quad \text{(D+L)}, \quad \text{(D)}, \quad \text{(M+D)}.$$

For each pattern, we proceed in three steps. In the \emph{zero-profit}
step, the conditions $\pi_Y = 0$ for each active $Y > 0$ uniquely
determine all equilibrium variables. In the \emph{free-entry} step,
for each inactive $Y = 0$, we derive a necessary and sufficient
condition on $(\Delta, \theta)$ for $\pi_Y \leq 0$, with strict
inequality away from knife-edges. In the \emph{feasibility} step,
for each activity $Y$ claimed to be strictly positive, we derive a
necessary and sufficient condition on $(\Delta, \theta)$ for $Y > 0$.
The pattern constitutes a competitive equilibrium if and only if all
free-entry and feasibility conditions hold; it is robust if and only
if all free-entry conditions hold with strict inequality.

\paragraph{Case (M+L).}

\emph{Zero-profit.} $\pi_M = 0 \iff PR/S = e$ and $\pi_L = 0 \iff
p = e$, uniquely determining: $p = e$, $Q = \mathcal{D}(e)$,
$P = \theta eQ/R$, $S = M = \theta Q$, $D = 0$.

\emph{Free-entry.} Substituting $PR/S = e$ and $p = e$:
$$\pi_D = \frac{PR}{\gamma S} + \frac{p}{\alpha} - e
= \frac{e}{\gamma} + \frac{e}{\alpha} - e = e\Delta \leq 0
\iff \Delta \leq 0,$$
with strict inequality iff $\Delta < 0$.

\emph{Feasibility.} $M = \theta Q > 0$ and $L = \mathcal{D}(e) > 0$
since $\theta, Q > 0$. No additional conditions.

To sum up, the (M+L) pattern constitutes a competitive equilibrium if and only if
$\Delta \leq 0$. Moreover, it is robust if and only if $\Delta < 0$.

\paragraph{Case (D+L).}

\emph{Zero-profit.} $\pi_L = 0 \iff p = e$. Substituting $p = e$
into $\pi_D = 0$:
$$\frac{PR}{\gamma S} + \frac{e}{\alpha} = e
\iff \frac{PR}{\gamma S} = e\!\left(1-\frac{1}{\alpha}\right)
\iff \frac{PR}{S} = \gamma e\!\left(1-\frac{1}{\alpha}\right).$$

With $M = 0$ and $S = D/\gamma$, substituting $PR = \theta eQ$:
$$\frac{\theta eQ}{D} = e\!\left(1-\frac{1}{\alpha}\right)$$

If \(\alpha=1\), the right-hand side is zero while the left-hand side
is strictly positive, since \(\theta>0\), \(e>0\), \(Q>0\), and
\(D>0\) in the \((D+L)\) case. Hence no \((D+L)\) equilibrium exists
when \(\alpha=1\).
Thus, in any \((D+L)\) equilibrium we must have \(\alpha>1\), and we
may solve:
$$ D = \frac{\theta\alpha Q}{\alpha-1}$$
Inference accounting $Q = L + D/\alpha$ gives:
$$L = Q\!\left(1 - \frac{\theta}{\alpha-1}\right).$$
This uniquely determines all variables: $p = e$, $Q = \mathcal{D}(e)$,
$D = \theta\alpha Q/(\alpha-1)$, $S = D/\gamma = \theta\alpha Q/(\gamma(\alpha-1))$,
$P = \theta eQ/R$, $M = 0$, and $L = Q(1-\theta/(\alpha-1))$.

\emph{Free-entry.} From the zero-profit step, $PR/S =
\gamma e(1-1/\alpha)$, so:
$$\pi_M = \frac{PR}{S} - e
= \gamma e\!\left(1-\frac{1}{\alpha}\right) - e
= e\!\left(\gamma - \frac{\gamma}{\alpha} - 1\right)
= -e\gamma\Delta \leq 0 \iff \Delta \geq 0,$$
with strict inequality iff $\Delta > 0$.

\emph{Feasibility.} $D = \theta\alpha Q/(\alpha-1) > 0$ since
$\theta, Q > 0$. $L = Q(1-\theta/(\alpha-1)) > 0$ iff:
$$\theta < \alpha - 1 =: \theta_{\mathrm{low}}.$$

To sum up, the (D+L) pattern constitutes a competitive equilibrium if and only if
$\Delta \geq 0$ and $\theta < \theta_{\mathrm{low}}$. Moreover, it is
robust if and only if $\Delta > 0$.

\paragraph{Case (M+D).}

\emph{Zero-profit.} $\pi_M = 0 \iff PR/S = e$. Substituting $PR/S =
e$ into $\pi_D = 0$:
$$\frac{e}{\gamma} + \frac{p}{\alpha} = e
\iff p = \alpha e\!\left(1-\frac{1}{\gamma}\right).$$
This uniquely determines $p$, and hence $Q = \mathcal{D}(p)$,
$D = \alpha Q$ (from $L = 0$ and $Q = D/\alpha$),
$P = \theta pQ/R$, $S = PR/e = \theta pQ/e$, and
$M = S - D/\gamma = \theta pQ/e - \alpha Q/\gamma$.

\emph{Free-entry.} From the zero-profit step, $p =
\alpha e(1-1/\gamma)$, so:
$$\pi_L = p - e
= \alpha e\!\left(1-\frac{1}{\gamma}\right) - e
= e\!\left(\alpha - \frac{\alpha}{\gamma} - 1\right)
= -e\alpha\Delta \leq 0 \iff \Delta \geq 0,$$
with strict inequality iff $\Delta > 0$.

\emph{Feasibility.} $D = \alpha Q > 0$ since $Q = \mathcal{D}(p) > 0$.
$M = \theta pQ/e - \alpha Q/\gamma > 0$ iff:
$$\frac{\theta p}{e} > \frac{\alpha}{\gamma}
\iff \theta\alpha\!\left(1-\frac{1}{\gamma}\right) > \frac{\alpha}{\gamma}
\iff \theta > \frac{1}{\gamma-1} =: \theta_{\mathrm{high}}.$$

To sum up, the (M+D) pattern constitutes a competitive equilibrium if and only if
$\Delta \geq 0$ and $\theta > \theta_{\mathrm{high}}$. Moreover, it is
robust if and only if $\Delta > 0$.

\paragraph{Case (D).}

\emph{Zero-profit.} With $M = L = 0$, inference accounting gives
$Q = D/\alpha$ and security accounting gives $S = D/\gamma$. The
condition $\pi_D = 0$ gives:
$$\frac{PR}{\gamma S} + \frac{p}{\alpha} - e = 0
\iff \frac{PR}{\gamma \cdot \alpha Q/\gamma} + \frac{p}{\alpha} = e
\iff \frac{PR}{\alpha Q} + \frac{p}{\alpha} = e
\iff \frac{PR + pQ}{\alpha Q} = e
\iff PR + pQ = e\alpha Q.$$
Substituting the T/I condition $PR = \theta pQ$:
$$p(1+\theta)Q = e\alpha Q \iff p = \frac{\alpha e}{1+\theta},$$
uniquely determining $p$, and hence $Q = \mathcal{D}(p)$,
$D = \alpha Q$, $S = \alpha Q/\gamma$, $P = \theta pQ/R$.

\emph{Free-entry.} For $\pi_M \leq 0$:
$$\pi_M = \frac{PR}{S} - e = \frac{\theta pQ}{\alpha Q/\gamma} - e
= \frac{\theta\gamma e}{1+\theta} - e
= e\!\left(\frac{\theta\gamma}{1+\theta} - 1\right) \leq 0
\iff \theta \leq \frac{1}{\gamma-1} = \theta_{\mathrm{high}},$$
with strict inequality iff $\theta < \theta_{\mathrm{high}}$.
For $\pi_L \leq 0$:
$$\pi_L = p - e = \frac{\alpha e}{1+\theta} - e
= e\!\left(\frac{\alpha}{1+\theta} - 1\right) \leq 0
\iff \theta \geq \alpha - 1 = \theta_{\mathrm{low}},$$
with strict inequality iff $\theta > \theta_{\mathrm{low}}$.

\emph{Feasibility.} $D = \alpha Q > 0$ since $Q > 0$. No further
conditions.

To sum up, the (D) pattern constitutes a competitive equilibrium if and only if
$\Delta \geq 0$ and $\theta_{\mathrm{low}} \leq \theta \leq
\theta_{\mathrm{high}}$. Moreover, it is robust if and only if
$\Delta > 0$ and $\theta \in (\theta_{\mathrm{low}},
\theta_{\mathrm{high}})$.

\paragraph{The conditions partition the parameter space.}
Summarizing, the four activity patterns constitute competitive equilibria
under the following conditions:
\[
\begin{array}{ll}
(M+L) & \Delta \le 0,\\
(D+L) & \Delta \ge 0 \text{ and } \theta < \theta_{\mathrm{low}},\\
(D)   & \Delta \ge 0 \text{ and }
        \theta_{\mathrm{low}}\le \theta \le \theta_{\mathrm{high}},\\
(M+D) & \Delta \ge 0 \text{ and } \theta > \theta_{\mathrm{high}}.
\end{array}
\]
By Lemma~\ref{lem:theta_order}, $\Delta > 0$ implies
$\theta_{\mathrm{low}} < \theta_{\mathrm{high}}$, so the four conditions
are mutually exclusive and exhaustive by inspection. Hence:
\begin{itemize}
\item \emph{If $\Delta < 0$, or $\Delta > 0$ and $\theta \notin
\{\theta_{\mathrm{low}}, \theta_{\mathrm{high}}\}$}: exactly one
pattern constitutes a competitive equilibrium, and it is robust.

\item \emph{If $\Delta > 0$ and $\theta = \theta_{\mathrm{low}}$}:
the unique competitive equilibrium is the $(D)$ pattern, but it is
non-robust: substituting $\theta = \theta_{\mathrm{low}} = \alpha - 1$
into the $(D)$ price formula gives $p = \alpha e/(1+\theta_{\mathrm{low}})
= e$, so $\pi_L = p - e = 0$ and robustness fails.

\item \emph{If $\Delta > 0$ and $\theta = \theta_{\mathrm{high}}$}:
the unique competitive equilibrium is the $(D)$ pattern, but it is
non-robust: substituting $\theta = \theta_{\mathrm{high}} =
1/(\gamma-1)$ into the $(D)$ mining profit formula gives
$\pi_M = \theta_{\mathrm{high}}\gamma e/(1+\theta_{\mathrm{high}}) - e
= 0$ and robustness fails.

\item \emph{If $\Delta = 0$}: for any $D \in [0,
\min(\gamma\theta Q, \alpha Q)]$ with $Q = \mathcal{D}(e)$, setting
$M = \theta Q - D/\gamma$ and $L = Q - D/\alpha$ gives a competitive
equilibrium with $p = e$, $PR/S = e$, and $\pi_M = \pi_L = \pi_D =
0$. Thus there is a continuum of competitive equilibria parametrized
by $D$.
\end{itemize}
\end{proof}

\section{Economic Security and Social Value}
\label{sec:security_social_value}
The equilibrium characterization of 
Section~\ref{sec:equilibrium} reveals how PoUW 
allocates compute across mining, inference, and 
duplex work. We now turn to the economic 
implications. We first introduce a standard 
elasticity condition and derive a key lemma 
controlling how inference spending responds to 
price changes (Section~\ref{sec:elastic}). We 
then use this to establish monotonicity of 
equilibrium prices, token value, and cost of 
attack in the technology parameter $\alpha$ and 
the T/I ratio $\theta$ 
(Section~\ref{sec:monotonicity}). Finally, we 
formally characterize the social value of PoUW 
and show that it strictly dominates Bitcoin 
(Section~\ref{sec:social_value}).

\subsection{Elasticity}
\label{sec:elastic}
A key tool in our analysis is a standard 
elasticity condition. Recall that the price elasticity 
of demand at a price point $p$ captures the 
relative change in quantity demanded per 
relative change in price, for an infinitesimal 
price change: an $x\%$ decrease in price leads to 
an $\varepsilon(p) \cdot x\%$ increase in quantity, 
for sufficiently small $x$.

\begin{definition}[Elastic demand]
\label{def:elastic_demand}
Let the demand function $\mathcal{D}:\mathbb{R}_{>0}
\to\mathbb{R}_{\ge 0}$ be differentiable and weakly
decreasing. Its \emph{(point) price elasticity} at $p>0$ 
is defined as:
\[
\varepsilon(p)\;:=\;-\frac{p\,\mathcal{D}'(p)}
{\mathcal{D}(p)}.
\]
We say that demand $\mathcal{D}$ is \emph{elastic} 
on an interval $I$ if $\varepsilon(p)\geq 1$ for 
all $p\in I$, and strictly elastic if $\varepsilon(p)>1$.
\end{definition}

To see why this definition captures the informal  description above, note that the relative change in quantity per relative change in price equals
\[
\frac{\Delta Q / Q}{\Delta p / p} 
= \frac{\Delta Q}{\Delta p} \cdot \frac{p}{Q}
\;\xrightarrow{\Delta p \to 0}\;
\mathcal{D}'(p) \cdot \frac{p}{\mathcal{D}(p)},
\]
which is negative since $\mathcal{D}'(p) \leq 0$. The minus sign in the definition flips the sign 
to give a non-negative elasticity 
$\varepsilon(p) \geq 0$. To build further 
intuition, consider two examples. If 
$\varepsilon(p) = 1$ (unit elastic), an $x\%$ 
decrease in price leads to an $x\%$ increase in 
quantity, for sufficiently small $x$, so total 
spending $p\,\mathcal{D}(p)$ remains unchanged. 

If $\varepsilon(p) > 1$ (strictly elastic), an $x\%$ 
decrease in price leads to a more than $x\%$ 
increase in quantity, for sufficiently small $x$, 
so total spending $p\,\mathcal{D}(p)$ rises as 
price falls. The following lemma formalizes this.

\begin{lemma}[Elasticity and spending]
\label{lem:elastic_spending}
If $\mathcal{D}$ is differentiable and strictly 
decreasing, then the total spending function 
$G(p) := p\,\mathcal{D}(p)$ satisfies
\[
G'(p) = \mathcal{D}(p)\!\left(1 - \varepsilon(p)
\right).
\]
In particular, if demand is elastic on $I$ then 
$G$ is weakly decreasing on $I$; and if demand 
is strictly elastic on $I$ then $G$ is strictly 
decreasing on $I$.
\end{lemma}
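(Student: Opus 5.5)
The plan is a direct computation with the product rule, followed by a sign argument. Since $\mathcal{D}$ is differentiable, $G(p)=p\,\mathcal{D}(p)$ is differentiable on $\mathbb{R}_{>0}$, and
\[
G'(p)=\mathcal{D}(p)+p\,\mathcal{D}'(p).
\]
On points where $\mathcal{D}(p)>0$, I will factor out $\mathcal{D}(p)$ and use Definition~\ref{def:elastic_demand}, which gives $p\,\mathcal{D}'(p)=-\varepsilon(p)\,\mathcal{D}(p)$. This yields
\[
G'(p)=\mathcal{D}(p)\bigl(1-\varepsilon(p)\bigr).
\]

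The main technical wrinkle is that $\varepsilon(p)$ is only defined where $\mathcal{D}(p)>0$. I will handle this as follows.

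\begin{itemize}
\item Strict decrease together with $\mathcal{D}\ge 0$ means that if $\mathcal{D}(p_0)=0$ at some $p_0$, then $\mathcal{D}(p)<0$ for $p>p_0$, which is impossible. So $\mathcal{D}$ can vanish at most at the right endpoint of its domain. Hence $\mathcal{D}>0$ on the open interval in question, and the identity holds there.
\item The elasticity hypothesis on $I$ is in any case stated only at points where $\varepsilon$ is defined, so I will simply note that $\mathcal{D}>0$ wherever $\varepsilon(p)$ appears.
\end{itemize}

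For the monotonicity claims, elasticity on $I$ means $\varepsilon(p)\ge 1$. Together with $\mathcal{D}(p)\ge 0$ this gives $G'(p)\le 0$ on $I$, so $G$ is weakly decreasing by the mean value theorem.

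Under strict elasticity, $\varepsilon(p)>1$. Combined with $\mathcal{D}(p)>0$ (from the first bullet above), this gives $G'(p)<0$ on $I$. A differentiable function with strictly negative derivative on an interval is strictly decreasing, again by the mean value theorem. I expect no real obstacle beyond making the positivity of $\mathcal{D}$ explicit.
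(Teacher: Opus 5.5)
Your proposal is correct and follows the paper's proof: the product rule gives $G'(p)=\mathcal{D}(p)+p\,\mathcal{D}'(p)=\mathcal{D}(p)\bigl(1-\varepsilon(p)\bigr)$, and the sign of $1-\varepsilon(p)$ then yields both the weak and the strict monotonicity claims. The paper uses implicitly that $\mathcal{D}>0$ on $\mathbb{R}_{>0}$, which is what makes $\varepsilon$ well defined and gives $G'<0$ in the strictly elastic case; your explicit argument for this from strict decrease and nonnegativity is a welcome tightening, not a different route.
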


\begin{proof}
Differentiating $G(p) = p\,\mathcal{D}(p)$ and 
substituting the definition of $\varepsilon(p)$:
\[
G'(p) = \mathcal{D}(p) + p\,\mathcal{D}'(p) 
= \mathcal{D}(p)\!\left(1 - \varepsilon(p)\right).
\]
If demand is elastic, $\varepsilon(p) \geq 1$, so 
$G'(p) \leq 0$ and $G$ is weakly decreasing. If 
demand is strictly elastic, $\varepsilon(p) > 1$, so $G'(p) < 0$ and $G$ is strictly decreasing.
\end{proof}

For AI inference---and more broadly for 
useful-computation applications where lower prices 
unlock new use cases, demand is expected to be strictly elastic, 
consistent with the Jevons effect: improvements in 
efficiency tend to increase rather than decrease 
aggregate resource use, as lower prices make 
previously uneconomical applications 
viable~\cite{jevons1865coal,saunders1992khazzoom,sorrell2009rebound}. 
See Section~\ref{sec:graphs} for further discussion of elasticity.

\subsection{Monotonicity and Cost of Attack}
\label{sec:monotonicity}

We now establish that equilibrium prices, token 
value, and cost of attack are all monotone in 
the duplex technology parameters $(\alpha, \gamma)$ 
and the T/I ratio $\theta$. Intuitively, better 
duplex technology (lower $\alpha$ or $\gamma$) 
and higher token adoption (higher $\theta$) both 
drive inference prices lower, expanding the 
inference market. Monotonicity of equilibrium 
prices follows directly from the equilibrium 
characterization of 
Theorem~\ref{thm:MLI_characterization}. 
Monotonicity of token value and cost of attack 
additionally requires the elasticity condition 
of Section~\ref{sec:elastic}: it ensures that 
lower inference prices translate into higher 
total inference spending, which in turn drives 
higher token value and greater economic security.

\paragraph{Monotonicity in improved technology}
We now show that token and inferences prices are monotone in either $\alpha$ or $\theta$.
Towards this, we first show that inference prices do not jump once we pass regime changes.

\begin{lemma}[Agreement of regime formulas on the boundaries]
\label{lem:boundary_agreement}
Consider the inference-price formulas from
Theorem~\ref{thm:MLI_characterization}:
\[
p_{M+L}=e,\qquad
p_{D+L}=e,\qquad
p_D=\frac{\alpha}{1+\theta}e,\qquad
p_{M+D}=\alpha e\left(1-\frac1\gamma\right).
\]
These formulas agree on the relevant regime boundaries:
\[
\theta=\theta_{\rm low}=\alpha-1
\quad\Rightarrow\quad
p_{D+L}=p_D=e,
\]
\[
\theta=\theta_{\rm high}=\frac1{\gamma-1}
\quad\Rightarrow\quad
p_D=p_{M+D}.
\]
Moreover, on the technology boundary \(\Delta=0\),
\[
p_{M+L}=p_{D+L}=p_{M+D}=e.
\]
At the common point where \(\Delta=0\) and
\(\theta=\theta_{\rm low}=\theta_{\rm high}\), all four formulas agree.
\end{lemma}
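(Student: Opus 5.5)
The proof is direct substitution into the four price formulas of Theorem~\ref{thm:MLI_characterization}, one boundary at a time, relying only on elementary algebra and the identity behind Lemma~\ref{lem:theta_order}.

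\emph{Lower boundary.} At $\theta=\theta_{\rm low}=\alpha-1$ we have $1+\theta=\alpha$, so $p_D=\alpha e/\alpha=e=p_{D+L}$.

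\emph{Upper boundary.} At $\theta=\theta_{\rm high}=1/(\gamma-1)$ I compute
\[
1+\theta_{\rm high}=\frac{\gamma}{\gamma-1}.
\]
Hence
\[
p_D=\alpha e\,\frac{\gamma-1}{\gamma}=\alpha e\Bigl(1-\frac{1}{\gamma}\Bigr)=p_{M+D}.
\]
This uses $\gamma>1$, as assumed in Definition~\ref{def:CE:TI}, so that $\theta_{\rm high}$ is finite.

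\emph{Technology boundary.} On $\Delta=0$, the formulas $p_{M+L}$ and $p_{D+L}$ are identically $e$. It remains to check $p_{M+D}=e$. The condition $\Delta=0$ is equivalent to $1-1/\gamma=1/\alpha$, which gives $p_{M+D}=\alpha e\cdot(1/\alpha)=e$.

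\emph{Common point.} Suppose $\Delta=0$ and $\theta=\theta_{\rm low}=\theta_{\rm high}$. From the proof of Lemma~\ref{lem:theta_order}, $\Delta=0$ is equivalent to $\alpha=\gamma/(\gamma-1)$, which is exactly $\theta_{\rm low}=\theta_{\rm high}$, so this point is consistent. There $p_D=e$ by the lower-boundary computation, and the remaining three formulas equal $e$ by the technology-boundary computation. So all four formulas agree.

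I expect no real obstacle: every step is a one-line identity. The only care needed is to note where $\gamma>1$ (and $\alpha\ge 1$) are used, so that the thresholds are well defined and no division by zero occurs.
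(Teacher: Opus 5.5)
Your proposal is correct and follows essentially the same route as the paper: direct substitution at each boundary, using $1+\theta_{\rm low}=\alpha$, $1+\theta_{\rm high}=\gamma/(\gamma-1)$, and the fact that $\Delta=0$ is equivalent to $\alpha(1-1/\gamma)=1$, with the common point handled by reusing the lower-boundary computation. Your extra remark that, for $\gamma>1$, $\Delta=0$ is equivalent to $\theta_{\rm low}=\theta_{\rm high}$ is a correct consistency check that the paper leaves implicit.
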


\begin{proof}
If \(\theta=\theta_{\rm low}=\alpha-1\), then
\[
p_D=\frac{\alpha}{1+\theta}e
=\frac{\alpha}{1+\alpha-1}e=e=p_{D+L}.
\]
If \(\theta=\theta_{\rm high}=1/(\gamma-1)\), then
\[
p_D
=
\frac{\alpha}{1+1/(\gamma-1)}e
=
\alpha e\frac{\gamma-1}{\gamma}
=
\alpha e\left(1-\frac1\gamma\right)
=
p_{M+D}.
\]
Finally, if \(\Delta=0\), then
\[
\frac1\alpha+\frac1\gamma=1,
\]
so
\[
\alpha\left(1-\frac1\gamma\right)=1,
\]
and hence
\[
p_{M+D}=\alpha e\left(1-\frac1\gamma\right)=e
=p_{M+L}=p_{D+L}.
\]
At the common point \(\theta=\theta_{\rm low}=\theta_{\rm high}\), the
first boundary calculation also gives \(p_D=e\).
\end{proof}

We now show that, keeping the T/I ratio 
$\theta$ fixed, any improvement in duplex 
efficiency does not make things worse: 
decreases in either the inference overhead 
$\alpha$ or the security overhead $\gamma$ 
(weakly) lower the inference price and 
(weakly) raise the token price, and therefore 
(weakly) raise the economic cost of a $50\%$ 
attack (Theorem~\ref{thm:attack_cost}).

\begin{corollary}
\label{cor:P_monotone_alpha}
Fix $(e, R, \theta)$ and let $(\alpha_1, \gamma_1)$ 
and $(\alpha_2, \gamma_2)$ satisfy $\alpha_1 \leq 
\alpha_2$ and $\gamma_1 \leq \gamma_2$, with both 
parameter tuples away from the knife-edge cases of 
Theorem~\ref{thm:MLI_characterization}. For 
$i \in \{1,2\}$, let 
$\Pi(\alpha_i, \gamma_i) = (M_i, D_i, L_i, S_i, 
Q_i, p_i, P_i)$ denote the unique non-degenerate 
$\theta$-T/I ratio competitive equilibrium. Then 
$p_1 \leq p_2$. If furthermore demand is elastic, 
then
\[
P_1 \geq P_2 \qquad \text{and} \qquad 
\mathrm{cost}(\Pi(\alpha_1, \gamma_1)) \geq 
\mathrm{cost}(\Pi(\alpha_2, \gamma_2)).
\]
If $L_1 = 0$ (i.e., $\Pi_1$ is in Duplexia) and 
$\alpha_1 < \alpha_2$, then the above inequality 
for $p$ is strict, and if, additionally, demand 
is strictly elastic then the inequalities for $P$ 
and $\mathrm{cost}(\Pi(\alpha, \gamma))$ are 
strict as well.\footnote{As is clear from the 
proof, a decrease in $\gamma$ alone (with 
$\alpha_1 = \alpha_2$) does not affect the 
inference price in the $M{+}L$, $D{+}L$, or 
$D$ only regimes. In the $M{+}D$ regime, 
$p = \alpha e(1-1/\gamma)$ is strictly increasing 
in $\gamma$, so a strict decrease in $\gamma$ 
alone also yields strict monotonicity when 
$\Pi_1$ is in the $M{+}D$ regime.}
\end{corollary}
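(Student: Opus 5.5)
The plan is to write the equilibrium inference price as a single function $p^*(\alpha,\gamma)$ (for fixed $\theta$, $e$), using Theorem~\ref{thm:MLI_characterization}, and show it is weakly increasing in each argument. By the theorem and Lemma~\ref{lem:boundary_agreement}, I expect
\[
p^*(\alpha,\gamma)=\min\Bigl\{e,\ \max\Bigl\{\tfrac{\alpha e}{1+\theta},\ \alpha e\bigl(1-\tfrac1\gamma\bigr)\Bigr\}\Bigr\}.
\]
To verify this I will check each regime. In Bitconia ($\Delta<0$) we have $\alpha(1-1/\gamma)>1$, so the max exceeds $e$ and the formula gives $e$. In Fortessia ($\Delta>0$, $\theta<\alpha-1$) we have $\alpha/(1+\theta)>1$, so again the formula gives $e$. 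In pure Duplexia, $\theta<\theta_{\rm high}$ is equivalent to $\tfrac{1}{1+\theta}>1-\tfrac1\gamma$, so the max is $\tfrac{\alpha e}{1+\theta}<e$. In mixed Duplexia the inequality reverses, and $\alpha(1-1/\gamma)<1$ by $\Delta>0$.

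Each inner expression is nondecreasing in $\alpha$ and in $\gamma$, and max and min preserve this. Hence $p_1\le p_2$ whenever $\alpha_1\le\alpha_2$ and $\gamma_1\le\gamma_2$. This route avoids tracking which regime each tuple lies in, since the boundary agreement is built into the min/max form.

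For the token price I will use $P=\theta\,G(p)/R$ with $G(p)=p\,\mathcal D(p)$, which holds in every regime by Theorem~\ref{thm:MLI_characterization}. All equilibrium prices lie in $(0,e]$. Under elastic demand, Lemma~\ref{lem:elastic_spending} makes $G$ weakly decreasing there, so $p_1\le p_2$ gives $P_1\ge P_2$. The attack cost equals $PR/2$ by Theorem~\ref{thm:attack_cost}, so it inherits the same inequality.

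For the strict statements, suppose $L_1=0$, so $\Pi_1$ is in Duplexia and $p_1<e$; then the outer min is attained by the max term. Both terms inside the max are strictly increasing in $\alpha$ because their coefficients are positive, using $\gamma>1$. Therefore $\max\{\cdot\}(\alpha_1,\gamma_1)<\max\{\cdot\}(\alpha_2,\gamma_1)\le\max\{\cdot\}(\alpha_2,\gamma_2)$. Combining this with $p_1<e$ gives $p_1<\min\{e,\max(\alpha_2,\gamma_2)\}=p_2$. Strict elasticity makes $G$ strictly decreasing, which yields strict inequalities for $P$ and for the cost.

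The only step needing care is verifying the min/max representation on each regime, including the Bitconia case where $\Delta<0$. I expect it to be the main obstacle, but only a mild one: it reduces to the algebraic equivalences already used in the proof of Theorem~\ref{thm:MLI_characterization} and Lemma~\ref{lem:theta_order}.
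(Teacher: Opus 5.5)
Your proof is correct. It uses the same ingredients as the paper: the regime price formulas from Theorem~\ref{thm:MLI_characterization}, $P=\theta G(p)/R$ with Lemma~\ref{lem:elastic_spending}, and Theorem~\ref{thm:attack_cost}. The difference is in how you handle regime changes.

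The paper lists the four regime formulas separately and notes that each is weakly increasing in $\alpha$ and $\gamma$. It then appeals to Lemma~\ref{lem:boundary_agreement} to argue that monotonicity survives when $(\alpha,\gamma)$ crosses a regime boundary. For the strict case, it simply observes that the formula active for $\Pi_1$ is strictly increasing in $\alpha$.

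You instead fold all regimes into the single expression $p^*=\min\{e,\max\{\alpha e/(1+\theta),\,\alpha e(1-1/\gamma)\}\}$ and verify it regime by regime. This has three advantages:
\begin{itemize}
\item Boundary agreement, which is the content of Lemma~\ref{lem:boundary_agreement}, is built into the formula.
\item Weak monotonicity in both parameters follows directly from $\min$ and $\max$ preserving monotonicity. You need no implicit path argument in the two-dimensional $(\alpha,\gamma)$ plane.
\item Your strictness argument explicitly covers the case where $\Pi_2$ is in a different regime from $\Pi_1$. The paper's one-line strictness step leaves that case implicit.
\end{itemize}
The formula also makes the footnote about $\gamma$ transparent, since $\gamma$ enters only through the second term of the $\max$, which binds only in mixed Duplexia.

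What the paper's piecewise presentation offers in return is a direct link to Table~\ref{tab:MLI_allocations} and the regime-by-regime economics.
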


\begin{proof}
By Theorem~\ref{thm:MLI_characterization}, away 
from knife edges the equilibrium inference price 
is given by
\[
p =
\begin{cases}
e, & \text{in the } M{+}L \text{ and } D{+}L 
    \text{ regimes},\\[4pt]
\dfrac{\alpha}{1+\theta}e, & \text{in the ``} D 
    \text{ only'' regime},\\[6pt]
\alpha e\!\left(1-\dfrac{1}{\gamma}\right), 
    & \text{in the } M{+}D \text{ regime}.
\end{cases}
\]
Each of these expressions is weakly increasing 
in $\alpha$ and weakly increasing in $\gamma$, 
and the latter two are strictly increasing in 
$\alpha$. If the equilibrium changes regime as 
$(\alpha, \gamma)$ varies, the adjacent regime 
formulas agree on the boundary by 
Lemma~\ref{lem:boundary_agreement}. It follows 
that $p_1 \leq p_2$.

If $L_1 = 0$, then $\Pi(\alpha_1, \gamma_1)$ 
lies in the $D$ only or $M{+}D$ regime (by 
non-degeneracy), where the active price formula 
is strictly increasing in $\alpha$. Hence, since 
$\alpha_1 < \alpha_2$, we have $p_1 < p_2$.

Let $G(p) = p\,\mathcal{D}(p)$. By the 
$\theta$-T/I ratio,
\[
P_i = \frac{\theta}{R} G(p_i), \qquad 
i \in \{1,2\}.
\]
If demand is elastic, then by 
Lemma~\ref{lem:elastic_spending}, $G$ is weakly 
decreasing in $p$. Since $p_1 \leq p_2$, this 
implies $P_1 \geq P_2$. If demand is strictly 
elastic and $p_1 < p_2$, then $G$ is strictly 
decreasing and hence $P_1 > P_2$. Finally, by 
Theorem~\ref{thm:attack_cost}, the cost of a 
$50\%$ attack is proportional to the token 
price, so 
$\mathrm{cost}(\Pi(\alpha_1, \gamma_1)) \geq 
\mathrm{cost}(\Pi(\alpha_2, \gamma_2))$, with 
strict inequality when $P_1 > P_2$.
\end{proof}
\begin{remark}[Incentives for technological improvement.]
Corollary~\ref{cor:P_monotone_alpha} shows that 
in the Duplexia regime, under strictly elastic 
inference demand and a fixed T/I ratio, any 
improvement in duplex efficiency (lower $\alpha$) 
strictly raises the equilibrium token price. This 
suggests that token holders have an incentive to 
advance duplex technology, creating a natural 
incentive to fund and promote such research within 
the PoUW ecosystem. More broadly, even if the 
economy is currently in Fortessia, the incentive 
to improve technology remains: a sufficiently 
large improvement in $\alpha$ will push the 
economy into Duplexia, at which point the full 
benefits are realized.
\end{remark}

\paragraph{Monotonicity in token adoption}
Symmetrically, keeping the technology fixed, inference price decreases and 
token price grows with greater token adoption (i.e., when $\theta$ increases):

\begin{corollary}\label{cor:P_monotone_theta}
Fix $(e, \alpha, \gamma, R)$ with $\Delta > 0$ and let $\theta_1 <
\theta_2$, with both parameter tuples away from the knife-edge cases
of Theorem~\ref{thm:MLI_characterization}. For $i \in \{1,2\}$, let
$\Pi(\theta_i) = (M_i, D_i, L_i, S_i, Q_i, p_i, P_i)$ denote the
unique non-degenerate $\theta_i$-T/I ratio competitive equilibrium.
Then $p_1 \geq p_2$. If furthermore demand is elastic, then
\[
P_1 < P_2 \qquad\text{and}\qquad
\mathrm{cost}(\Pi(\theta_1)) < \mathrm{cost}(\Pi(\theta_2)).
\]

Moreover, \(p_1>p_2\) iff the interval \((\theta_1,\theta_2)\)
intersects the pure-Duplexia region \((\theta_{\rm low},\theta_{\rm high})\).
If both equilibria are in Fortessia or both are in mixed Duplexia,
then \(p_1=p_2\).
\end{corollary}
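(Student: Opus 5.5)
The plan is to reduce everything to the explicit price formula of Theorem~\ref{thm:MLI_characterization}, viewed as a function of $\theta$ with $(e,\alpha,\gamma,R)$ fixed and $\Delta>0$:
\[
p(\theta)=
\begin{cases}
e, & \theta<\theta_{\rm low},\\[2pt]
\dfrac{\alpha e}{1+\theta}, & \theta_{\rm low}<\theta<\theta_{\rm high},\\[6pt]
\alpha e\bigl(1-\tfrac1\gamma\bigr), & \theta>\theta_{\rm high}.
\end{cases}
\]
By Lemma~\ref{lem:theta_order}, $\theta_{\rm low}<\theta_{\rm high}$, so these three intervals are disjoint and ordered. By Lemma~\ref{lem:boundary_agreement}, adjacent formulas agree at $\theta_{\rm low}$ and $\theta_{\rm high}$. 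Hence the function obtained by extending $p$ continuously to the boundaries (it is the $(D)$ formula there) is continuous on $(0,\infty)$. It is constant on $(0,\theta_{\rm low}]$, strictly decreasing on $[\theta_{\rm low},\theta_{\rm high}]$, and constant on $[\theta_{\rm high},\infty)$. A continuous function that is piecewise nonincreasing is nonincreasing, which gives $p_1\ge p_2$.

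For the strictness characterization, I would argue both directions from this shape.

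\emph{If $(\theta_1,\theta_2)$ meets $(\theta_{\rm low},\theta_{\rm high})$.} The intersection is an open interval $(a,b)$ of positive length. Monotonicity then gives
\[
p_1\ge p(a)>p(b)\ge p_2,
\]
where $p(a)$ and $p(b)$ are values of the continuous extension and the middle inequality is strict decrease on the pure-Duplexia piece.

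\emph{If $(\theta_1,\theta_2)$ misses $(\theta_{\rm low},\theta_{\rm high})$.} Then either $\theta_2\le\theta_{\rm low}$ or $\theta_1\ge\theta_{\rm high}$. Excluding the knife-edges, this means both equilibria lie in Fortessia or both lie in mixed Duplexia, where $p$ is constant, so $p_1=p_2$.

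Next, for token price I would use the T/I identity $P_i=\theta_i\,G(p_i)/R$ with $G(p)=p\,\mathcal D(p)$. Under elastic demand, Lemma~\ref{lem:elastic_spending} says $G$ is weakly decreasing, so $p_1\ge p_2$ gives $G(p_1)\le G(p_2)$. Since $G(p_1)>0$ by non-degeneracy and $\theta_1<\theta_2$ strictly,
\[
P_1=\frac{\theta_1 G(p_1)}{R}<\frac{\theta_2 G(p_1)}{R}\le\frac{\theta_2 G(p_2)}{R}=P_2 .
\]
So here the strictness comes from $\theta$ itself, and strict elasticity is not needed. Finally, Theorem~\ref{thm:attack_cost} gives $\mathrm{cost}(\Pi(\theta_i))=P_iR/2$, so the cost inequality is strict as well.

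The only delicate step I expect is the strictness characterization. There I must make sure that "away from knife-edges" rules out $\theta_i\in\{\theta_{\rm low},\theta_{\rm high}\}$. I also need to handle a pair straddling pure Duplexia (e.g. $\theta_1<\theta_{\rm low}$ and $\theta_2>\theta_{\rm high}$), which is covered by the positive-length intersection argument above. The rest is direct substitution into earlier results.
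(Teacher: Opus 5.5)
Your proposal is correct and follows essentially the same route as the paper. You glue the piecewise price formula from Theorem~\ref{thm:MLI_characterization} via Lemma~\ref{lem:boundary_agreement}, then get strictness from $\theta_1<\theta_2$ and $G(p_1)>0$ in $P_i=\theta_i G(p_i)/R$, and finish with Theorem~\ref{thm:attack_cost}; your positive-length-intersection argument for the ``iff'' is simply a more explicit rendering of the paper's continuity-based equivalence.
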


\begin{proof}
We consider the regime formulas for the equilibrium inference price given by
Theorem~\ref{thm:MLI_characterization}:
\[
p =
\begin{cases}
e, & \text{in the } M{+}L \text{ and } D{+}L \text{ regimes},\\[4pt]
\dfrac{\alpha}{1+\theta}e, & \text{in the ``} D \text{ only'' regime},\\[6pt]
\alpha e\!\left(1-\dfrac{1}{\gamma}\right), & \text{in the } M{+}D \text{ regime}.
\end{cases}
\]
Each expression is weakly decreasing in \(\theta\), and the expression
in the \(D\)-only regime is strictly decreasing. If the equilibrium changes
regime as \(\theta\) varies, the adjacent regime formulas agree on the
boundary by Lemma~\ref{lem:boundary_agreement}. Hence
\[
        p_1\ge p_2.
\]

For the strict price inequality, the interval
\((\theta_1,\theta_2)\) intersects the pure-Duplexia region
\((\theta_{\mathrm{low}},\theta_{\mathrm{high}})\) if and only if
\[
        \theta_1<\theta_{\mathrm{high}}
        \quad\text{and}\quad
        \theta_2>\theta_{\mathrm{low}}.
\]
Since \(p=\alpha e/(1+\theta)\) is strictly decreasing throughout
pure Duplexia, and \(p\) is continuous across regime boundaries by
Lemma~\ref{lem:boundary_agreement}, this condition is equivalent to
\[
        p_1>p_2.
\]
Conversely, if both equilibria are in Fortessia
\((\theta_1,\theta_2<\theta_{\mathrm{low}})\), then
\[
        p_1=p_2=e,
\]
and if both equilibria are in mixed Duplexia
\((\theta_1,\theta_2>\theta_{\mathrm{high}})\), then
\[
        p_1=p_2=\alpha e\left(1-\frac1\gamma\right).
\]

Let
\[
        G(p):=p\,\mathcal D(p).
\]
By the \(\theta\)-T/I ratio,
\[
        P_i=\frac{\theta_i}{R}G(p_i),
        \qquad i\in\{1,2\}.
\]
If demand is elastic, then by Lemma~\ref{lem:elastic_spending},
\(G\) is weakly decreasing in \(p\). Since \(p_1\ge p_2\), this gives
\[
        G(p_1)\le G(p_2).
\]
Moreover, \(G(p_1)>0\), since the equilibrium is non-degenerate and
\(p_1>0\). Therefore, using \(\theta_1<\theta_2\),
\[
        P_1
        =
        \frac{\theta_1}{R}G(p_1)
        <
        \frac{\theta_2}{R}G(p_1)
        \le
        \frac{\theta_2}{R}G(p_2)
        =
        P_2.
\]
Thus \(P_1<P_2\) under elastic demand.

Finally, by Theorem~\ref{thm:attack_cost},
\[
        \mathrm{Cost}_{50\%}(\Pi(\theta_i))
        =
        \frac{P_iR}{2}.
\]
Since \(R\) is fixed, \(P_1<P_2\) implies
\[
        \mathrm{Cost}_{50\%}(\Pi(\theta_1))
        <
        \mathrm{Cost}_{50\%}(\Pi(\theta_2)).
\]
\end{proof}

\begin{remark}[An adoption incentive]
Corollary~\ref{cor:P_monotone_theta} formalizes 
the adoption incentive informally described in 
Section~\ref{sec:discussion}. In the pure Duplexia 
regime, under strictly elastic inference demand, 
greater token adoption (higher $\theta$) 
\emph{strictly} lowers inference prices, 
expanding inference supply. This incentivizes 
inference providers to increase token adoption 
as a means to expand the inference market. Even 
in Fortessia, the incentive remains: a 
sufficiently large increase in $\theta$ will push 
the economy into Duplexia, at which point the 
full benefits are realized. In Bitconia, by 
contrast, the token and inference markets are 
fully decoupled and there is no adoption 
incentive.
\end{remark}

\begin{remark}[Inference value monotonicity without elasticity]
\label{rem:inference_value_monotone}
The monotonicity of inference price established in 
Corollaries~\ref{cor:P_monotone_alpha} 
and~\ref{cor:P_monotone_theta} holds without any 
elasticity assumption. As a consequence, the total 
\emph{value} of inference supplied---measured by the 
resource cost needed to provide it, 
$e\,\mathcal{D}(p)$---is weakly monotone in 
$\alpha$, $\gamma$, and $\theta$: as $\alpha$ or 
$\gamma$ decreases (better duplex technology), $p$ 
falls and hence $\mathcal{D}(p)$ weakly increases; 
and as $\theta$ increases (greater token adoption), 
$p$ falls and $\mathcal{D}(p)$ weakly increases 
similarly. In the Duplexia regime, where the 
inference price strictly falls, the expansion is 
strict whenever demand is strictly decreasing at 
the relevant price point. The elasticity condition 
is needed only for the stronger result that 
inference spending $p\,\mathcal{D}(p)$ also 
increases, which in turn drives monotonicity of 
token value $P$ and cost of attack $PR/2$.
\end{remark}


\subsection{Social Value}
\label{sec:social_value}

We define and compute the social value of a PoUW blockchain in three
steps. First, we define the social value of the standalone inference
market, which serves as our welfare benchmark. Second, we define the
social value of the joint PoUW market. Third, we compare the two to
obtain the \emph{net} social value of PoUW.

Beyond inference, the blockchain itself may generate value from
transaction settlement, smart contracts, censorship resistance, and
other applications; we denote this by $V_{\rm chain} \geq 0$.
Let $\mathcal{D}$ be the inference demand function. Write $v_{\mathcal{D}}$
for the associated inverse demand function:
\[
  v_{\mathcal{D}}(q) := \sup\{p \ge 0 : \mathcal{D}(p) \ge q\}.
\]
Thus $v_{\mathcal{D}}(q)$ is the marginal consumer value of the $q$-th
unit of inference. When $\mathcal{D}$ is continuous and strictly
decreasing, $v_{\mathcal{D}} = \mathcal{D}^{-1}$. When the demand
function is clear from context we write $v(q)$.

\begin{definition}[Frame]
\label{def:frame}
A \emph{frame} is a tuple
\[
  \Gamma = (\alpha, \gamma, R, \mathcal{D}, e, V_{\rm chain})
\]
consisting of technology parameters $(\alpha, \gamma)$, a block-reward
schedule $R$, an inference demand function $\mathcal{D}$, a compute cost
$e$, and a blockchain value $V_{\rm chain} \ge 0$.
\end{definition}

The social value of the standalone inference market is the total
value to consumers of the inference units produced, minus the
resource cost of producing them, in a world without a blockchain. Since the price $p$ is a transfer between
consumers and producers it does not affect social value directly;
only the quantity produced and its resource cost matter.

\begin{definition}[Social value of the standalone inference market]
\label{def:sv_inference}
Fix a frame $\Gamma = (\alpha, \gamma, R, \mathcal{D}, e, V_{\rm chain})$.
The \emph{social value of the standalone inference market} at price $p$ is
\[
  SV_{\rm inf}^{\Gamma}(p) :=
  \int_0^{\mathcal{D}(p)} v_{\mathcal{D}}(q)\,dq \;-\; e\,\mathcal{D}(p).
\]
(That is, the total value to consumers of the inference units produced,
minus the total resource cost of producing them.)
\end{definition}

The socially optimal standalone outcome sets price equal to marginal
cost, $p = e$, yielding
\[
  SV_{\rm inf}^{\Gamma}(e) =
  \int_0^{\mathcal{D}(e)} v_{\mathcal{D}}(q)\,dq \;-\; e\,\mathcal{D}(e).
\]
The social value of the joint PoUW market is the total value to
consumers of the inference units produced, plus the blockchain value
$V_{\rm chain}$, minus the full resource cost of all compute---$M+L+D$.
\begin{definition}[Social value of the joint market]
\label{def:social_value}
Fix a frame $\Gamma = (\alpha, \gamma, R, \mathcal{D}, e, V_{\rm chain})$
and an outcome $\pi = (M, D, L, S, Q, p, P)$. The \emph{social value} of
the joint PoUW market is
\[
  SV^{\Gamma}(\pi) :=
  \int_0^Q v_{\mathcal{D}}(q)\,dq + V_{\rm chain} - e(M + D + L).
\]
The \emph{net social value} of outcome $\pi$ relative to the standalone
inference optimum is
\[
  \Delta SV^{\Gamma}(\pi) := SV^{\Gamma}(\pi) - SV_{\rm inf}^{\Gamma}(e).
\]
We drop $\Gamma$ when it is clear from context.
\end{definition}

\bigskip

The first step in evaluating social value is to link total resource cost
to total revenue.

\begin{lemma}[Zero-profit identity]
\label{lem:zero_profit_identity}
In any non-degenerate competitive equilibrium
$\pi = (M, D, L, S, Q, p, P)$,
\[
  e(M + D + L) = pQ + PR.
\]
\end{lemma}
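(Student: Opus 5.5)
The plan is to multiply each active activity's zero-profit condition by its own level and sum. Complementarity in Definition~\ref{def:CE} gives $X\pi_X=0$ for every $X\in\{M,D,L\}$: either $X=0$, or $X>0$ and then $\pi_X=0$. Hence
\[
M\pi_M + D\pi_D + L\pi_L = 0 .
\]
Substituting the per-op profits from \eqref{eq:profits} and collecting the $-e$ terms yields
\[
\frac{PR}{S}\Bigl(M+\frac{D}{\gamma}\Bigr) + p\Bigl(L+\frac{D}{\alpha}\Bigr) - e(M+D+L) = 0 .
\]

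Next I use the accounting identities $S=M+D/\gamma$ and $Q=L+D/\alpha$. The first bracket equals $S$, so the first term is exactly $PR$. The second bracket equals $Q$, so the second term is $pQ$. This gives $e(M+D+L)=pQ+PR$, as claimed.

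The only delicate point is the division by $S$ in $\pi_M$ and $\pi_D$. This is why the lemma is stated for non-degenerate equilibria: there $S>0$, so $PR/S$ is well defined and $\frac{PR}{S}\cdot S=PR$ holds without any edge case. Nothing else is needed. In particular, the identity does not depend on which regime of Theorem~\ref{thm:MLI_characterization} obtains, nor on the T/I ratio. As a sanity check, in the (D) regime it reduces to $e\alpha Q = pQ+PR$, which matches the zero-profit step in that case of the proof.
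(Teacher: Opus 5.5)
Your proof is correct and matches the paper's argument. The paper likewise sets revenue equal to cost for each active activity (inactive ones contribute zero on both sides), sums, and collapses the brackets using $S=M+D/\gamma$ and $Q=L+D/\alpha$; your $X\pi_X=0$ formulation is just a compact way of stating that same complementarity step.
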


\begin{proof}
For each active activity, zero profit implies revenue equals cost:
\[
eM=\frac{PR}{S}M,
\qquad
eD=\frac{PR}{\gamma S}D+\frac{p}{\alpha}D,
\qquad
eL=pL,
\]
with inactive activities contributing zero to both sides.
Summing yields
\[
e(M+D+L)
=
PR\frac{M+D/\gamma}{S}
+
p\left(L+\frac{D}{\alpha}\right).
\]
Using
\[
S=M+\frac{D}{\gamma}
\qquad\text{and}\qquad
Q=L+\frac{D}{\alpha},
\]
we obtain
\[
e(M+D+L)=PR+pQ.
\]
\end{proof}

\begin{proposition}[Net social value of a PoUW equilibrium]
\label{prop:social_value}
Let $\pi = (M,D,L,S,Q,p,P)$ be a non-degenerate competitive
equilibrium in frame $\Gamma$. Then
\[
  \Delta SV(\pi)
  = V_{\rm chain}
  - \underbrace{
      \int_{\mathcal{D}(e)}^{\mathcal{D}(p)}
      \bigl(e - v_{\mathcal{D}}(q)\bigr)dq
    }_{\displaystyle\text{deadweight loss}}
  + \underbrace{
      (e-p)\,\mathcal{D}(p)
    }_{\displaystyle\text{inference subsidy}}
  - \underbrace{
      PR
    }_{\displaystyle\text{block reward}}.
\]
\end{proposition}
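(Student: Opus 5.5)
The plan is to substitute the zero-profit identity (Lemma~\ref{lem:zero_profit_identity}) into the definition of net social value and then regroup the consumer-value integrals. Throughout, we use the market-clearing condition $Q=\mathcal D(p)$ from Definition~\ref{def:CE}.

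First, Lemma~\ref{lem:zero_profit_identity} replaces the resource cost $e(M+D+L)$ by $pQ+PR$. Definition~\ref{def:social_value} then gives
\[
\Delta SV(\pi)=\int_0^{\mathcal D(p)} v_{\mathcal D}(q)\,dq + V_{\rm chain} - p\,\mathcal D(p) - PR - \int_0^{\mathcal D(e)} v_{\mathcal D}(q)\,dq + e\,\mathcal D(e).
\]
The two consumer-value integrals combine into $\int_{\mathcal D(e)}^{\mathcal D(p)} v_{\mathcal D}(q)\,dq$. This uses the usual oriented-integral convention, so no assumption that $p\le e$ is needed.

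Next, the claimed form should follow from the algebraic identity
\[
-p\,\mathcal D(p)+e\,\mathcal D(e) = (e-p)\mathcal D(p) - e\bigl(\mathcal D(p)-\mathcal D(e)\bigr) = (e-p)\mathcal D(p) - \int_{\mathcal D(e)}^{\mathcal D(p)} e\,dq .
\]
Adding this to the combined integral yields $-\int_{\mathcal D(e)}^{\mathcal D(p)}(e-v_{\mathcal D}(q))\,dq + (e-p)\mathcal D(p)$. Together with $V_{\rm chain}-PR$, that is exactly the stated formula.

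There is no real obstacle; the argument is bookkeeping. The only points needing care are the following.
\begin{itemize}
\item The integrals must be well-defined and finite. This holds because $\mathcal D$ takes finite values (Definition~\ref{def:demand}) and $v_{\mathcal D}$ is nonnegative, nonincreasing and locally bounded on $(0,\mathcal D(0)]$. If $v_{\mathcal D}$ is unbounded near $q=0$, the common piece $\int_0^{\min(\mathcal D(p),\mathcal D(e))}$ appears in both terms and can be cancelled before evaluation, or the integrals can be read as differences of a finite total.
\item Non-degeneracy is what makes Lemma~\ref{lem:zero_profit_identity} applicable.
\end{itemize}
The statement does not depend on the regime. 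The regime-specific corollaries then follow by plugging in $p$: for $p=e$ the deadweight-loss and subsidy terms vanish, and in pure Duplexia one has $PR=\theta p\mathcal D(p)$ with $(1+\theta)p=\alpha e$.
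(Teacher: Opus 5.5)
Your proposal is correct and follows the paper's own proof essentially line for line: substitute the zero-profit identity $e(M+D+L)=pQ+PR$, use $Q=\mathcal D(p)$, merge the two consumer-value integrals into one over $[\mathcal D(e),\mathcal D(p)]$, and add and subtract $e\,\mathcal D(p)$. As a small aside, free entry ($\pi_L\le 0$) already forces $p\le e$, so the merged range is correctly oriented and $v_{\mathcal D}(q)\in[p,e]$ on $(\mathcal D(e),\mathcal D(p)]$; this gives finiteness of the merged integral more directly than your local-boundedness remark.
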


\begin{proof}
By Definition~\ref{def:social_value},
\[
SV(\pi)
=
\int_0^Q v_D(q)\,dq
+
V_{\rm chain}
-
e(M+D+L).
\]
By Lemma~\ref{lem:zero_profit_identity},
\[
e(M+D+L)=pQ+PR.
\]
Therefore,
\[
SV(\pi)
=
\int_0^Q v_D(q)\,dq
+
V_{\rm chain}
-
pQ
-
PR.
\]
Using market clearing \(Q=\mathcal D(p)\), and subtracting the
standalone inference optimum
\[
SV_{\rm inf}(e)
=
\int_0^{\mathcal D(e)} v_D(q)\,dq
-
e\mathcal D(e),
\]
we get
\[
\Delta SV(\pi)
=
V_{\rm chain}
+
\int_{\mathcal D(e)}^{\mathcal D(p)} v_D(q)\,dq
-
p\mathcal D(p)
+
e\mathcal D(e)
-
PR.
\]
Now add and subtract \(e\mathcal D(p)\):
\[
\Delta SV(\pi)
=
V_{\rm chain}
+
\int_{\mathcal D(e)}^{\mathcal D(p)} v_D(q)\,dq
-
e(\mathcal D(p)-\mathcal D(e))
+
(e-p)\mathcal D(p)
-
PR.
\]
Finally,
\[
\int_{\mathcal D(e)}^{\mathcal D(p)} v_D(q)\,dq
-
e(\mathcal D(p)-\mathcal D(e))
=
-
\int_{\mathcal D(e)}^{\mathcal D(p)}
(e-v_D(q))\,dq,
\]
which gives the claimed expression.
\end{proof}

Proposition~\ref{prop:social_value} decomposes $\Delta SV$ into
three components. The blockchain value $V_{\rm chain}$ is the value of the direct
benefit of the blockchain. The inference subsidy
$(e-p)\mathcal{D}(p)$ captures the gain from the price reduction:
inference consumers receive $\mathcal{D}(p)$ units at price $p < e$,
whereas in the standalone market they would pay $e$ per unit; the
saving of $e-p$ per unit, summed over all units produced, is the
magnitude of the subsidy. The block reward $PR$ is the resource cost
of funding consensus.

The deadweight loss $\int_{\mathcal{D}(e)}^{\mathcal{D}(p)}
(e-v_{\mathcal{D}}(q))dq$ is the classic efficiency loss from a
subsidy~\cite{harberger1964}: block rewards push the inference price
below cost, inducing consumption of units $q \in (\mathcal{D}(e),
\mathcal{D}(p)]$ whose consumer value $v_{\mathcal{D}}(q)$ falls
short of their resource cost $e$. The loss per unit is $e -
v_{\mathcal{D}}(q) \in [0, e-p]$: strictly less than the full
resource cost $e$. This contrasts with Bitcoin-style mining, where
wasted compute produces nothing of value and the entire resource cost
$e$ is lost per unit. Hence at the same block reward $PR$, Duplexia
strictly dominates Bitcoin: the same expenditure that Bitcoin burns
entirely on mining also produces inference that consumers value,
at no extra cost.

However, fixing $PR$ across two equilibria with different technology
parameters is not the appropriate comparison, since $PR = \theta pQ$
depends on both the T/I ratio $\theta$ and the equilibrium inference
price $p$, which itself depends on technology. The natural comparison
instead holds fixed the T/I ratio $\theta$, which is determined by
monetary primitives independently of technology
(Appendix~\ref{sec:monetary_foundation}). At the same $\theta$, the block reward
differs across regimes --- in Bitconia $PR = \theta\,e\,\mathcal{D}(e)$,
while in Duplexia $PR = \theta p\mathcal{D}(p)$ with $p < e$ --- and
the comparison is more nuanced. We characterize it in the following
corollaries. We focus on pure Duplexia ($\theta \in (\alpha-1,
\theta_{\rm high})$), the empirically relevant Duplexia
subregime.\footnote{Mixed Duplexia requires $\theta > \theta_{\rm
high} = \frac{1}{\gamma-1}$, which for $\gamma = 1.3$ demands $\theta
> 3.33$ --- a token economy more than three times the size of the
inference market, an extreme and unlikely configuration.}

\begin{corollary}[Social value in pure Duplexia]
\label{cor:sv_duplexia}
Fix a frame $\Gamma = (\alpha,\gamma,R,\mathcal{D},e,V_{\rm chain})$
with $\Delta > 0$, and let $\pi = (M,D,L,S,Q,p,P)$ be a non-degenerate
$\theta$-T/I ratio competitive equilibrium in $\Gamma$ with $L = 0$
and $\theta \in (\theta_{\rm low}, \theta_{\rm high})$. Then $p =
\frac{\alpha}{1+\theta}\,e$, $Q = \mathcal{D}(p)$, and
\[
  \Delta SV(\pi)
  = V_{\rm chain}
  - \underbrace{
      \int_{\mathcal{D}(e)}^{\mathcal{D}(p)}
      \bigl(e - v_{\mathcal{D}}(q)\bigr)dq
    }_{\displaystyle\text{deadweight loss}}
  - \underbrace{
      e(\alpha-1)\,\mathcal{D}(p)
    }_{\displaystyle\text{duplex overhead}}.
\]
\end{corollary}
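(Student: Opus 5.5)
The plan is to specialize Proposition~\ref{prop:social_value} to the pure Duplexia regime and simplify the last two terms. First, by Theorem~\ref{thm:MLI_characterization} (case 2(b)), since $\Delta>0$ and $\theta\in(\theta_{\rm low},\theta_{\rm high})$, the unique non-degenerate $\theta$-T/I ratio equilibrium is the $(D)$ pattern. It has $M=L=0$, $D=\alpha Q$, $p=\frac{\alpha}{1+\theta}e$ and $Q=\mathcal D(p)$. This gives the first two claims directly.

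Second, I would invoke Proposition~\ref{prop:social_value}, which applies to any non-degenerate competitive equilibrium. It gives
\[
\Delta SV(\pi)=V_{\rm chain}-\int_{\mathcal D(e)}^{\mathcal D(p)}\bigl(e-v_{\mathcal D}(q)\bigr)dq+(e-p)\mathcal D(p)-PR .
\]
It then remains to show that $(e-p)\mathcal D(p)-PR=-e(\alpha-1)\mathcal D(p)$.

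For this step I would use the zero-profit condition of the $(D)$ case, already derived in the proof of Theorem~\ref{thm:MLI_characterization}: $PR+pQ=e\alpha Q$. Equivalently, Lemma~\ref{lem:zero_profit_identity} with $M=L=0$ and $D=\alpha Q$ gives $e\alpha Q=pQ+PR$. Hence $PR=(e\alpha-p)\mathcal D(p)$, and substituting,
\[
(e-p)\mathcal D(p)-(e\alpha-p)\mathcal D(p)=-e(\alpha-1)\mathcal D(p),
\]
which is the claimed duplex overhead term.

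There is no real obstacle. The one point to check is that the identity $PR=(\alpha e-p)Q$ uses the allocation $D=\alpha Q$, that is, $L=0$. That is guaranteed by the regime hypothesis, and the explicit value of $p$ is not needed for this cancellation.
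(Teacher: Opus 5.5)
Your proof is correct and follows the paper's route: you invoke Theorem~\ref{thm:MLI_characterization} for the pure Duplexia price, apply Proposition~\ref{prop:social_value}, and reduce $(e-p)\mathcal{D}(p)-PR$ to $-e(\alpha-1)\mathcal{D}(p)$. The only difference is cosmetic. The paper eliminates $PR$ using $PR=\theta p\mathcal{D}(p)$ together with $p(1+\theta)=\alpha e$, while you use the zero-profit identity $e\alpha Q = pQ+PR$; the two are equivalent, since the price formula is derived from exactly that identity. Your identity needs $M=0$ as well as $L=0$ (your closing remark mentions only $L=0$), and the regime supplies both.
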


\begin{proof}
By Theorem~\ref{thm:MLI_characterization}, $\Delta > 0$ and $\theta
\in (\theta_{\rm high},\theta_{\rm high})$ imply $p =
\frac{\alpha}{1+\theta}\,e$, so $p(1+\theta) = \alpha e$. Using
$PR = \theta p\mathcal{D}(p)$ and $p(1+\theta) = \alpha e$:
\[
  (e-p)\mathcal{D}(p) - PR
  = e\,\mathcal{D}(p) - p(1+\theta)\mathcal{D}(p)
  = e\,\mathcal{D}(p) - \alpha e\,\mathcal{D}(p)
  = -e(\alpha-1)\mathcal{D}(p).
\]
Substituting into Proposition~\ref{prop:social_value} gives the
claimed expression. 
\end{proof}

\begin{corollary}[Social value when solo inference is active]
\label{cor:sv_solo}
Fix a frame $\Gamma = (\alpha,\gamma,R,\mathcal{D},e,V_{\rm chain})$
and let $\pi = (M,D,L,S,Q,p,P)$ be a non-degenerate $\theta$-T/I ratio
competitive equilibrium in $\Gamma$ with $L > 0$. Then $p = e$,
$Q = \mathcal{D}(e)$, and
\[
  \Delta SV(\pi) = V_{\rm chain} - PR.
\]
\end{corollary}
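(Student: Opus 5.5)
The claim follows directly from Proposition~\ref{prop:social_value} once we use what solo inference pins down. First, since $L>0$, Proposition~\ref{prop:price_pinning} (item 1) gives $p=e$. Market clearing then gives $Q=\mathcal{D}(p)=\mathcal{D}(e)$. This settles the first two assertions.

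Next, I substitute $p=e$ into the decomposition of Proposition~\ref{prop:social_value}:
\[
\Delta SV(\pi)=V_{\rm chain}-\int_{\mathcal{D}(e)}^{\mathcal{D}(p)}\bigl(e-v_{\mathcal{D}}(q)\bigr)\,dq+(e-p)\,\mathcal{D}(p)-PR .
\]
With $p=e$, the deadweight-loss integral runs over the degenerate interval $[\mathcal{D}(e),\mathcal{D}(e)]$. It therefore vanishes, with no regularity needed on $v_{\mathcal{D}}$ beyond integrability. The inference-subsidy term $(e-p)\mathcal{D}(p)$ is also zero. What remains is $\Delta SV(\pi)=V_{\rm chain}-PR$.

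There is essentially no obstacle. The only point to check is that Proposition~\ref{prop:social_value} applies, and it does, because $\pi$ is assumed to be a non-degenerate competitive equilibrium. The T/I ratio hypothesis is not even needed for the identity itself. It only lets us rewrite $PR=\theta e\,\mathcal{D}(e)$ if desired, which recovers the Bitconia/Fortessia comparison mentioned in the text.

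As a sanity check, the same result follows without Proposition~\ref{prop:social_value}. Lemma~\ref{lem:zero_profit_identity} gives $e(M+D+L)=eQ+PR$, so $SV=\int_0^{\mathcal{D}(e)}v_{\mathcal{D}}\,dq+V_{\rm chain}-e\mathcal{D}(e)-PR$. Subtracting $SV_{\rm inf}(e)$ again yields $V_{\rm chain}-PR$.
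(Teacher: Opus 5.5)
Your proof is correct and takes the same approach as the paper: use $L>0$ to pin $p=e$, then specialize Proposition~\ref{prop:social_value}. You are slightly more explicit than the paper, which only notes that the subsidy term vanishes; you also point out that the deadweight-loss integral collapses over the degenerate interval $[\mathcal{D}(e),\mathcal{D}(e)]$.
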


\begin{proof}
Since $L > 0$, zero profit for solo inference gives $p = e$, so the
subsidy term $(e-p)\mathcal{D}(p)$ in
Proposition~\ref{prop:social_value} vanishes. The result follows
immediately.
\end{proof}

We now compare social value in Duplexia to Bitconia:

\begin{corollary}[Welfare comparison with Bitconia]
\label{cor:sv_comparison}
Fix a frame $\Gamma$ with $\Delta > 0$ and $\theta \in
(\theta_{\mathrm{low}}, \theta_{\mathrm{high}})$, and let $\pi$ be the
corresponding pure Duplexia equilibrium in $\Gamma$. Let
$\pi^{\mathrm{Bit}}$ be any competitive equilibrium at T/I ratio
$\theta$ in any frame sharing the same $(R, \mathcal{D}, e,
V_{\mathrm{chain}}, \theta)$ as $\Gamma$ but with $\Delta < 0$.
Then
\[
  \Delta SV^{\Gamma}(\pi) - \Delta SV(\pi^{\mathrm{Bit}})
  =
  \underbrace{
    \bigl[\theta - \theta_{\mathrm{low}}\bigr]e\,\mathcal{D}(e)
  }_{\text{net saving on baseline compute}}
  -
  \underbrace{
    \int_{\mathcal{D}(e)}^{\mathcal{D}(p)}
    \bigl(e - v_{\mathcal{D}}(q)\bigr)\,dq
  }_{\text{deadweight loss on expansion}}
  -
  \underbrace{
    e(\alpha-1)\bigl(\mathcal{D}(p)-\mathcal{D}(e)\bigr)
  }_{\text{overhead on expansion}},
\]
where $p = \frac{\alpha}{1+\theta}e$.
\end{corollary}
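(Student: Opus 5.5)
The plan is to compute each of the two net social values separately using the corollaries already established, and then subtract.

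For the pure Duplexia equilibrium $\pi$, Corollary~\ref{cor:sv_duplexia} gives
\[
\Delta SV^{\Gamma}(\pi)=V_{\rm chain}-\int_{\mathcal D(e)}^{\mathcal D(p)}\bigl(e-v_{\mathcal D}(q)\bigr)\,dq-e(\alpha-1)\mathcal D(p),
\qquad p=\tfrac{\alpha}{1+\theta}e .
\]
For the Bitconia benchmark $\pi^{\rm Bit}$, with $\Delta<0$, Theorem~\ref{thm:MLI_characterization} says every non-degenerate $\theta$-T/I equilibrium is the unique $(M{+}L)$ equilibrium. This equilibrium has $L>0$, $p=e$, and $Q=\mathcal D(e)$. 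Corollary~\ref{cor:sv_solo} then yields $\Delta SV(\pi^{\rm Bit})=V_{\rm chain}-PR$. The T/I condition gives $PR=\theta\,e\,\mathcal D(e)$, so
\[
\Delta SV(\pi^{\rm Bit})=V_{\rm chain}-\theta e\mathcal D(e).
\]

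Subtracting, the $V_{\rm chain}$ terms cancel because the frames share $V_{\rm chain}$; the shared $\mathcal D$ and $e$ also ensure $\mathcal D(e)$ means the same thing in both. The difference is therefore
\[
\theta e\mathcal D(e)-\int_{\mathcal D(e)}^{\mathcal D(p)}\bigl(e-v_{\mathcal D}(q)\bigr)\,dq-e(\alpha-1)\mathcal D(p).
\]
Next split $\mathcal D(p)=\mathcal D(e)+(\mathcal D(p)-\mathcal D(e))$ in the overhead term:
\[
e(\alpha-1)\mathcal D(p)=e(\alpha-1)\mathcal D(e)+e(\alpha-1)\bigl(\mathcal D(p)-\mathcal D(e)\bigr).
\]
Combining $\theta e\mathcal D(e)-(\alpha-1)e\mathcal D(e)=(\theta-\theta_{\rm low})e\mathcal D(e)$ produces exactly the claimed three-term decomposition.

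No step is a real obstacle; the only points needing care are the following.
\begin{itemize}
\item The phrase ``any competitive equilibrium'' for $\pi^{\rm Bit}$ must be read as a non-degenerate $\theta$-T/I equilibrium. Uniqueness under $\Delta<0$ from Theorem~\ref{thm:MLI_characterization} then removes any ambiguity, and $\mathcal D(e)>0$ is assumed there.
\item Corollary~\ref{cor:sv_duplexia} should be applied with the correct hypothesis $\theta\in(\theta_{\rm low},\theta_{\rm high})$; its proof contains a typo writing $(\theta_{\rm high},\theta_{\rm high})$.
\end{itemize}
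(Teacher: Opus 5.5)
Your proposal is correct and follows the paper's proof essentially step for step. You evaluate the Bitconia benchmark via Theorem~\ref{thm:MLI_characterization} and Corollary~\ref{cor:sv_solo} to get $V_{\rm chain}-\theta e\mathcal D(e)$, evaluate pure Duplexia via Corollary~\ref{cor:sv_duplexia}, subtract, and split the overhead term into baseline and expansion parts. Your side remarks are also accurate: $\pi^{\rm Bit}$ should be read as the non-degenerate $\theta$-T/I equilibrium, and the proof of Corollary~\ref{cor:sv_duplexia} contains the $(\theta_{\rm high},\theta_{\rm high})$ typo.
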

\begin{proof}
Since $\Delta < 0$, by Theorem~\ref{thm:MLI_characterization} the unique
robust competitive equilibrium $\pi^{\mathrm{Bit}}$ is (M+L) with $L > 0$
and $p = e$. By Corollary~\ref{cor:sv_solo},
\[
  \Delta SV(\pi^{\mathrm{Bit}}) = V_{\mathrm{chain}} - P^{\mathrm{Bit}} R.
\]
Since $L > 0$ pins $p = e$ and hence $Q = \mathcal{D}(e)$, the T/I ratio
gives
\[
  P^{\mathrm{Bit}} R
  = \theta\,p\,\mathcal{D}(p)\big|_{p=e}
  = \theta\,e\,\mathcal{D}(e).
\]
This quantity depends only on the shared primitives
$(R, \mathcal{D}, e, \theta)$ and is independent of the specific
$(\alpha,\gamma)$ chosen for the Bitconia frame, as long as $\Delta < 0$.

For the pure Duplexia equilibrium $\pi$, note that
$\theta_{\mathrm{low}} = \alpha - 1$, so the hypothesis
$\theta \in (\theta_{\mathrm{low}}, \theta_{\mathrm{high}})$ is
precisely $\theta > \alpha - 1$, which by
Theorem~\ref{thm:MLI_characterization} ensures $L = 0$ and
$p = \frac{\alpha}{1+\theta}e < e$ in equilibrium.
By Corollary~\ref{cor:sv_duplexia},
\[
  \Delta SV^{\Gamma}(\pi)
  = V_{\mathrm{chain}}
  - \int_{\mathcal{D}(e)}^{\mathcal{D}(p)}
    \bigl(e - v_{\mathcal{D}}(q)\bigr)\,dq
  - e(\alpha-1)\mathcal{D}(p).
\]

Subtracting $\Delta SV(\pi^{\mathrm{Bit}})$ from
$\Delta SV^{\Gamma}(\pi)$ and using
$P^{\mathrm{Bit}} R = \theta\,e\,\mathcal{D}(e)$:
\begin{align*}
  \Delta SV^{\Gamma}(\pi) - \Delta SV(\pi^{\mathrm{Bit}})
  &= \theta\,e\,\mathcal{D}(e)
     - \int_{\mathcal{D}(e)}^{\mathcal{D}(p)}
       \bigl(e - v_{\mathcal{D}}(q)\bigr)\,dq
     - e(\alpha-1)\mathcal{D}(p).
\end{align*}
Writing $e(\alpha-1)\mathcal{D}(p)
= e(\alpha-1)\mathcal{D}(e)
+ e(\alpha-1)\bigl(\mathcal{D}(p)-\mathcal{D}(e)\bigr)$
to separate the overhead on baseline inference from the overhead on
the expansion:
\begin{align*}
  \Delta SV^{\Gamma}(\pi) - \Delta SV(\pi^{\mathrm{Bit}})
  &= \theta\,e\,\mathcal{D}(e) - e(\alpha-1)\mathcal{D}(e) \\
  &\quad
     - \int_{\mathcal{D}(e)}^{\mathcal{D}(p)}
       \bigl(e - v_{\mathcal{D}}(q)\bigr)\,dq \\
  &\quad
     - e(\alpha-1)\bigl(\mathcal{D}(p)-\mathcal{D}(e)\bigr) \\[4pt]
  &= \underbrace{
       \bigl[\theta-(\alpha-1)\bigr]\,e\,\mathcal{D}(e)
     }_{\text{net saving on baseline compute}}
     -
     \underbrace{
       \int_{\mathcal{D}(e)}^{\mathcal{D}(p)}
       \bigl(e - v_{\mathcal{D}}(q)\bigr)\,dq
     }_{\text{deadweight loss on expansion}}
     -
     \underbrace{
       e(\alpha-1)\bigl(\mathcal{D}(p)-\mathcal{D}(e)\bigr)
     }_{\text{overhead on expansion}}.
\end{align*}
Since $\theta_{\mathrm{low}} = \alpha - 1$, the net saving on baseline
compute equals
$[\theta - \theta_{\mathrm{low}}]\,e\,\mathcal{D}(e) > 0$,
which is strictly positive because $\theta > \theta_{\mathrm{low}}$
by hypothesis.
\end{proof}

The decomposition reveals a tradeoff. The
net saving on baseline compute $[\theta-(\alpha-1)]e\mathcal{D}(e)$
is always positive in pure Duplexia since $\theta > \alpha-1$, and
grows linearly in $\theta$. The deadweight loss on expansion is second order in the price
discount $e-p$ by Harberger~\cite{harberger1964}: since $p=e$
maximizes standalone inference social value, the first-order effect
of the subsidy vanishes, and the loss is proportional to the square
of the price discount. The overhead on expansion
$e(\alpha-1)(\mathcal{D}(p)-\mathcal{D}(e))$, however, is first order
in the inference expansion. As $\theta$ grows, the inference price
falls further below cost, expanding demand and amplifying both the
overhead and the deadweight loss on the additional inference units.
For sufficiently large $\theta$, these costs can dominate the
baseline saving. We characterize the sign of the welfare difference
explicitly under isoelastic demand in Section~\ref{sec:graphs}, where
we show that for any $\varepsilon > 1$, Duplexia strictly dominates
whenever $\theta < \frac{1}{\varepsilon-1}$.

\begin{remark}[Social value with market friction]
\label{rem:ephys}
Our welfare analysis benchmarks against the competitive inference price
$p = e$, where $e$ is the all-in cost of compute including financing
costs, returns to capital, and customer acquisition. In practice, the
true physical resource cost $e_{\rm phys} \leq e$ is strictly lower due to ``market friction".\footnote{We are grateful to Daniel Raskin for suggesting to consider market frictions.} 

One possible source of such a wedge is market power in the supply of
compute. To illustrate, suppose that in the absence of the blockchain,
compute is supplied by a monopolist with physical marginal cost
\(e_{\rm phys}\), that inference is the only use of compute, and that
the monopolist sells compute directly into the inference market. The
monopolist therefore faces the inference demand curve
\(\mathcal D\), and chooses a compute price \(e\) to maximize
$(e-e_{\rm phys})\mathcal D(e)$.
Differentiating with respect to \(e\) and setting the derivative equal
to zero gives $\mathcal D(e)+(e-e_{\rm phys})\mathcal D'(e)=0$.
Rearranging,
\[
\frac{e-e_{\rm phys}}{e}
=
-\frac{\mathcal D(e)}{e\mathcal D'(e)}
=
\frac{1}{\varepsilon(e)},
\]
where $\varepsilon(e)$
is the elasticity of demand at price \(e\). This is the classic Lerner
formula~\cite{Lerner1934}. 
(For example, if \(\varepsilon(e)=2\), the Lerner formula gives
\(e=2e_{\rm phys}\): the monopoly price is twice the monopolist's
physical marginal cost. This should not be interpreted as saying that
the all-in compute price is twice the true social cost. The markup
applies only to the component of compute supplied with market power.
Other components, such as electricity, may be competitively priced and
therefore already reflect real resource costs.)

Relative to the social optimum at
$p = e_{\rm phys}$, the standalone competitive inference market at
$p = e$ already underproduces: the wedge $\delta := e - e_{\rm phys} \geq 0$ acts as an
implicit ``tax" on inference: it raises the competitive price from
$e_{\rm phys}$ to $e$, causing the market to underproduce relative
to the social optimum $\mathcal{D}(e_{\rm phys}) \geq \mathcal{D}(e)$.
The PoUW subsidy may then act as a corrective transfer, offsetting this
tax and pushing the inference price back toward $e_{\rm phys}$.

In particular, provided $\theta^*\in(\theta_{\rm low},\theta_{\rm high})$, so that
the equilibrium is in pure Duplexia,
if the T/I ratio satisfies
\[
  \theta^* = \alpha\,\frac{e}{e_{\rm phys}} - 1
           = \alpha\left(1 + \frac{\delta}{e_{\rm phys}}\right) - 1,
\]
then the equilibrium inference price $p^* = \frac{\alpha}{1+\theta^*}e
= e_{\rm phys}$ exactly corrects the market friction and
implements the socially optimal inference quantity. 
\end{remark}

\section{Graphical Illustrations: Duplexia, Fortessia and Bitconia}
\label{sec:graphs}
We now illustrate how competitive equilibria evolve as either the technological parameters $(\alpha,\gamma)$ vary, or the T/I ratio $\theta$ varies.
We organize the
discussion around the distinct equilibrium ``worlds’’ that emerge.
An interactive computational companion is available, allowing the reader
to explore equilibrium behavior across parameter choices and demand
elasticities~\cite{implementation}.

\paragraph{Isoelastic demand}
To illustrate the behavior of equilibria, for concreteness, we assume the demand function to be isoelastic; that is, takes the constant-elasticity form:\footnote{Strictly speaking, isoelastic demand is not a valid
demand function since
$\mathcal{D}(0) = \infty$. One can always consider a quantity-capped
version $\mathcal{D}(p)=\min\{\bar Q,Kp^{-\varepsilon}\}$ for a large $\bar{Q§}$;
for simplicity, we ignore this inconsequential difference.}
$\mathcal D(p)=K\,p^{-\varepsilon}, \varepsilon>0,$
where $\varepsilon$ is referred to as the elasticity parameter.\footnote{Indeed, note that an isoelastic demand function with elasticity parameter $\varepsilon$ has constant point elasticity $\varepsilon$ everywhere, since $-\frac{\mathcal D'(p)\,p}{\mathcal D(p)}=\varepsilon$.}
For concreteness, we normalize the compute cost to $e = 1$; 
under this normalization, $\mathcal{D}(e) = K e^{-\varepsilon} = K$, 
so $K$ directly equals the quantity of inference demanded 
at the compute cost price during a single 10-minute block.
Assuming annual inference spending of $\$30$B (this number is for illustrative purposes only) and 
$N = 52{,}560$ ten-minute blocks per year yields
$K = \$30\text{B}/N \approx \$571{,}000$, so we use $K = 571{,}000$.

In our illustrations, we further set elasticity $\varepsilon = 2$, but also consider other choices. We note that the qualitative impact of PoUW becomes progressively stronger as demand elasticity increases. 

\subsection{Economic worlds: Duplexia, Fortessia and Bitconia}
\label{sec:worlds_isoelastic}
\begin{figure}[t]
\centering
\includegraphics[width=1\linewidth]{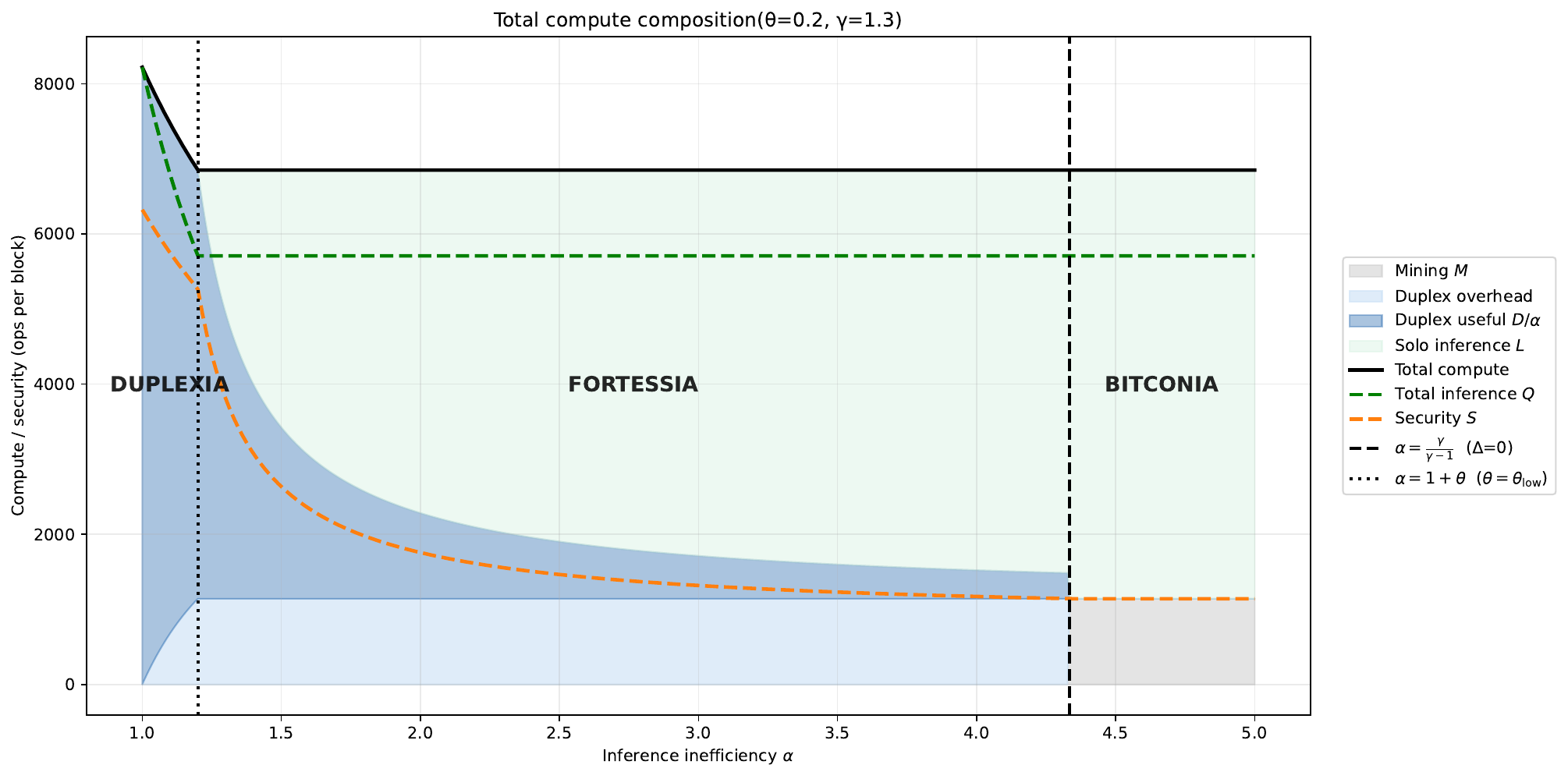}
\caption{Compute composition and security at duplex overhead
($\gamma = 1.3$).
The labeled regions correspond to Bitconia, Fortessia, and Duplexia as
described in Theorem~\ref{thm:informal_full_transition}.}
\label{fig:compute_low_gamma}
\end{figure}
To illustrate the different economic worlds, for concreteness, we focus on the case when $\gamma = 1.3$ 
(i.e., duplex compute requires $30\%$ more 
operations than solo-mining to produce the 
same security output).

Figure~\ref{fig:compute_low_gamma} demonstrates the composition of compute in the equilibria that emerge for varying $\alpha$, but keeping T/I ratio fixed at $\theta = 0.2$. As can be seen, and consistent with Theorem \ref{thm:MLI_characterization}, the choice of $\alpha$ partitions the equilibrium into three
regions:
\emph{Duplexia}, \emph{Fortessia}, and \emph{Bitconia}.

\paragraph{Bitconia:}
For sufficiently large $\alpha$, duplex compute is dominated by solo-inference and solo-mining, so mining and inference decouple, and the system reduces to a classical
proof-of-work equilibrium with a separate inference markets. 

\paragraph{Fortessia:}
As $\alpha$ decreases, we enter \emph{Fortessia}. Here $D$ dominates solo-mining, but a fraction of compute is still spent on solo-inference. Since $L$ is active, the price of inference is pinned to $e$, and thus total inference supplied is no different than in Bitconia. Additionally, as see in Figure~\ref{fig:compute_low_gamma}, the amount of ``wasted" compute (due to the overhead of Duplex compute) is the same as in Bitconia (even if nobody is doing any ``useless PoW" by performing solo-mining, the overhead in performing Duplex compute leads to ``wasted compute"). Thus, from an economic point of view, this world is not very different from Bitconia. But as shown in Figure~\ref{fig:compute_low_gamma}, there is a huge difference in terms of security compared to Bitconia: in Fortessia, the security budget (i.e., the amount of computational resources spend on securing the network) is much higher than in Bitconia for the same amount of ``wasted compute"---that is, security is \emph{fortified} (amplified).

\paragraph{Duplexia:}
Once $\alpha$ reaches $1+\theta$, we enter Duplexia. Here $D$ dominates both $L$ and $M$. Security continues to increase, and more interestingly, the token price starts to subsidize the inference price, which in turn leads to higher inference demand and consequently higher token prices (see~Figure~\ref{fig:total_spend_low_gamma}, and Figure~\ref{fig:rebate_low_gamma} discussed in 
Section~\ref{sec:maintheorem.intro})---that is, we get a \emph{self-reinforcing feedback loop}. 

To summarize, in Duplexia, we have \emph{duplex} (i.e., two-fold) gains with respect to Bitconia, both in terms of security (higher security) and in terms of economics (more inference served, lower inference prices, smaller wasted computation, and higher token prices).

\subsection{The Economic Effect of the T/I Ratio}
\begin{figure}
\centering

\includegraphics[height=0.27\textheight]{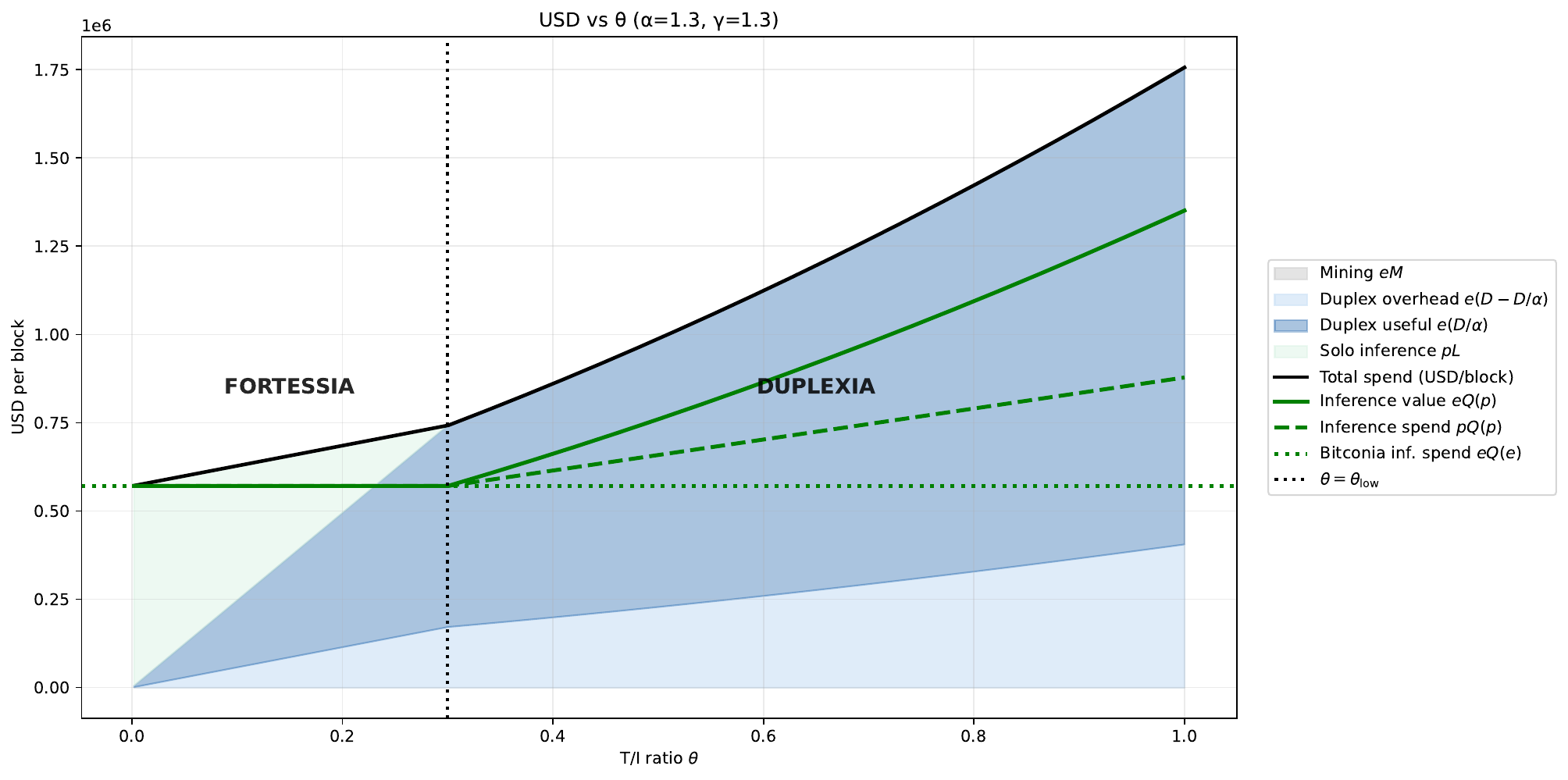}
\caption*{(a) Inference spend $pQ$ vs.\ Bitconia baseline.}

\vspace{1.0em}

\includegraphics[height=0.27\textheight]{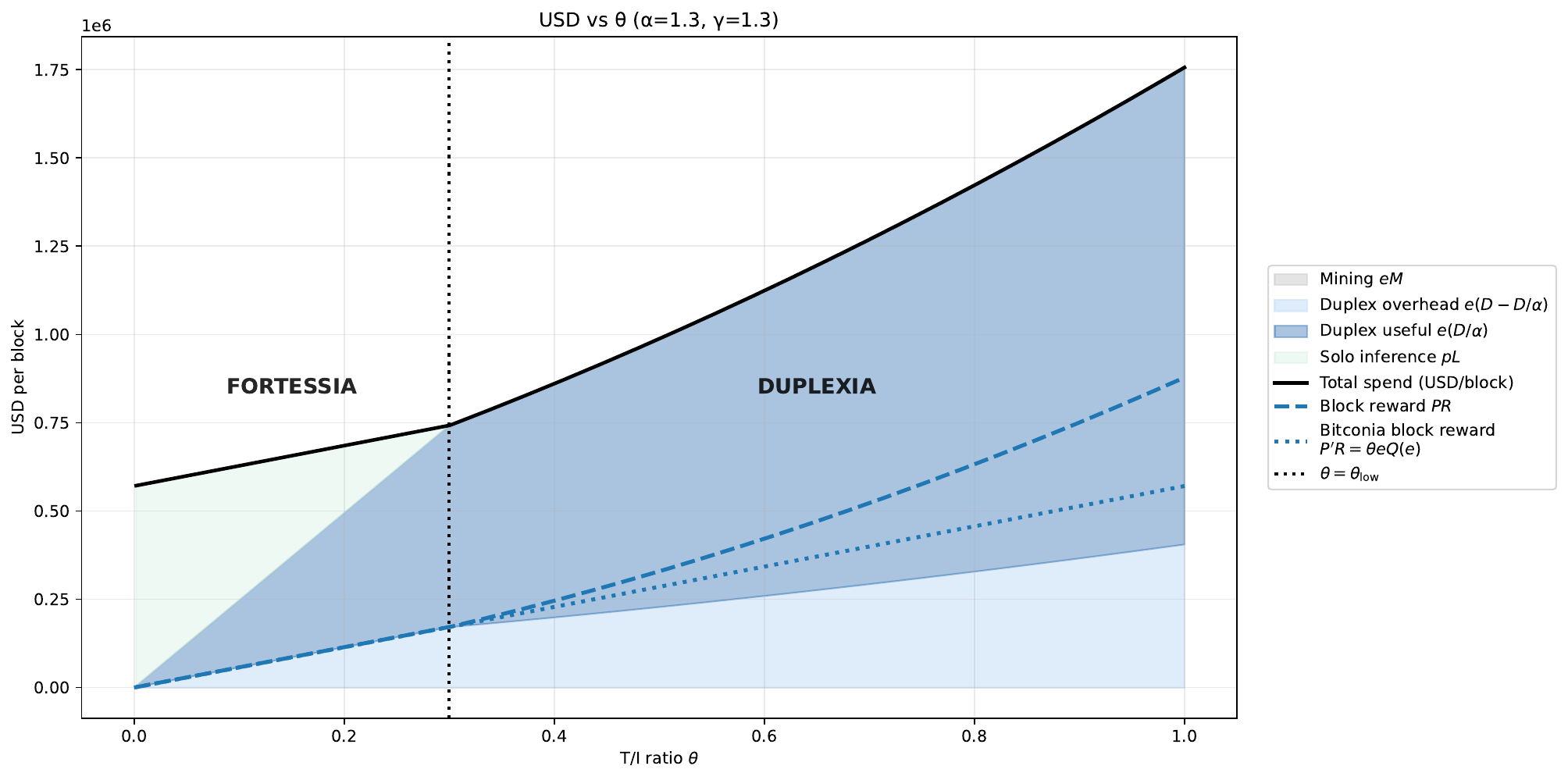}
\caption*{(b) Block reward $PR$ vs.\ Bitconia baseline.}

\vspace{1.0em}

\includegraphics[height=0.27\textheight]{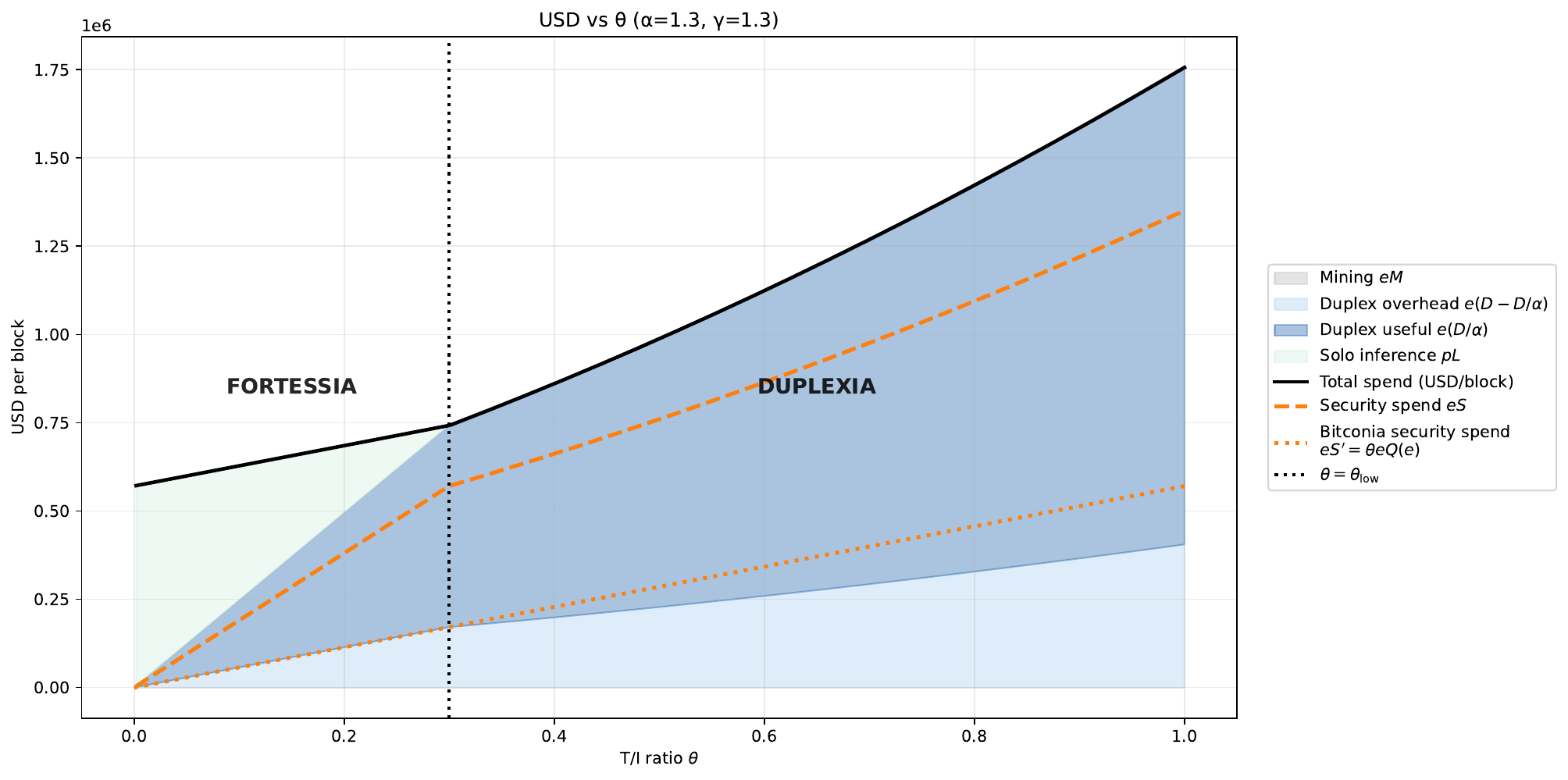}
\caption*{(c) Security spend $eS$ vs.\ Bitconia baseline.}

\caption{Varying T/I rate $\theta$ at fixed $\alpha=1.3$: each panel isolates one economic quantity and compares it a Bitcoin-style baseline.}
\label{fig:theta_triptych}

\end{figure}

We next examine how the T/I ratio $\theta$ affects the equilibrium, as illustrated in Figure \ref{fig:theta_triptych}.
In a Bitcoin-style system (Bitconia), increasing $\theta$ has no economic effect beyond raising the block reward. Higher rewards attract more mining and thus more security, but this additional expenditure does not translate into useful outcomes: the extra security comes entirely from additional wasted work.

\paragraph{Fortessia:} When duplex efficiency is moderately high and the system enters Fortessia, the effect of increasing $\theta$ is already more meaningful: As the T/I ratio increases, the cumulative hashrate (i.e., security) increases at a significantly higher rate than in Bitconia, strengthening the system's resistance to attack \emph{without increasing the amount of wasteful work} from Bitconia. That is, the security efficiency increases.

\paragraph{Duplexia:}
The most significant impact occurs once the system enters \emph{Duplexia}. 
In this regime, security continues to diverge from Bitconia, as in Fortessia, but now the divergence also appears in inference spending and block rewards.  In particular, while total inference spending increases by roughly a factor 
of $1.5$, the total \emph{value} of inference supplied—i.e., the cost of 
generating this inference at the compute cost $e$—increases by 
roughly a factor of $2$; this gap arises from the previously discussed 
subsidy on inference; see Figure~\ref{fig:rebate_low_gamma}.

To understand how elasticity affects the block 
reward $B = PR$, note that by the T/I ratio 
condition, $B = \theta \, p \, Q(p)$. In 
Bitconia, $p = e$ is independent of $\theta$, 
so $B^{\rm Bitconia} = \theta \, e \, Q(e)$ 
grows linearly in $\theta$. In Duplexia, by 
contrast, $p(\theta) = \frac{\alpha e}{1+\theta}$ 
depends on $\theta$, so both $p$ and $Q(p)$ 
vary with $\theta$, leading to a nonlinear 
relationship. To see this concretely, consider 
isoelastic demand $Q(p) = Kp^{-\varepsilon}$ 
for constants $K > 0$ and $\varepsilon > 1$. 
Under this demand function, $B = \theta \, K \, 
p^{1-\varepsilon}$, and substituting 
$p(\theta) = \frac{\alpha e}{1+\theta}$ gives
\[
B(\theta)
= \theta \cdot K \cdot \left(\frac{\alpha e}{1+\theta}
\right)^{1-\varepsilon}
= K(\alpha e)^{1-\varepsilon}\,\theta(1+\theta)^{\varepsilon-1}.
\]
When $\varepsilon = 2$, this becomes 
$B(\theta) = K(\alpha e)^{-1}\,\theta(1+\theta)$, 
which is quadratic in $\theta$, explaining the 
visible widening of the gap relative to the 
Bitconia baseline in 
Figures~\ref{fig:theta_triptych}b--\ref{fig:theta_triptych}c.

\subsection{Sensitivity to Demand Elasticity}
\label{sec:graph_elasticity}
\begin{figure}
\centering

\includegraphics[height=0.27\textheight]{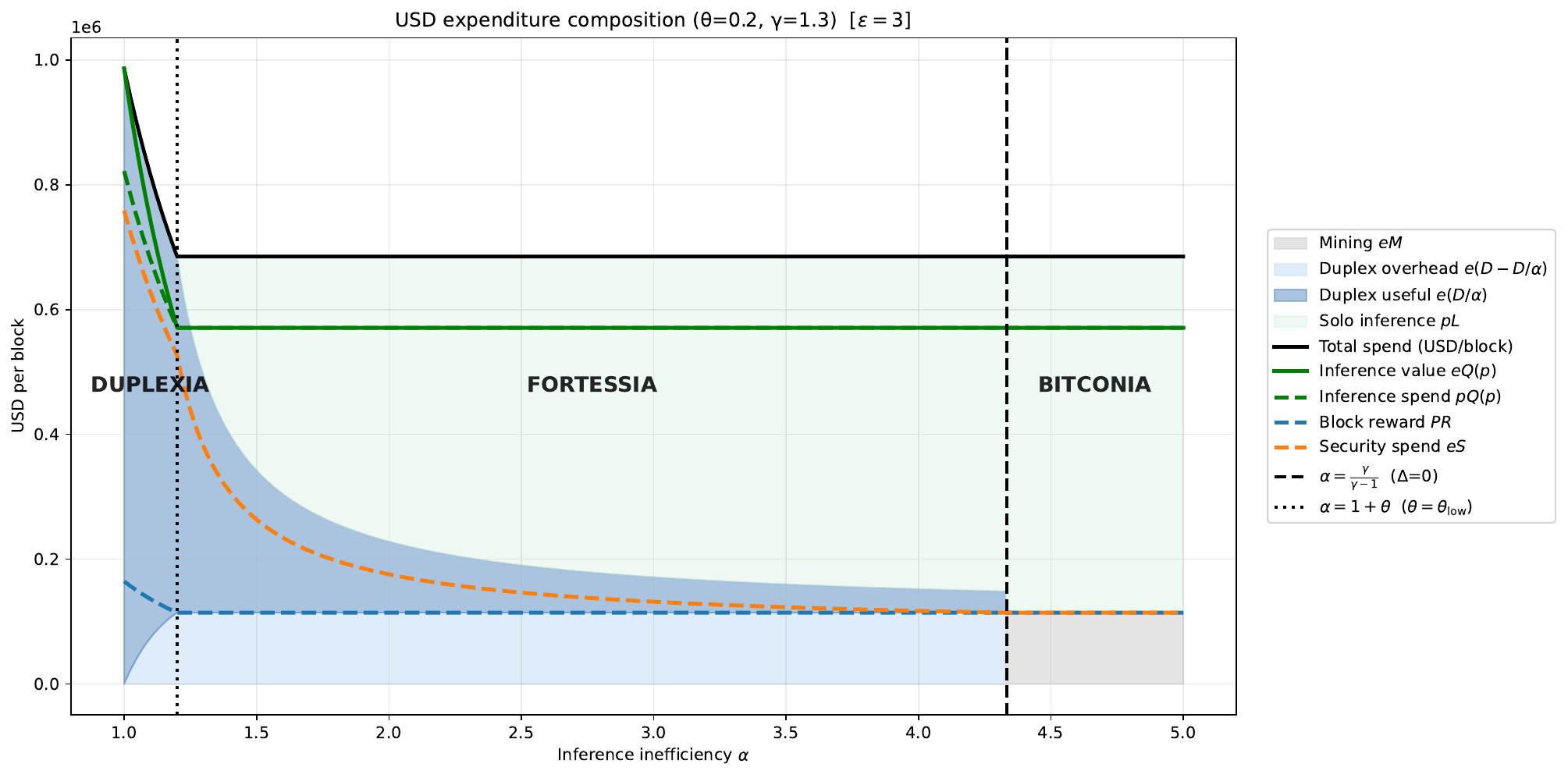}
\caption*{(a) USD spend composition with elasticity $\varepsilon=3$.}

\vspace{0.6em}

\includegraphics[height=0.27\textheight]{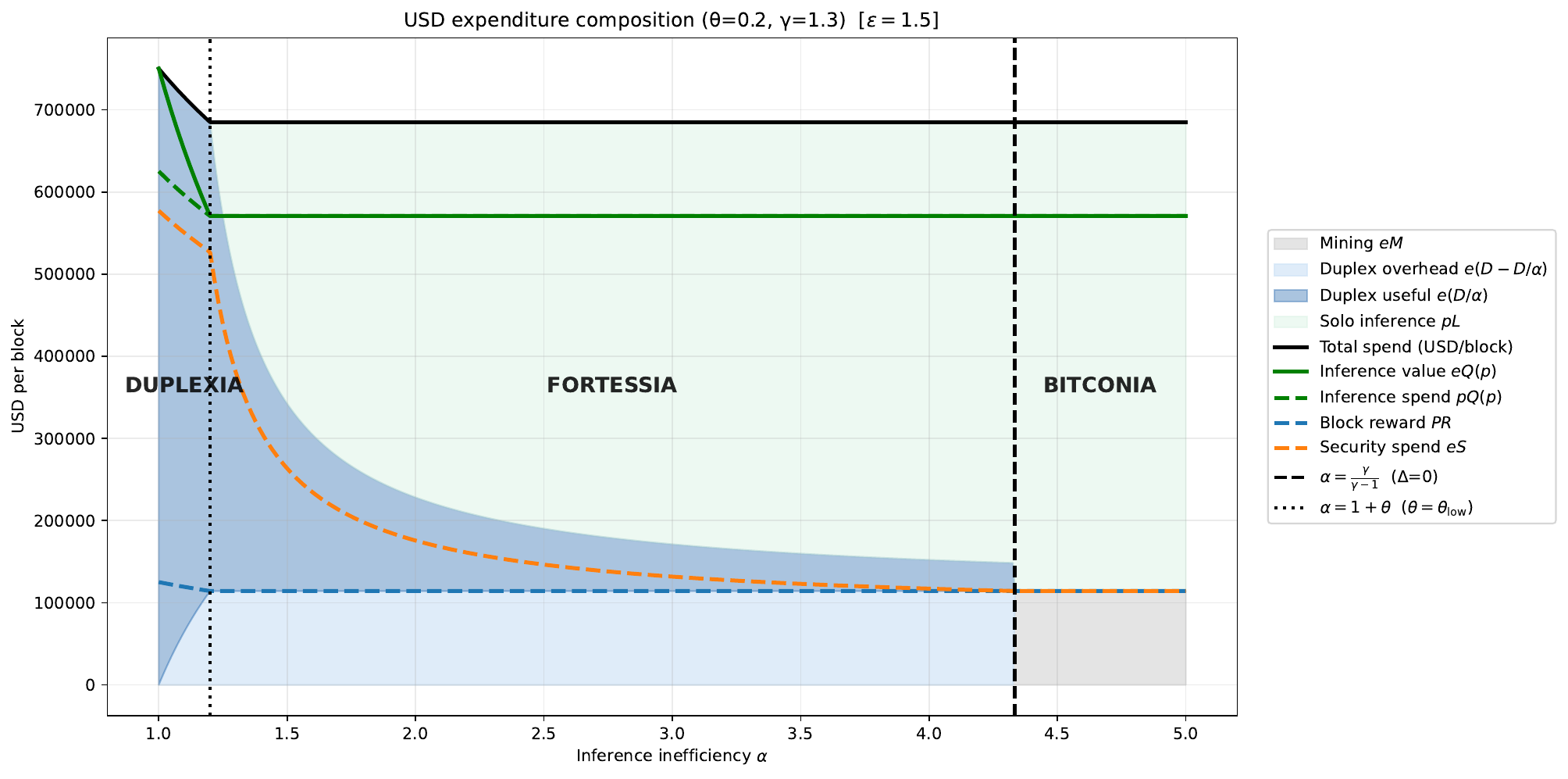}
\caption*{(b) USD spend composition with elasticity $\varepsilon=1.5$.}

\vspace{0.6em}

\includegraphics[height=0.27\textheight]{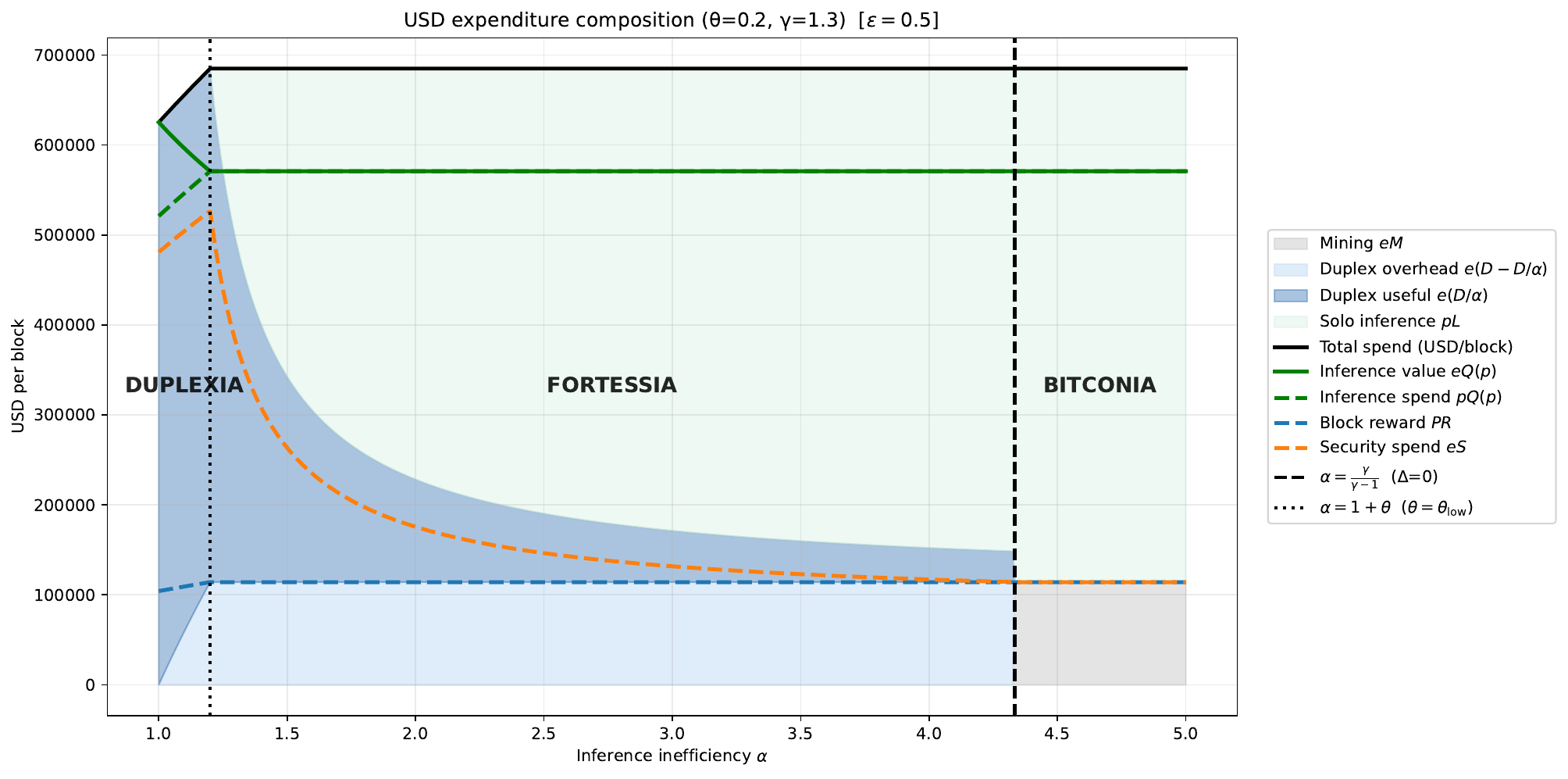}
\caption*{(c) USD spend composition with elasticity $\varepsilon=0.5$.}

\caption{Sensitivity of the USD spend composition to demand elasticity. Higher elasticity strengthens the amplification effect of PoUW. Inelastic demand leads to slightly lower inference spend, security and token values, but still increases inference value supplied.}
\label{fig:eps_sensitivity}
\end{figure}

We here examine how the effect of PoUW varies with the 
elasticity parameter $\varepsilon$, keeping all other parameters fixed at  $(\alpha, \gamma, \theta) = (1.3, 1.3, 0.2)$. Figure~\ref{fig:eps_sensitivity} plots the 
USD expenditure composition for $\varepsilon \in \{3, 1.5, 0.5\}$.

The role of elasticity is visible only Duplexia, where the block-reward subsidy lowers 
inference prices and expands inference supply. The key quantity is total 
inference spending $pQ(p)$: by Lemma~\ref{lem:elastic_spending}, this is 
decreasing in $p$ when $\varepsilon \geq 1$ and (by the proof of Lemma~\ref{lem:elastic_spending}) increasing in $p$ when 
$\varepsilon < 1$. Consequently, when demand is strictly elastic ($\varepsilon > 1$, 
panels (a) and (b)), lower inference prices in Duplexia expand total inference 
spending, which raises token value and security spend; and the amplification 
effect grows stronger with $\varepsilon$, as seen by the larger gap between 
inference value $eQ(p)$ and inference spend $pQ(p)$ in panel (a) relative to 
panel (b).

By contrast, when demand is \emph{inelastic} ($\varepsilon = 0.5$, panel (c)), lower 
inference prices reduce total inference spending $pQ(p)$, which in turn lowers 
token value and security spend relative to the elastic case. Nonetheless, even 
in this case, the inference value delivered $eQ(p)$---that is, the cost of 
generating the supplied inference at compute cost $e$---still 
increases in Duplexia relative to Bitconia, since more inference is supplied 
at a lower price. Thus the blockchain still expands the amount of ``socially 
useful computation" even under inelastic demand, though the token and security amplification effects are weaker.
Indeed, as noted in Remark~\ref{rem:inference_value_monotone}, this holds 
for any downward-sloping demand function, regardless of elasticity.

Taken together, these figures confirm that $\varepsilon > 1$ is the key condition driving the amplification of token value and security in Duplexia, 
while the expansion of useful computation is a more robust phenomenon that 
persists even under inelastic demand.
As mentioned, for AI inference---and more broadly for useful-computation applications where 
lower prices unlock new use cases, we expect demand to be elastic, consistent with the Jevons 
effect~\cite{jevons1865coal,saunders1992khazzoom,sorrell2009rebound}.
Thus, $\varepsilon >  1$ is likely the empirically relevant regime for inference 
markets.

\subsection{Social Value}
\label{sec:sv_graphs}

We now specialize Corollary~\ref{cor:sv_comparison} to isoelastic
demand $\mathcal{D}(p) = Kp^{-\varepsilon}$, $\varepsilon > 1$. The
inverse demand function is $v_{\mathcal{D}}(q) = (K/q)^{1/\varepsilon}$.
Set $Q_e := \mathcal{D}(e) = Ke^{-\varepsilon}$, so $K = Q_e
e^\varepsilon$ and
\[
  v_{\mathcal{D}}(q)
  = \left(\frac{Q_e e^\varepsilon}{q}\right)^{1/\varepsilon}
  = e\left(\frac{Q_e}{q}\right)^{1/\varepsilon}.
\]

\begin{corollary}[Welfare dominance under isoelastic demand]
\label{cor:sv_isoelastic}
Fix a frame $\Gamma$ with $\Delta > 0$, isoelastic demand
$\mathcal{D}(p) = Kp^{-\varepsilon}$ with $\varepsilon > 1$, and
$\theta \in (\theta_{\mathrm{low}}, \theta_{\mathrm{high}})$. Let $\pi$ be the
non-degenerate $\theta$-T/I ratio competitive equilibrium in $\Gamma$
with $L = 0$ (pure Duplexia), and let $\pi^{\mathrm{Bit}}$ be the
unique robust competitive equilibrium at T/I ratio $\theta$ in any
frame sharing the same $(R, \mathcal{D}, e, V_{\mathrm{chain}}, \theta)$
as $\Gamma$ but with $\Delta < 0$. Set $Q_e := \mathcal{D}(e)$ and
$\rho := (1+\theta)/\alpha > 1$. Then
\[
  \Delta SV^{\Gamma}(\pi) - \Delta SV(\pi^{\mathrm{Bit}})
  = \left(\rho^{\varepsilon-1}-1\right)
    \left(\frac{1}{\varepsilon-1} - \theta\right)
    eQ_e.
\]
Consequently,
\[
  \Delta SV^{\Gamma}(\pi) - \Delta SV(\pi^{\mathrm{Bit}})
  \begin{cases}
    > 0 & \text{if } \theta < \dfrac{1}{\varepsilon-1}, \\[4pt]
    = 0 & \text{if } \theta = \dfrac{1}{\varepsilon-1}, \\[4pt]
    < 0 & \text{if } \theta > \dfrac{1}{\varepsilon-1}.
  \end{cases}
\]
\end{corollary}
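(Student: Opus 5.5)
Start from Corollary~\ref{cor:sv_comparison}, which already gives
\[
\Delta SV^{\Gamma}(\pi) - \Delta SV(\pi^{\mathrm{Bit}})
= [\theta-(\alpha-1)]eQ_e - \int_{Q_e}^{\mathcal D(p)}\bigl(e-v_{\mathcal D}(q)\bigr)\,dq - e(\alpha-1)\bigl(\mathcal D(p)-Q_e\bigr),
\]
with $p=\alpha e/(1+\theta)=e/\rho$. The plan is to evaluate each term in closed form under isoelastic demand and then factor the result.

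\textbf{Closed forms.} Since $p=e/\rho$, we get $\mathcal D(p)=K p^{-\varepsilon}=Q_e\rho^{\varepsilon}$. Using $v_{\mathcal D}(q)=e(Q_e/q)^{1/\varepsilon}$ and $\varepsilon>1$, the integral of $v_{\mathcal D}$ is
\[
\int_{Q_e}^{Q_e\rho^\varepsilon} e\,Q_e^{1/\varepsilon}q^{-1/\varepsilon}\,dq
=\frac{e\varepsilon}{\varepsilon-1}\,Q_e\bigl(\rho^{\varepsilon-1}-1\bigr).
\]
Hence the deadweight loss equals $eQ_e(\rho^\varepsilon-1)-\frac{\varepsilon}{\varepsilon-1}eQ_e(\rho^{\varepsilon-1}-1)$. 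The overhead on the expansion is $e(\alpha-1)Q_e(\rho^\varepsilon-1)$.

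\textbf{Simplification.} Divide through by $eQ_e$. Combining the two terms that carry the factor $\rho^\varepsilon-1$ (coefficients $1$ and $\alpha-1$) gives $-\alpha(\rho^\varepsilon-1)$. The difference then becomes
\[
\theta+1-\alpha-\alpha\rho^{\varepsilon}+\alpha+\tfrac{\varepsilon}{\varepsilon-1}\bigl(\rho^{\varepsilon-1}-1\bigr).
\]
Next use $\alpha\rho=1+\theta$, so that $\alpha\rho^\varepsilon=(1+\theta)\rho^{\varepsilon-1}$. With this substitution the expression becomes
\[
(1+\theta)\bigl(1-\rho^{\varepsilon-1}\bigr)+\tfrac{\varepsilon}{\varepsilon-1}\bigl(\rho^{\varepsilon-1}-1\bigr)
=\bigl(\rho^{\varepsilon-1}-1\bigr)\Bigl(\tfrac{\varepsilon}{\varepsilon-1}-1-\theta\Bigr).
\]
Finally, $\tfrac{\varepsilon}{\varepsilon-1}-1=\tfrac{1}{\varepsilon-1}$, which yields the claimed factorization.

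\textbf{Sign.} The hypothesis $\theta>\theta_{\rm low}=\alpha-1$ gives $\rho>1$. Together with $\varepsilon>1$, this makes $\rho^{\varepsilon-1}-1>0$, and $eQ_e>0$. The sign of the difference is therefore the sign of $\frac{1}{\varepsilon-1}-\theta$, which gives the three cases.

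The main obstacle is the bookkeeping in the simplification step. Everything hinges on using $\alpha\rho=1+\theta$ to eliminate $\alpha$, and the $\alpha$ terms must cancel exactly. If the algebra does not collapse cleanly, I would recheck the deadweight-loss sign convention inherited from Corollary~\ref{cor:sv_comparison}.
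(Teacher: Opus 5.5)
Your proposal is correct and follows the paper's proof essentially step for step. You start from Corollary~\ref{cor:sv_comparison}, evaluate the deadweight loss and the expansion overhead in closed form via $\mathcal{D}(p)=Q_e\rho^{\varepsilon}$ and $v_{\mathcal{D}}(q)=e(Q_e/q)^{1/\varepsilon}$, factor using $\alpha\rho=1+\theta$, and read off the sign from $\rho>1$; the only cosmetic difference is that you substitute $\alpha\rho^{\varepsilon}=(1+\theta)\rho^{\varepsilon-1}$, whereas the paper substitutes $1+\theta=\alpha\rho$ and factors out $\alpha\rho$, and your algebra checks out.
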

\begin{proof}
By Corollary~\ref{cor:sv_comparison}:
\[
  \Delta SV^{\Gamma}(\pi) - \Delta SV(\pi^{\rm Bit})
  =
  \underbrace{
    \bigl[\theta - (\alpha-1)\bigr]eQ_e
  }_{\text{net saving on baseline}}
  -
  \underbrace{
    \int_{Q_e}^{Q_e\rho^\varepsilon}
    \bigl(e - v_{\mathcal{D}}(q)\bigr)dq
  }_{\text{deadweight loss on expansion}}
  -
  \underbrace{
    e(\alpha-1)Q_e(\rho^\varepsilon-1)
  }_{\text{overhead on expansion}}.
\]
Using $v_{\mathcal{D}}(q) = e(Q_e/q)^{1/\varepsilon}$, the deadweight
loss evaluates to:
\begin{align*}
  \int_{Q_e}^{Q_e\rho^\varepsilon}
  \bigl(e-v_{\mathcal{D}}(q)\bigr)dq
  &= \int_{Q_e}^{Q_e\rho^\varepsilon}
    \left(e - e\left(\frac{Q_e}{q}\right)^{1/\varepsilon}\right)dq \\
  &= e\left[q -
    \frac{Q_e^{1/\varepsilon}\,q^{1-1/\varepsilon}}{1-1/\varepsilon}
    \right]_{Q_e}^{Q_e\rho^\varepsilon} \\
  &= eQ_e\left[\rho^\varepsilon - 1 -
    \frac{\varepsilon}{\varepsilon-1}
    \bigl(\rho^{\varepsilon-1}-1\bigr)\right].
\end{align*}
The welfare difference divided by $eQ_e$ therefore equals:
\begin{align*}
  &\phantom{{}={}}
  [\theta-(\alpha-1)]
  - \left[\rho^\varepsilon - 1 -
    \frac{\varepsilon}{\varepsilon-1}(\rho^{\varepsilon-1}-1)\right]
  - (\alpha-1)(\rho^\varepsilon-1) \\
  &= \theta - \alpha + 1
  - \rho^\varepsilon + 1
  + \frac{\varepsilon}{\varepsilon-1}(\rho^{\varepsilon-1}-1)
  - \alpha\rho^\varepsilon + \rho^\varepsilon + \alpha - 1 \\
  &= \theta + 1
  - \alpha\rho^\varepsilon
  + \frac{\varepsilon}{\varepsilon-1}(\rho^{\varepsilon-1}-1)
  \qquad[\pm\alpha,\ \pm 1,\ \pm\rho^\varepsilon \text{ cancel}] \\
  &= \alpha\rho - \alpha\rho^\varepsilon
  + \frac{\varepsilon}{\varepsilon-1}(\rho^{\varepsilon-1}-1)
  \qquad[1+\theta = \alpha\rho] \\
  &= \alpha\rho(1-\rho^{\varepsilon-1})
  + \frac{\varepsilon}{\varepsilon-1}(\rho^{\varepsilon-1}-1) \\
  &= (\rho^{\varepsilon-1}-1)
    \left(\frac{\varepsilon}{\varepsilon-1} - \alpha\rho\right) \\
  &= (\rho^{\varepsilon-1}-1)
    \left(\frac{1}{\varepsilon-1} - \theta\right),
\end{align*}
where the last step uses $\alpha\rho = 1+\theta$. 
Since $\rho > 1$ (as $\theta > \alpha-1 = \theta_{\rm low}$ by
hypothesis) and $\varepsilon > 1$, we have $\rho^{\varepsilon-1}
> 1$, so the first factor is always positive and the sign is
determined by $\frac{1}{\varepsilon-1} - \theta$.
\end{proof}

The threshold $\theta^* := \frac{1}{\varepsilon-1}$ has a natural
interpretation: more elastic demand
means a larger inference expansion, amplifying both the overhead and
the deadweight loss on expanded compute, so Duplexia dominates over
a smaller range of $\theta$. For $\varepsilon = 2$, $\theta^* = 1$ (i.e., Duplexia dominates whenever inference market revenue exceeds
block reward revenue).

Note also that the threshold $\theta^*$ is independent of $\alpha$:
the duplex technology parameter affects the magnitude of the welfare
difference but not the range of $\theta$ over which Duplexia
dominates Bitconia.

Figure~\ref{fig:welfare_decomp} illustrates the decomposition for
$(\alpha,\gamma,\varepsilon) = (1.3,1.3,2)$. The net saving on
baseline (blue) grows linearly in $\theta$, while the overhead on
expansion (orange) and deadweight loss (green) both start at zero at
$\theta_{\rm low} = \alpha-1 = 0.3$ and grow as the inference
expansion widens. The deadweight loss is visibly second order near
$\theta_{\rm low}$. The net gain over Bitconia (black) is positive
for $\theta \in (\theta_{\rm low}, 1)$ and vanishes at $\theta = 1$,
consistent with Corollary~\ref{cor:sv_isoelastic}.

\begin{figure}[h]
\centering
\includegraphics[width=\textwidth]{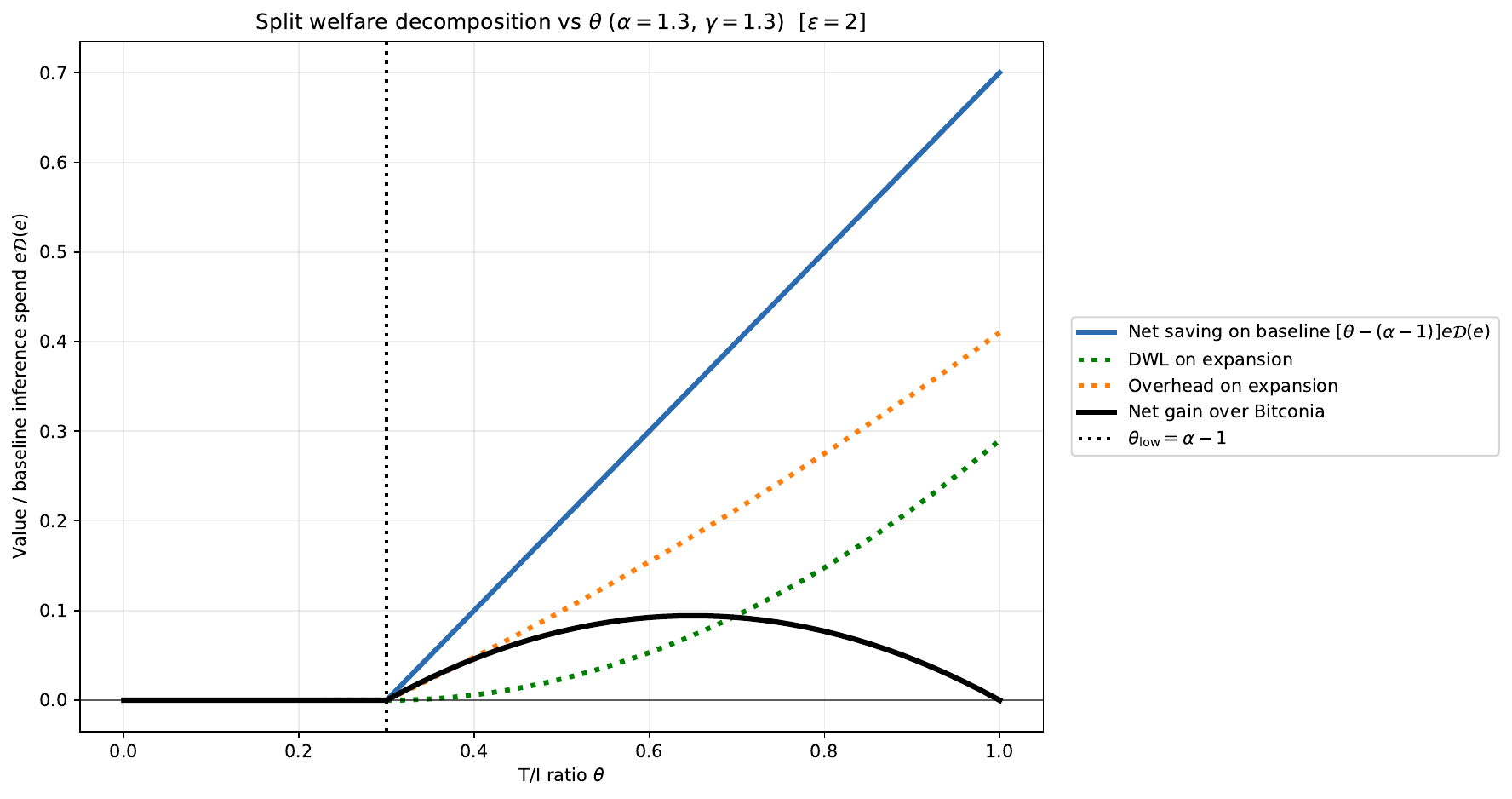}
\caption{Welfare decomposition of Corollary~\ref{cor:sv_comparison}
for $(\alpha,\gamma,\varepsilon) = (1.3,1.3,2)$, normalized by
$e\mathcal{D}(e)$. The net saving on baseline compute (blue solid)
grows linearly in $\theta$. The overhead on expansion (orange dotted)
and deadweight loss on expansion (green dotted) both start at zero
at $\theta_{\rm low} = \alpha-1 = 0.3$ (dotted vertical line) and
grow as the inference expansion widens. The deadweight loss is
visibly second order near $\theta_{\rm low}$. The net gain over
Bitconia (black solid) is positive for $\theta \in (\theta_{\rm
low}, 1)$ and vanishes at $\theta = 1$, consistent with
Corollary~\ref{cor:sv_isoelastic}.}
\label{fig:welfare_decomp}
\end{figure}

\section{Equilibria Regime Classifications without Fixed T/I ratio}
Theorem~\ref{thm:MLI_characterization} allows us to compare the effect of improved technologies while holding the T/I ratio fixed, enabling an ``apples-to-apples'' comparison between PoUW equilibria and the classical PoW benchmark (Bitconia).

We now use Theorem~\ref{thm:MLI_characterization} 
to study the structure of equilibria \emph{without 
fixing the T/I ratio}, restricting attention to 
robust equilibria. Without fixing $\theta$, the 
token price $P$ is a free variable: for any $P > 0$, 
there exists a competitive equilibrium consistent 
with that price, since nothing pins down the 
relationship between the security market and the 
inference market. As a result, absolute security 
levels $S$ cannot be meaningfully compared across 
different systems: a PoW system could exhibit 
higher $S$ than a PoUW system simply because its 
token trades at a higher price, not because of any 
fundamental efficiency advantage in converting 
compute into security. We therefore focus on 
\emph{security efficiency}, measured by comparing 
security $S$ to wasted compute $W$ (recall from 
Section~\ref{secefficiency.sec} that 
$W := M + D(1 - 1/\alpha)$ denotes the amount of 
computation that does not contribute to useful 
inference). By Proposition~\ref{prop:SI_duplex}, 
$S - W = D\Delta$, so whenever $\Delta > 0$ and 
duplex is active, security strictly exceeds wasted 
compute ($S > W$).

The key message of the following theorem is that 
the qualitative structure of the three 
regimes---Bitconia, Fortessia, and Duplexia---is 
not an artifact of fixing $\theta$. Even without 
fixing the T/I ratio, the same qualitative phenomena 
persist in any robust competitive equilibrium: in 
Bitconia the inference and security markets are 
fully decoupled; in Fortessia security \emph{efficiency} 
is strictly improved relative to Bitconia; and in 
Duplexia inference is subsidized ($p < e$), useful 
computation expands ($Q \geq \mathcal{D}(e)$), and 
security efficiency is again strictly improved. The 
T/I ratio determines the \emph{quantitative} 
magnitude of these effects, but not their 
\emph{qualitative} nature.

\begin{theorem}[Robust equilibrium regimes and security efficiency]\label{thm:robust_regimes_SI}
Fix primitives $(e,\alpha,\gamma,R)$ with
$e>0$, $\alpha\ge1$, $\gamma>1$, and $\mathcal D(e)>0$ such that  $\Delta\neq 0$.
In any non-degenerate robust competitive equilibrium $(M,D,L,S,Q,p,P)$, exactly one of the following holds.
\begin{enumerate}
\item \textbf{Case $\Delta<0$: Bitconia ($M+L$).} $M>0,\quad L>0,\quad D=0$ and
\[
p=e,\quad Q=\mathcal D(e),\quad S=M =W,\quad P=\frac{eS}{R}.
\]
\item \textbf{Case $\Delta>0$:} exactly one of the following holds.
\begin{enumerate}
\item[(a)] \textbf{Fortessia ($D{+}L$).} $M=0,\quad D>0,\quad L>0$ and
\[
p=e,\quad Q=\mathcal D(e),\quad S=\frac{D}{\gamma}>W,\quad
P=\frac{\gamma S}{R}\,e\Bigl(1-\frac1\alpha\Bigr).
\]

\item[(b)] \textbf{Duplexia ($L=0$).} $D>0,\quad L=0$, and exactly one of the following subcases holds.
\begin{enumerate}
\item[(b1)] \textbf{Pure Duplexia ($D$ only).} $M=0$ and
\[
p\in\Bigl(\alpha e\Bigl(1-\frac1\gamma\Bigr),\,e\Bigr),\qquad
P=\frac{\gamma S}{R}\Bigl(e-\frac{p}{\alpha}\Bigr)\;<\;\frac{eS}{R}.
\]

\item[(b2)] \textbf{Mixed Duplexia ($M{+}D$).} $M>0$ and
\[
p=\alpha e\Bigl(1-\frac1\gamma\Bigr),\qquad
P=\frac{eS}{R}.
\]
In both subcases,
\[
p<e,\qquad S > W,\qquad Q=\mathcal D(p) \geq D(e).
\]
\end{enumerate}
\end{enumerate}
\end{enumerate}
\end{theorem}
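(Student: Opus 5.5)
The plan is to reduce to Theorem~\ref{thm:MLI_characterization} by observing that every non-degenerate robust equilibrium has a well-defined T/I ratio $\theta = PR/(pQ)$. Non-degeneracy gives $Q>0$ and $S>0$. Then $P>0$: if $P=0$, every active security activity would lose money. Concretely, $\pi_M = -e$ forces $M=0$, so $S>0$ forces $D>0$ and hence $p = \alpha e > 0$. That in turn gives $\pi_L>0$ when $\alpha>1$, a contradiction; when $\alpha=1$, $\Delta>0$ and the zero-profit conditions again fail. We also need $p>0$. If $L>0$, then $p=e$. If $L=0$, then $D>0$, and I would argue $p>0$ directly from $Q=\mathcal D(p)$ and zero profits, or else treat $p=0$ as excluded, as the paper implicitly does. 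Hence the equilibrium is a $\theta$-T/I ratio equilibrium for some $\theta>0$.

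Since it is robust, the boundary values $\theta\in\{\theta_{\rm low},\theta_{\rm high}\}$ are excluded, because the theorem shows the equilibria there are non-robust. Theorem~\ref{thm:MLI_characterization}, with $\Delta\neq0$, then identifies the activity pattern: $(M{+}L)$ if $\Delta<0$, and one of $(D{+}L)$, $(D)$, $(M{+}D)$ if $\Delta>0$. The case analysis in its proof already showed that $(M{+}L)$ is robust only when $\Delta<0$ and the other three only when $\Delta>0$. This yields the "exactly one" structure.

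Next, read off the listed properties in each case, using Proposition~\ref{prop:price_pinning} for the prices. When $L>0$, we get $p=e$ and $Q=\mathcal D(e)$. When $M>0$, we get $P=eS/R$. When $D>0$, we get $P=\frac{\gamma S}{R}(e-p/\alpha)$; in Fortessia, substituting $p=e$ gives the stated formula.

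For pure Duplexia, I take $p\in(\alpha e(1-1/\gamma),e)$ from the two strict free-entry inequalities.
\begin{itemize}
\item $\pi_L<0$ gives $p<e$.
\item $\pi_M<0$ with $PR/S=\gamma(e-p/\alpha)$ gives $\gamma(e-p/\alpha)<e$, which is exactly $p>\alpha e(1-1/\gamma)$. The same inequality gives $P<eS/R$.
\end{itemize}

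For mixed Duplexia, $p$ comes from combining the two zero-profit conditions. Here $p<e$ follows from $\pi_L = -e\alpha\Delta<0$, or from Proposition~\ref{lem:robust_subsidy} in general. Then $Q=\mathcal D(p)\ge\mathcal D(e)$ by monotonicity of $\mathcal D$.

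The security-efficiency claims come from Proposition~\ref{prop:SI_duplex}. In Bitconia $D=0$, so $S=W=M$. In every $\Delta>0$ case $D>0$, so $S-W=D\Delta>0$.

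The main obstacle is the first step, namely showing $P>0$ and $p>0$ so that $\theta$ is well defined and positive; the rest is bookkeeping from the earlier case analysis. If ruling out $p=0$ proves delicate, I would bypass $\theta$ altogether. In that approach, I enumerate the four activity patterns directly, exclude triple activity via Corollary~\ref{cor:no_interior}, and repeat the free-entry computations from the proof of Theorem~\ref{thm:MLI_characterization}. Those computations ($\pi_D=e\Delta$, $\pi_M=-e\gamma\Delta$, $\pi_L=-e\alpha\Delta$) never used the T/I condition, except in the $(D)$ case, where only the two strict inequalities above are needed.
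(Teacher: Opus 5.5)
Your reduction to Theorem~\ref{thm:MLI_characterization} breaks at the step where you claim $P>0$. For $\alpha>1$ your argument is fine: $P=0$ forces $M=0$, hence $D>0$ and $p=\alpha e>e$, so $\pi_L>0$. For $\alpha=1$, however, nothing fails. Take $P=0$, $p=e$, $M=0$, any $D\in(0,\mathcal D(e))$ and $L=\mathcal D(e)-D$. Then $\pi_M=-e<0$, $\pi_D=e/\alpha-e=0$ and $\pi_L=0$, while $S=D/\gamma>0$ and $Q=\mathcal D(e)>0$. So this is a non-degenerate robust competitive equilibrium; only $L=0$ would violate robustness, since $\pi_L=0$. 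Its T/I ratio is $\theta=0$, which lies outside the range $\theta>0$ covered by Theorem~\ref{thm:MLI_characterization}, so your main route never classifies it. The statement still holds for it: it is Fortessia, with $P=\frac{\gamma S}{R}e(1-\frac1\alpha)=0$ and $W=D(1-\frac1\alpha)=0<S$. But it has to be treated separately, and the paper's proof does exactly that before invoking the $\theta$-characterization for $P>0$.

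Your handling of $p>0$ is also not yet an argument, since $Q=\mathcal D(p)$ and zero profits alone do not exclude $p=0$. The missing ingredient is free entry for mining together with $\gamma>1$. If $p=0$, then $\pi_L=-e$ forces $L=0$, so $D>0$. Then $\pi_D=0$ gives $PR/S=\gamma e>e$, contradicting $\pi_M\le 0$.

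Your fallback route sidesteps both issues. It enumerates the activity patterns directly, excludes triple activity, and uses $\pi_D=e\Delta$, $\pi_M=-e\gamma\Delta$ and $\pi_L=-e\alpha\Delta$. In the $(D)$ pattern it uses the two strict inequalities $\alpha e(1-\frac1\gamma)<p<e$, which themselves force $\alpha(1-\frac1\gamma)<1$, i.e.\ $\Delta>0$; you should state this explicitly. This route never divides by $pQ$. It covers the $P=0$ equilibria automatically, because $\pi_M=-e\gamma\Delta=-e$ still holds at $\alpha=1$. It also yields $p>0$ in every pattern. Promoting it to the main argument gives a complete proof, slightly more self-contained than the paper's.
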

\begin{proof}
Let \((M,D,L,S,Q,p,P)\) be a non-degenerate robust competitive
equilibrium with \(\Delta\neq 0\).
First, let us argue that \(p>0\). If \(p=0\), then \(\pi_L=-e<0\), so \(L=0\). Since
\(Q>0\), market clearing gives \(D>0\). Zero profit for duplex then
gives \(PR/(\gamma S)=e\), hence \(PR/S=\gamma e>e\), contradicting
\(\pi_M\le 0\).

Next consider the case \(P=0\).\footnote{This is a degenerate special case that can only arise when
$\alpha = 1$; we include it for completeness.} Then \(PR=0\), so \(\pi_M=-e<0\) and
therefore \(M=0\). Since \(S>0\), we must have \(D>0\). Zero profit for
duplex gives \(p=\alpha e\). Free entry for solo inference gives
\(\pi_L=p-e\le0\), so \(\alpha=1\) and \(p=e\). Robustness then implies
\(L>0\), since if \(L=0\) \(\pi_L=p-e = 0\) contradicts robustness. Thus the equilibrium is
\((D+L)\), with
\[
        Q=\mathcal D(e),\qquad
        S=\frac{D}{\gamma},\qquad
        W=0<S,
\]
and
\[
        P=\frac{\gamma S}{R}e\left(1-\frac1\alpha\right)=0.
\]
Moreover \(\Delta=1/\gamma>0\). Hence this is exactly case~\((2a)\),
Fortessia.

It remains to consider \(P>0\). Since \(p>0\) and \(Q>0\), the T/I ratio
\[
        \theta:=\frac{PR}{pQ}
\]
is well-defined and strictly positive. Since the equilibrium is robust and
\(\Delta\neq0\), the parameters are away from the knife-edge cases of
Theorem~\ref{thm:MLI_characterization}. Thus
Theorem~\ref{thm:MLI_characterization} applies.

\emph{Case \(\Delta<0\): Bitconia.}
By Theorem~\ref{thm:MLI_characterization}, the unique robust regime is
\((M+L)\). Hence \(M,L>0\), \(D=0\), and \(p=e\). Market clearing gives
\(Q=\mathcal D(e)\). Since \(D=0\), we have \(S=M=W\), and since
\(M>0\), Proposition~\ref{prop:price_pinning} gives \(P=eS/R\). This is
case~\((1)\).

\emph{Case \(\Delta>0\): Fortessia.}
If the equilibrium is in the \((D+L)\) regime, then \(M=0\), \(D,L>0\),
and \(p=e\). Thus \(Q=\mathcal D(e)\) and \(S=D/\gamma\). Since \(D>0\),
Proposition~\ref{prop:price_pinning} gives
\[
        P=\frac{\gamma S}{R}e\left(1-\frac1\alpha\right).
\]
Moreover, since \(D>0\) and \(\Delta>0\), Proposition~\ref{prop:SI_duplex}
gives \(S-W=D\Delta>0\). This is case~\((2a)\).

\emph{Case \(\Delta>0\): pure Duplexia.}
If the equilibrium is in the \(D\)-only regime, then \(M=L=0\), \(D>0\),
and
\[
        p=\frac{\alpha e}{1+\theta}
        \in
        \left(\alpha e\left(1-\frac1\gamma\right),e\right).
\]
By Proposition~\ref{prop:price_pinning},
\[
        P=\frac{\gamma S}{R}\left(e-\frac p\alpha\right).
\]
The lower bound \(p>\alpha e(1-1/\gamma)\) implies
\[
        P<\frac{eS}{R}.
\]
Also \(p<e\), so \(Q=\mathcal D(p)\ge \mathcal D(e)\), with strict
inequality if demand is strictly decreasing on \([p,e]\). Finally,
\(S-W=D\Delta>0\). This is case~\((2b1)\).

\emph{Case \(\Delta>0\): mixed Duplexia.}
If the equilibrium is in the \((M+D)\) regime, then \(M,D>0\), \(L=0\),
and
\[
        p=\alpha e\left(1-\frac1\gamma\right)<e.
\]
Since \(M>0\), Proposition~\ref{prop:price_pinning} gives \(P=eS/R\).
Also \(Q=\mathcal D(p)\ge \mathcal D(e)\), with strict inequality if
demand is strictly decreasing on \([p,e]\). Finally, since \(D>0\) and
\(\Delta>0\), \(S-W=D\Delta>0\). This is case~\((2b2)\).
\end{proof}
\section{Heterogeneous Compute Costs}
\label{sec:heterogeneous_costs}

So far our model assumed a fixed per-operation compute cost $e$.
A more realistic
specification allows heterogeneous compute providers.\footnote{We are grateful to Noam Nisan for suggesting extending the model to consider heterogeneous supply.} We here observe that Theorems \ref{thm:MLI_characterization}  and \ref{thm:attack_cost} extend also to this setting.

\paragraph{Competitive Equilibria with Supply 
Curves for Compute.} 
We model heterogeneous compute providers via an 
\emph{upward-sloping supply curve}, captured by 
a weakly increasing marginal-cost function 
$c : [0,\infty) \to [0,\infty)$, where $c(T)$ 
denotes the marginal dollar cost of compute at 
total usage level $T := M + D + L$ (total 
GPU-operations per period); $c(\cdot)$ is the 
so-called \emph{inverse supply function} for 
compute, mapping total quantity to marginal cost.

\begin{definition}[Supply function]\label{def:supply}
A supply function is a map $c:[0,\infty)\to[0,\infty)$ that is continuous 
and weakly increasing.
\end{definition}

In the heterogeneous cost setting, the fixed 
cost $e$ appearing in the per-operation profit 
functions~\eqref{eq:profits} is replaced by an 
endogenous marginal cost $e^*$, determined in 
equilibrium by the supply curve: $e^* = c(T)$, 
where $T = M + D + L$ is the total compute 
usage. This is captured formally in 
Definition~\ref{def:competitive_equilibrium_supply}:

\begin{definition}[Competitive equilibrium with supply curve]
\label{def:competitive_equilibrium_supply}
Fix primitives $(\alpha, \gamma, R, \mathcal{D}, c)$ with $\alpha \geq 1$, 
$\gamma > 1$, $R > 0$, demand function $\mathcal{D}$, and supply function 
$c$. A \emph{competitive equilibrium with supply curve} is a tuple 
$(M, D, L, S, Q, p, P)$ satisfying all conditions of 
Definition~\ref{def:CE}, with the fixed cost $e$ replaced by the 
endogenous marginal cost
\[
e^* := c(M + D + L),
\]
i.e., the marginal cost of compute is determined by the supply curve 
evaluated at total compute usage $T = M + D + L$. A $\theta$-T/I ratio 
competitive equilibrium with supply curve is defined analogously, 
additionally requiring $\theta = PR/pQ$.
\end{definition}

We now establish an analogue of 
Theorem~\ref{thm:MLI_characterization} for the 
heterogeneous cost setting. The result shows that 
existence and uniqueness of equilibrium extend 
to this setting, and that the equilibrium is 
characterized by the same regime structure as in 
the fixed-cost case, with the fixed cost $e$ 
replaced by the endogenous marginal cost $e^*$.

\begin{theorem}[Equilibrium with heterogeneous compute costs]
\label{thm:heterogeneous}
Fix primitives $(\alpha, \gamma, R, \mathcal{D}, c)$ with $\alpha \geq 1$,
$\gamma > 1$, $R > 0$, demand function $\mathcal{D}:[0,\infty)\to[0,\infty)$ that is
continuous, weakly decreasing, with $\mathcal{D}(e) > 0$ for
all $e \geq 0$, and supply function $c$ such that $c(T) > 0$ for all $T > 0$.
Then for every $\theta > 0$, away from the knife-edge cases that
$\Delta = 0$, or $\Delta > 0$ but $\theta \in \{\theta_{\rm low},
\theta_{\rm high}\}$, there exists a unique non-degenerate $\theta$-T/I
ratio competitive equilibrium with supply curve.
\end{theorem}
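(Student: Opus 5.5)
The plan is to reduce to Theorem~\ref{thm:MLI_characterization} by treating the marginal cost $e^*$ as an unknown scalar and then solving one fixed-point equation for it. For any candidate value $e^*>0$, consider the fixed-cost economy with $e=e^*$. The regime conditions in Theorem~\ref{thm:MLI_characterization} depend only on $(\alpha,\gamma,\theta)$ through $\Delta$, $\theta_{\rm low}$ and $\theta_{\rm high}$, not on $e$. Hence, away from the stated knife edges, the activity pattern is the same for every $e^*$. So for each $e^*$ there is a unique non-degenerate $\theta$-T/I equilibrium, and it depends on $e^*$ only through the closed forms of Table~\ref{tab:MLI_allocations}. Conversely, any equilibrium with supply curve is a fixed-cost equilibrium at $e=e^*=c(T)$. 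It must therefore coincide with the one produced by the theorem. The whole question then reduces to the scalar equation $e^*=c(T(e^*))$, where $T(e^*)=M+D+L$ is total compute in that equilibrium.

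Next I would compute $T(e^*)$. By Lemma~\ref{lem:zero_profit_identity},
\[
e^*T = pQ+PR = (1+\theta)\,p\,\mathcal D(p).
\]
In every regime the price is linear in the cost, $p=k\,e^*$, with $k\in\{1,\ \alpha/(1+\theta),\ \alpha(1-1/\gamma)\}$ a regime constant. This gives
\[
T(e^*)=(1+\theta)\,k\,\mathcal D(k e^*).
\]
Since $\mathcal D$ is continuous and weakly decreasing, $T$ is continuous and weakly decreasing in $e^*$. Moreover $T(e^*)>0$ for every $e^*$, because $\mathcal D>0$ everywhere by hypothesis.

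Existence follows from an intermediate value argument. Define
\[
\phi(e^*)=e^*-c(T(e^*)).
\]
This function is continuous and strictly increasing, being $e^*$ minus a weakly decreasing function; indeed $c$ is weakly increasing and $T$ is weakly decreasing. As $e^*\to\infty$, we have $c(T(e^*))\le c(T(1))$ for $e^*\ge 1$, so $\phi\to+\infty$. As $e^*\downarrow 0$, we get $\phi(e^*)\to -\lim c(T(e^*))$. Here $T(e^*)$ increases to the finite limit $(1+\theta)k\mathcal D(0)>0$, so this limit equals $-c\big((1+\theta)k\mathcal D(0)\big)<0$ by positivity of $c$ on $T>0$. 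Hence $\phi$ has a zero. Strict monotonicity then gives uniqueness of $e^*$, and with it uniqueness of the equilibrium by the reduction above.

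The main obstacle is the bookkeeping in the reduction rather than the fixed-point step itself. I must check that the equilibrium found at $e^*$ really satisfies Definition~\ref{def:competitive_equilibrium_supply}. Its non-degeneracy is automatic since $Q=\mathcal D(p)>0$. I must also handle the case $e^*$ near $0$ carefully, where $p\to0$; this is why I use the finiteness of $\mathcal D(0)$ from Definition~\ref{def:demand}. Finally, I must confirm that the regime is pinned independently of $e^*$, so that a single $k$ can be used throughout.
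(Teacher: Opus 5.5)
Your proposal is correct and follows the paper's proof essentially step for step. Both arguments compute $T(e)=(1+\theta)k\,\mathcal D(ke)$ from the zero-profit identity, observe that $e\mapsto e-c(T(e))$ is continuous and strictly monotone, and obtain its unique root by the intermediate value theorem, using the bound $T\le(1+\theta)k\,\mathcal D(0)$ at infinity and $c\bigl((1+\theta)k\,\mathcal D(0)\bigr)>0$ near zero. Your explicit converse step is a useful addition that the paper leaves implicit when it asserts uniqueness of the equilibrium rather than only of $e^*$: any supply-curve equilibrium is a fixed-cost equilibrium at $e^*=c(T)$, and $e^*>0$ because $T\ge S>0$, so Theorem~\ref{thm:MLI_characterization} applies.
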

\begin{proof}
We show that there exists a unique $e^* > 0$ 
satisfying the fixed point condition 
$e^* = c(T(e^*))$, where $T(e)$ denotes the 
total compute $M + D + L$ in the unique 
$\theta$-T/I ratio competitive equilibrium at 
marginal cost $e$ given by 
Theorem~\ref{thm:MLI_characterization}. Once 
$e^*$ is found, the unique non-degenerate 
$\theta$-T/I ratio competitive equilibrium with 
supply curve is given by 
Theorem~\ref{thm:MLI_characterization} applied 
at $e = e^*$, which by construction satisfies 
$e^* = c(M + D + L)$ as required by 
Definition~\ref{def:competitive_equilibrium_supply}.

\medskip
\noindent\textbf{Step 1: The total compute function $T(e)$.}
By Lemma~\ref{lem:zero_profit_identity},
\[
  e\cdot T(e) = pQ + PR.
\]
Since $\theta = PR/pQ$, we have $PR = \theta pQ$, and hence
\[
  e\cdot T(e) = pQ + \theta pQ = (1+\theta)pQ = (1+\theta)p\,\mathcal{D}(p),
\]
where we use market clearing $Q = \mathcal{D}(p)$. By
Theorem~\ref{thm:MLI_characterization}, in every regime the
equilibrium inference price satisfies $p = c_2 e$ for some constant
$c_2 > 0$ depending only on $(\alpha,\gamma,\theta)$. Substituting:
\[
  T(e) = \frac{(1+\theta)p}{e}\,\mathcal{D}(p)
       = (1+\theta)c_2\,\mathcal{D}(c_2 e)
       = c_1\,\mathcal{D}(c_2 e),
\]
where $c_1 := (1+\theta)c_2 > 0$. Since $\mathcal{D}$ is continuous
and weakly decreasing, $T(e)$ is continuous and weakly decreasing
in $e$. Strictly speaking, \(T(e)\) is defined by
Theorem~\ref{thm:MLI_characterization} only for \(e>0\). We slightly
abuse notation by using the formula
\[
T(e)=c_1\,\mathcal D(c_2e)
\]
to define its continuous extension at \(e=0\).

\medskip
\noindent\textbf{Step 2: Uniqueness of $e^*$.}
Define $G(e) := c(T(e)) - e$. Since $T(\cdot)$ 
and $c(\cdot)$ are both continuous, $G$ is 
continuous. Since $T(e)$ is weakly decreasing 
and $c(\cdot)$ is weakly increasing, the 
composition $c(T(e))$ is weakly decreasing, so 
$G$ is strictly decreasing. Hence $G(e) = 0$ 
admits at most one solution.

\medskip

\noindent\textbf{Step 3: Existence of $e^*$.}
Since $\mathcal{D}(\cdot)$ is weakly 
downward-sloping and takes values in 
$\mathbb{R}_+$, we have $\mathcal{D}(p) \leq 
\mathcal{D}(0)$ for all $p \geq 0$. Since in 
every regime $T(e) = c_1\mathcal{D}(c_2 e)$ 
for some constants $c_1, c_2 > 0$, it follows 
that
\[
0 \leq T(e) \leq c_1\mathcal{D}(0) \qquad 
\text{for all } e \geq 0.
\]
Hence $c(T(e)) \leq c(c_1\mathcal{D}(0))$ for 
all $e \geq 0$, and therefore
\[
\lim_{e \to \infty} G(e) 
= \lim_{e \to \infty} \bigl(c(T(e)) - e\bigr) 
\leq c\bigl(c_1\mathcal{D}(0)\bigr) - 
\lim_{e \to \infty} e = -\infty.
\]
Thus there exists $E > 0$ such that $G(E) < 0$. Since
$\mathcal{D}(0) > 0$, we have $T(0) = c_1\mathcal{D}(0) > 0$,
and hence $G(0) = c(T(0)) > 0$ by the assumption $c(T) > 0$
for all $T > 0$. The intermediate value theorem then implies
there exists $e^* \in (0, E)$ with $G(e^*) = 0$, i.e.,
$e^* = c(T(e^*))$, and $e^* > 0$ follows from $G(0) > 0$.
The resulting equilibrium is non-degenerate: since $e^* > 0$,
the equilibrium price satisfies $p^* = c_2 e^* > 0$, and hence
$Q = \mathcal{D}(p^*) > 0$ by the assumption $\mathcal{D}(e) > 0$
for all $e > 0$; and $S > 0$ since at least one of $M, D$ is
active in any equilibrium characterized by
Theorem~\ref{thm:MLI_characterization}.
\end{proof}

We now directly get the following corollary:
\begin{corollary}[Regime invariance]
\label{cor:regime_invariance}
Under the conditions of 
Theorem~\ref{thm:heterogeneous}, let $e^*$ 
denote the unique solution to the fixed point 
condition $e^* = c(T(e^*))$, where $T(e^*)$ 
is the total compute in the unique 
non-degenerate $\theta$-T/I ratio competitive 
equilibrium with homogeneous cost $e^*$. The 
unique non-degenerate $\theta$-T/I ratio 
competitive equilibrium with supply curve is 
then characterized by the same regime structure 
as in Theorem~\ref{thm:MLI_characterization}, 
with $e$ replaced by $e^*$ throughout. 
In particular:
\begin{enumerate}
\item The regime thresholds $\theta_{\rm low} 
:= \alpha - 1$ and $\theta_{\rm high} := 
\frac{1}{\gamma-1}$ are unchanged.
\item The equilibrium inference price and 
allocations are given by 
Theorem~\ref{thm:MLI_characterization} 
evaluated at $e = e^*$.
\end{enumerate}
\end{corollary}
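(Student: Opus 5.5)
The plan is to reduce Corollary~\ref{cor:regime_invariance} directly to the fixed-point construction in the proof of Theorem~\ref{thm:heterogeneous}, together with Theorem~\ref{thm:MLI_characterization} applied at the single cost level $e=e^*$. The key observation is that Definition~\ref{def:competitive_equilibrium_supply} is literally Definition~\ref{def:CE} with $e$ replaced by the number $e^*=c(M+D+L)$. Once the equilibrium's total compute is fixed, $e^*$ is just a positive constant. So any $\theta$-T/I ratio equilibrium with supply curve is also a $\theta$-T/I ratio competitive equilibrium with homogeneous cost $e^*$.

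First, let $\Pi$ be the unique non-degenerate $\theta$-T/I ratio equilibrium with supply curve guaranteed by Theorem~\ref{thm:heterogeneous}, and set $e^*:=c(M+D+L)$. I will check that $e^*>0$: non-degeneracy gives $S>0$, hence $M+D>0$ and $T>0$, so $c(T)>0$ by hypothesis. Then $\Pi$ satisfies the accounting, market-clearing, free-entry and zero-profit conditions of Definition~\ref{def:CE} with primitives $(e^*,\alpha,\gamma,R)$, and it also satisfies $\theta=PR/pQ$. By the uniqueness clause of Theorem~\ref{thm:MLI_characterization}, whose knife-edge exclusions depend only on $(\alpha,\gamma,\theta)$ and therefore coincide with those of Theorem~\ref{thm:heterogeneous}, $\Pi$ must be the equilibrium listed in Table~\ref{tab:MLI_allocations} with $e=e^*$.

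Consistency follows from this: $T(e^*)=M+D+L$ for this equilibrium, so $e^*=c(T(e^*))$. Step~2 of the proof of Theorem~\ref{thm:heterogeneous} showed that $G(e)=c(T(e))-e$ is strictly decreasing, so this fixed point is unique. Hence the $e^*$ in the statement is well defined and matches the one just constructed.

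The two enumerated claims then follow. Claim~1 holds because $\theta_{\rm low}=\alpha-1$ and $\theta_{\rm high}=1/(\gamma-1)$ do not involve $e$ at all. The regime selection in Theorem~\ref{thm:MLI_characterization} depends only on the sign of $\Delta$ and on where $\theta$ sits relative to these thresholds, so the same regime is selected. Claim~2 is exactly the identification made in the first step.

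I expect the only delicate point to be justifying that an equilibrium with supply curve really is a homogeneous-cost equilibrium at $e^*$. This means confirming that nothing in Definition~\ref{def:CE} treats $e$ as exogenous beyond its role as a positive constant in the profit functions, and that $e^*>0$ so the hypothesis $e>0$ of Theorem~\ref{thm:MLI_characterization} applies. Once that is settled, everything else is a direct reading of earlier results.
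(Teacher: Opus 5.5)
Your proposal is correct and is essentially the paper's argument: the paper invokes Theorem~\ref{thm:heterogeneous} and then reads off both claims from Theorem~\ref{thm:MLI_characterization} at $e=e^*$, while your version also spells out why any equilibrium with supply curve coincides with the homogeneous-cost equilibrium at its own marginal cost $c(M+D+L)$, a step the paper leaves implicit. The one hypothesis of Theorem~\ref{thm:MLI_characterization} you did not check, $\mathcal D(e^*)>0$, follows directly from the assumption $\mathcal D(e)>0$ for all $e\ge 0$ in Theorem~\ref{thm:heterogeneous}.
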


\begin{proof}
By Theorem~\ref{thm:heterogeneous}, there exists 
a unique non-degenerate $\theta$-T/I ratio 
competitive equilibrium with supply curve, with 
equilibrium marginal cost $e^*$. Claims (1) 
and (2) follow immediately from 
Theorem~\ref{thm:MLI_characterization} applied 
at $e = e^*$. 
\end{proof}

\begin{remark}
For isoelastic demand $\mathcal{D}(p) = Kp^{-\varepsilon}$ and power 
supply $c(T) = e_0 \cdot T^{\beta}$, the equilibrium regime is first 
determined by $(\alpha, \gamma, \theta)$ via the thresholds 
$\theta_{\rm low} = \alpha - 1$ and $\theta_{\rm high} = 
\frac{1}{\gamma-1}$ of Corollary~\ref{cor:regime_invariance}, 
independently of $e^*$. Given the regime, with regime-specific 
constants $(c_1, c_2)$ from the proof of 
Theorem~\ref{thm:heterogeneous}, we have 
$T(e^*) = c_1 \mathcal{D}(c_2 e^*) = c_1 K (c_2 e^*)^{-\varepsilon} 
= c_1 K c_2^{-\varepsilon} (e^*)^{-\varepsilon}$, 
so the fixed point condition $e^* = e_0 \cdot T(e^*)^{\beta}$ becomes:
$$e^* = e_0 \cdot c_1^{\beta} K^{\beta} c_2^{-\varepsilon\beta} \cdot 
(e^*)^{-\varepsilon\beta},$$
and collecting terms gives $(e^*)^{1+\varepsilon\beta} = 
e_0 \cdot c_1^{\beta} K^{\beta} c_2^{-\varepsilon\beta}$, with unique 
solution:
\[
e^* = \left(e_0 \cdot c_1^{\beta} K^{\beta} 
c_2^{-\varepsilon\beta}\right)^{\frac{1}{1+\varepsilon\beta}}.
\]
All equilibrium prices and allocations then follow from 
Corollary~\ref{cor:regime_invariance}.
\end{remark}

\paragraph{Attack cost.}
In the heterogeneous cost setting, the opportunity cost of
redirecting one unit of compute from any active miner is $e^*$,
regardless of whether the miner is marginal or not. To
see this: a miner with physical cost $c(t) \leq e^*$ earns
equilibrium rent $e^* - c(t)$, so the attacker must compensate
them for both their physical cost $c(t)$ and their foregone rent
$e^* - c(t)$, totaling $e^*$. This justifies extending
Definition~\ref{def:cost50} to the heterogeneous setting by
replacing $e$ with $e^*$:

\begin{definition}[Cost of a $50\%$ attack with supply curve]
\label{def:attack_cost_supply}
Let $\Pi = (M, D, L, S, Q, p, P)$ be a competitive equilibrium
with supply curve $c(\cdot)$ with
equilibrium marginal cost $e^* := c(M+D+L)$. The \emph{economic
cost of a $50\%$ attack} is:
\[
\mathrm{Cost}_{50\%}(\Pi) := \min\!\left\{
\frac{e^* S}{2},\;
\frac{\gamma S}{2}\!\left(e^* - \frac{p}{\alpha}\right)
\right\}.
\]
\end{definition}

This definition models an attacker who obtains attack capacity by
redirecting compute already active in equilibrium. 

\begin{corollary}[Attack cost with supply curve]
\label{cor:attack_cost_supply}
Under the conditions of Corollary~\ref{cor:regime_invariance},
the economic cost of a $50\%$ attack
 equals $P^*R/2$.
\end{corollary}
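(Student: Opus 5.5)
The plan is to reduce Corollary~\ref{cor:attack_cost_supply} to Theorem~\ref{thm:attack_cost}, applied with the fixed cost $e$ replaced by the equilibrium marginal cost $e^*$.

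By Theorem~\ref{thm:heterogeneous} and Corollary~\ref{cor:regime_invariance}, the unique non-degenerate $\theta$-T/I ratio competitive equilibrium with supply curve, $\Pi^*=(M,D,L,S,Q,p,P^*)$, has a specific marginal cost $e^*=c(M+D+L)>0$. At that cost, $\Pi^*$ is exactly the unique non-degenerate $\theta$-T/I ratio competitive equilibrium of the homogeneous-cost model in Definition~\ref{def:CE}. The key observation is that Definition~\ref{def:competitive_equilibrium_supply} imposes precisely the conditions of Definition~\ref{def:CE} with $e$ replaced by $e^*$. These are the accounting identities, market clearing, feasibility, free entry $\pi_X\le 0$, and zero profit for active activities, with profits $\pi_M=P^*R/S-e^*$, $\pi_D=P^*R/(\gamma S)+p/\alpha-e^*$ and $\pi_L=p-e^*$. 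Once $e^*$ is fixed as a number, $\Pi^*$ is therefore a non-degenerate competitive equilibrium of the fixed-cost model with primitives $(e^*,\alpha,\gamma,R)$.

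Next I will note that Definition~\ref{def:attack_cost_supply} is literally Definition~\ref{def:cost50} with $e$ replaced by $e^*$. Theorem~\ref{thm:attack_cost} holds for any non-degenerate competitive equilibrium at any cost $e>0$, so applying it at $e=e^*$ gives $\mathrm{Cost}_{50\%}(\Pi^*)=P^*R/2$. For completeness, I would recall the two-case argument, which uses only the complementarity conditions. If $M>0$, then $P^*R/S=e^*$, and $\pi_D\le 0$ gives $\gamma(e^*-p/\alpha)\ge e^*$, so the minimum equals $e^*S/2=P^*R/2$. If $D>0$, then $P^*R/S=\gamma(e^*-p/\alpha)$, and $\pi_M\le 0$ makes this the smaller of the two terms. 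Non-degeneracy ($S>0$) guarantees that at least one of $M$ and $D$ is positive, so these cases are exhaustive.

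The only delicate step is the first one: justifying that freezing $e^*$ does not change the equilibrium conditions. In particular, agents take $e^*$ as given, so $c(\cdot)$ does not enter the profit inequalities beyond fixing the value $e^*$. Given Definition~\ref{def:competitive_equilibrium_supply}, this is immediate, and everything else is inherited from Theorem~\ref{thm:attack_cost}.
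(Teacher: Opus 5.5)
Your proposal is correct and follows the paper's route. Both arguments observe that the supply-curve equilibrium satisfies the free-entry and zero-profit conditions of Definition~\ref{def:CE} at the frozen cost $e=e^*$, and that Definition~\ref{def:attack_cost_supply} is Definition~\ref{def:cost50} with $e$ replaced by $e^*$, so Theorem~\ref{thm:attack_cost} applies verbatim (your remark that only this definitional equivalence, not the full regime characterization, is needed is a slightly cleaner reading of the same argument).
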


\begin{proof}
By Corollary~\ref{cor:regime_invariance}, the unique competitive
equilibrium with supply curve is characterized by the same
conditions as Theorem~\ref{thm:MLI_characterization} evaluated
at $e = e^*$. In particular, the free-entry conditions
$\pi_Y \leq 0$ hold at $e = e^*$, and the zero-profit conditions
$\pi_Y = 0$ for active $Y$ hold at $e = e^*$. Since the
opportunity cost of redirecting any active miner's compute is
$e^*$ per unit (as argued above), Theorem~\ref{thm:attack_cost}
applies verbatim at $e = e^*$, giving
$\mathrm{Cost}_{50\%} = P^*R/2$.
\end{proof}

\section{Acknowledgments}

I am very grateful to Erez Badash, Ohad Klein, Jacob Leshno, Noam Nisan, Ilan Komargodski, Daniel Raskin and Omri Weinstein for helpful discussions and insightful
comments that significantly improved this work. Most notably, I am very grateful to Noam Nisan and Jacob Leshno for several comments that significantly improved the presentation of these results, and to Ohad Klein for providing detailed feedback on some of the proofs.

Large language models (primarily ChatGPT and Claude, occasionally Gemini) were used extensively during the development of this project,
including for discussing ideas, expanding proof sketches into initial
draft write-ups, implementing the pricing model in Python, and assisting
with editing.
All technical claims, proofs, errors, and interpretations remain my own.
\bibliographystyle{plain}
\bibliography{duplex}

\appendix
\section{A Simple Monetary Interpretation of the T/I Ratio}
\label{sec:monetary_foundation}
This appendix shows that the token/inference ratio $\theta = \frac{P R}{p Q}$ admits a simple monetary interpretation in a proof-of-work economy in which the native token serves as a settlement medium for inference. In particular, $\theta$ can be viewed as an index summarizing underlying settlement intensity, velocity, and hoarding behavior, and is independent of technological parameters governing the PoW blockchain (and in particular, duplex efficiency parameters  $\alpha,\gamma$). This perspective explains why it may be reasonable to treat $\theta$ as a behavioral constant when analyzing how changes in technology reshape equilibrium structure. 

We note that the derivation is not specific to proof-of-useful-work or inference markets: it applies to any PoW blockchain in which
the native token serves as a settlement medium for some transactional
flow. Our derivation applies the classical equation-of-exchange logic 
\cite{Fisher1911} to the PoW setting, augmented with a simple 
stock-flow model of circulating token supply in which newly issued 
tokens partially enter circulation and a fraction is periodically 
withdrawn into hoarding.

\subsection{Settlement Float and Monetary Environment}
Consider a PoW blockchain that issues $R>0$ tokens per block
at price $P>0$ (USD/token).
Let $\mathcal E_t$ denote the dollar value of token-denominated transaction in block period $t$.

Looking forward, we will be considering 
\[
\mathcal E_t = \theta_{\rm raw} p_t Q_t,
\]
where $p_t Q_t$ is total inference expenditure at block time period $t$ and
$\theta_{\rm raw}>0$ is a settlement multiple. That is, 
the total dollar value of transactions settled in the token at block time period $t$ is a constant multiplier of the inference expenditure at time period $t$.
(Note that the  multiple $\theta_{\rm raw}$ may exceed one, allowing for additional token-denominated demand beyond inference settlement.)
For now, however, we keep the token demand 
$\mathcal E_t$ abstract.

We adopt a very simple model for the evolution and usefulness of this settlement float:
\begin{enumerate}
\item \textbf{Turnover (velocity).}
The float generates transactional expenditure according to an exchange relation
\begin{equation}\label{eq:exchange_identity_MCP}
\mathcal E_t = P_t V X_t ,
\end{equation}
where $V>0$ is the average per-block turnover rate of the \emph{settlement float} (i.e., $V$ is velocity only for tokens actually in circulation).
This mirrors Fisher's ``equation of exchange'' logic \cite{Fisher1911} used in monetary models (transactional expenditure equals price times velocity times circulating balances)
 and in crypto-asset pricing analyses such as \cite{Athey2016BitcoinPricing}.

\item \textbf{Float inflow from issuance.}
A fraction $\nu\in(0,1]$ of newly issued tokens enters the settlement float each block,
capturing the idea that miners sell a portion of rewards into transactional circulation.

\item \textbf{Net float leakage.}
A fraction $\lambda\in(0,1]$ of the float exits circulation
each block (net of any re-entry from hoarders).
We assume $\lambda>0$ to ensure existence of a stationary float.
\end{enumerate}

The settlement float thus evolves according to
\begin{equation}
\label{eq:float_law_MCP}
X_{t+1} = X_t + \nu R - \lambda X_t.
\end{equation}

\paragraph{Stationary monetary equilibria}
We turn to defining a stationary monetary equilibrium in which token
prices may vary over time but the system converges to a stationary
monetary configuration.

\begin{definition}[Stationary monetary equilibrium (settlement-float subsystem)]
\label{def:stationary_monetary_eq}
Fix issuance $R>0$ and settlement-float primitives $(\nu,\lambda,V)$
with $\nu,\lambda\in(0,1]$ and $V>0$.
A \emph{stationary monetary equilibrium} is a tuple
$\bigl(\{P_t\}_{t\ge0},\{\mathcal E_t\}_{t\ge0},\{X_t\}_{t\ge0}\bigr)$
such that:

\begin{enumerate}
\item \textbf{Exchange (settlement capacity).}
For all $t\ge0$,
\[
\mathcal E_t \;=\; P_t\,V\,X_t .
\]

\item \textbf{Settlement-float law of motion.}
For all $t\ge0$,
\[
X_{t+1} \;=\; X_t + \nu R - \lambda X_t .
\]

\item \textbf{Stationarity.}
There exist constants $(P,\mathcal E,X)$ such that
\[
P_t = P,\qquad
\mathcal E_t = \mathcal E,\qquad
X_t = X
\qquad \text{for all sufficiently large } t .
\]
\end{enumerate}
We refer to $(P,\mathcal E,X)$ as the \emph{stationary triple}.
\end{definition}

\begin{proposition}[Stationary settlement float implies linear pricing]
\label{prop:settlement_linear}
Fix $(R,\nu,\lambda,V)$ and let
$(\{P_t\}_{t\ge0}$, $\{\mathcal E_t\}_{t\ge0}$, $\{X_t\}_{t\ge0})$
be a stationary monetary equilibrium
with stationary triple $(P,\mathcal E,X)$.
Then the issuance value satisfies
\begin{equation}\label{eq:linear_PR_general}
PR \;=\; \frac{\lambda}{\nu V}\,\mathcal E .
\end{equation}
\end{proposition}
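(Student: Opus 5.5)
The plan is to pass to the stationary limit in both defining conditions of Definition~\ref{def:stationary_monetary_eq} and then eliminate $X$ between the two resulting equations. There is no need to study the transient dynamics. Stationarity says that $P_t$, $\mathcal E_t$ and $X_t$ are constant for all sufficiently large $t$, so both conditions hold with the constant values plugged in.

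First, I take $t$ large enough that $X_t = X_{t+1} = X$ and apply the float law of motion $X_{t+1} = X_t + \nu R - \lambda X_t$. This gives $X = X + \nu R - \lambda X$, that is, $\lambda X = \nu R$. Since $\lambda \in (0,1]$ is strictly positive, I solve to get $X = (\nu/\lambda) R$, which is strictly positive because $\nu, R > 0$.

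Second, for the same large $t$, the exchange condition reads $\mathcal E = P V X$. Substituting $X = (\nu/\lambda)R$ yields $\mathcal E = \frac{\nu V}{\lambda}\, P R$. Since $\nu, V > 0$, I can multiply both sides by $\lambda/(\nu V)$ to obtain $PR = \frac{\lambda}{\nu V}\,\mathcal E$, which is \eqref{eq:linear_PR_general}.

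There is no real obstacle. The only points that need care are that the stationary constants make the recursion an equality of constants, and that the divisions by $\lambda$ and by $\nu V$ are legitimate because these are strictly positive by assumption. I would note in passing that the hypothesis $\lambda > 0$ is exactly what makes a stationary float exist. I would also remark that setting $\mathcal E = \theta_{\rm raw}\, pQ$ immediately recovers $\theta = \frac{\lambda}{\nu V}\theta_{\rm raw}$, although that step belongs to a subsequent corollary rather than to this proof.
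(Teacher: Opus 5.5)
Your proposal is correct and follows the paper's proof essentially step for step. Both impose stationarity on the float law to obtain $X = (\nu/\lambda)R$, substitute this into the exchange relation $\mathcal E = PVX$, and rearrange to get $PR = \frac{\lambda}{\nu V}\mathcal E$.
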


\begin{proof}
By stationarity of the settlement-float law, for sufficiently large $t$,
$X_{t+1}=X_t=X$ implies
\[
0=\nu R-\lambda X,
\qquad\text{hence}\qquad
X=\frac{\nu}{\lambda}R .
\]
Under stationarity, for sufficiently large $t$, we also have $P_t= P$ and
$\mathcal E_t = \mathcal E$.
Substituting the stationary settlement float into the exchange relation
$\mathcal E = PVX$ gives
\[
\mathcal E
= PV\left(\frac{\nu}{\lambda}R\right).
\]
Rearranging yields
\[
PR=\frac{\lambda}{\nu V}\,\mathcal E ,
\]
which proves \eqref{eq:linear_PR_general}.
\end{proof}

\subsection{Implication for Equilibrium Analysis}

We now show how to apply Proposition \ref{prop:settlement_linear} to the setting of this paper.
Let $pQ$ denote the dollar value of useful-computation
(inference) spending. Suppose that a \emph{settlement multiple} $\theta_{\rm raw}$
of this spending is conducted in the native token. That is, the total
dollar value of token-denominated transactions per block period equals
\[
\mathcal E = \theta_{\rm raw} \, pQ,
\]
We now show that under the above stationary settlement float model, this pins the market coupling parameter $\theta$.

\begin{corollary} [Stationary implies fixed T/I ratio]
\label{thm:stationary_reduction}
Fix primitives $(e,\alpha,\gamma,R)$ and demand $\mathcal D$.
Let $(M,D,L,S,Q,p,P)$ be a competitive equilibrium.
Assume this equilibrium can be extended with a stationary monetary equilibrium with stationary triple 
$(P,{\mathcal E},X)$ for some settlement-float primitives $(\nu,\lambda,V)$ such that
$\mathcal E \;=\; \theta_{\mathrm{raw}}\,pQ$
for fixed $\theta_{\mathrm{raw}}>0$.
Define
\[
\theta \;:=\; \frac{\lambda}{\nu V}\,\theta_{\mathrm{raw}},
\]
Then $(M,D,L,S,Q,p,P)$ is a $\theta$-T/I ratio competitive equilibrium.
\end{corollary}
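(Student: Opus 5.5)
The plan is to reduce the statement to Proposition~\ref{prop:settlement_linear} and then read off the definition of a $\theta$-T/I ratio competitive equilibrium (Definition~\ref{def:CE:TI}). By hypothesis the outcome $(M,D,L,S,Q,p,P)$ is already a competitive equilibrium in the sense of Definition~\ref{def:CE}. So the only thing left to verify is the linkage condition $PR=\theta pQ$ with $\theta=\frac{\lambda}{\nu V}\theta_{\rm raw}$, i.e.\ Equation~\eqref{eq:MLI}.

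\textbf{Step 1.} Invoke Proposition~\ref{prop:settlement_linear} on the stationary monetary equilibrium that extends the outcome. Its stationary triple $(P,\mathcal E,X)$ uses the same token price $P$ as the competitive equilibrium. The proposition gives
\[
PR=\frac{\lambda}{\nu V}\,\mathcal E .
\]

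\textbf{Step 2.} Substitute the settlement assumption $\mathcal E=\theta_{\rm raw}\,pQ$ to get
\[
PR=\frac{\lambda}{\nu V}\theta_{\rm raw}\,pQ=\theta\,pQ .
\]
Note that $\theta>0$, since $\lambda,\nu\in(0,1]$, $V>0$ and $\theta_{\rm raw}>0$.

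\textbf{Step 3.} Divide by $pQ$ to obtain $\theta=PR/(pQ)$, which is exactly Equation~\eqref{eq:MLI}. This step needs $pQ>0$, which is where the only real care lies; the rest is substitution.
\begin{itemize}
\item \emph{Non-degeneracy} gives $Q>0$, by the paper's standing convention.
\item \emph{Positivity of the inference price} is argued as in the proof of Theorem~\ref{thm:robust_regimes_SI}. If $p=0$, then $\pi_L<0$ forces $L=0$ and hence $D>0$. Zero profit for $D$ then gives $PR/S=\gamma e>e$, contradicting $\pi_M\le 0$. So $p>0$.
\item \emph{Positivity of $P$} is automatic, since the monetary environment assumes $P>0$.
\end{itemize}
Once $pQ>0$ and $PR>0$ are in place, the ratio in Equation~\eqref{eq:MLI} is well defined. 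Together with the equilibrium conditions already assumed, this completes the argument.
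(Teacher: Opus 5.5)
Your proposal is correct and follows the paper's proof: apply Proposition~\ref{prop:settlement_linear} to the stationary triple and substitute $\mathcal E=\theta_{\rm raw}pQ$ to get $PR=\theta pQ$. Your extra check that $pQ>0$ is more careful than the paper, but it follows at once from $PR=\theta pQ$ with $P,R,\theta>0$, so the detour through $p>0$ (which also needs $\gamma>1$) is unnecessary.
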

\begin{proof}
Let $(M,D,L,S,Q,p,P)$ be a competitive equilibrium, 
and suppose it admits an extension to a stationary monetary equilibrium
as in the statement.
By Proposition~\ref{prop:settlement_linear}, the stationary monetary equilibrium
satisfies:
\[PR \;=\; \frac{\lambda}{\nu V}\,\mathcal E = \theta\,pQ
\]
where the effective coefficients
\(
\theta=\frac{\lambda}{\nu V}\theta_{\mathrm{raw}}
\)
depend only on the settlement–float primitives
$(\nu,\lambda,V)$ and the raw demand parameters, and therefore are
independent of the realized allocation $(M,D,L,S$, $Q,p,P)$.
Since, by assumption $(M,D,L,S,Q,p,P)$ satisfies the notion of a competitive equilibrium, it follows that 
$(M,D,L,S,Q,p,P)$ satisfies the definition of a $\theta$-T/I ratio competitive equilibrium.
\end{proof}

\paragraph{Interpretation of $\theta$.}
The market coupling parameter $\theta$ can thus 
be thought of as a steady-state index aggregating:
\begin{itemize}
\item settlement intensity ($\theta_{\rm raw}$),
\item float inflow from issuance ($\nu$),
\item hoarding behavior ($\lambda$),
\item turnover ($V$).
\end{itemize}
Importantly, $\theta$ depends only on monetary primitives and is
independent of technology (i.e., parameters $\alpha,\gamma$).

\paragraph{A note on the stability of $\theta$.}
We have treated $V$, $\lambda$, and $\nu$ as 
fixed empirical parameters. In principle, these 
could all vary with economic conditions: velocity 
$V$ may shift with transaction patterns, and the 
hoarding parameters $\lambda$ and $\nu$ may 
respond to price expectations. We nonetheless 
maintain the stability assumption as a 
first-order approximation, following the 
equation-of-exchange framework of 
Fisher~\cite{Fisher1911} and the quantity theory 
tradition of Friedman~\cite{friedman1956}, and 
view the robustness of this assumption to richer 
behavioral models as an interesting direction for 
future work.
\section{Halving Schedules and Transaction Fees}
\label{app:extensions}

\subsection{Halving Schedules}
\label{app:halving}

Our main model considers a single epoch with a constant per-block
reward \(R\) and per-period cost \(e\). Bitcoin-style blockchains
instead unfold over multiple epochs, following a halving schedule
in which the block reward is periodically reduced. We show here
that, under the assumption that miners discount future rewards and
costs at rate \(\rho \in (0,1)\), commit to mining indefinitely, and
all equilibrium variables \((p,S,P,Q)\) are stationary across
periods, any issuance schedule \(\{R_t\}_{t\ge 0}\) can be
summarized by a single annuity-equivalent reward \(\bar R\).\footnote{All dollar-denominated quantities are interpreted in real terms; equivalently, \(\rho\) is the real discount factor, net of ordinary price inflation.} The
resulting infinite-horizon equilibrium problem is then equivalent
to the one-period model studied in the main text with block reward
\(\bar R\).

Suppose miners evaluate future rewards and costs using a per-period
discount factor \(\rho \in (0,1)\). A miner who commits to mining
indefinitely values the entire future reward stream at its net
present value:
\[
R_{\rm eff}
:=
\sum_{t=0}^{\infty}\rho^tR_t.
\]
We refer to \(R_{\rm eff}\) as the \emph{effective block reward}.

As a concrete example, consider Bitcoin's halving schedule
\[
R_t
=
R\cdot 2^{-\lfloor t/H\rfloor},
\]
where \(H\) is the halving interval in blocks. Grouping terms
within each halving period:
\begin{align*}
R_{\rm eff}
&=
\sum_{t=0}^{\infty}
\rho^t
\cdot
R\cdot
2^{-\lfloor t/H\rfloor}
\\
&=
R
\sum_{k=0}^{\infty}
2^{-k}
\sum_{t=kH}^{(k+1)H-1}
\rho^t
\\
&=
R
\sum_{k=0}^{\infty}
2^{-k}
\rho^{kH}
\sum_{j=0}^{H-1}
\rho^j
\\
&=
R
\left(
\sum_{j=0}^{H-1}
\rho^j
\right)
\sum_{k=0}^{\infty}
\left(
\frac{\rho^H}{2}
\right)^k
\\
&=
R\cdot
\frac{1-\rho^H}{1-\rho}
\cdot
\frac{1}{1-\rho^H/2},
\end{align*}
where the geometric series converges since
\(\rho<1\) implies \(\rho^H/2<1\). Thus \(R_{\rm eff}\) is finite
for any positive discount rate.
Define the annuity-equivalent reward
\[
\bar R := (1-\rho)R_{\rm eff}.
\]
\(\bar R\) is precisely the ``constant per-period" reward whose
discounted present value equals \(R_{\rm eff}\), since
\[
\sum_{t=0}^{\infty}\rho^t\bar R
=
\frac{\bar R}{1-\rho}
=
R_{\rm eff}.
\]

We now show that \(\bar R\) is the only quantity relevant for
equilibrium. Under stationarity, the discounted infinite-horizon
profits of solo mining, solo inference, and duplex compute are:
\[
\Pi_M
=
\sum_{t=0}^{\infty}
\rho^t
\left(
\frac{PR_t}{S}
-
e
\right)
=
\frac{PR_{\rm eff}}{S}
-
\frac{e}{1-\rho},
\]
\[
\Pi_L
=
\sum_{t=0}^{\infty}
\rho^t
(p-e)
=
\frac{p-e}{1-\rho},
\]
and
\[
\Pi_D
=
\sum_{t=0}^{\infty}
\rho^t
\left(
\frac{PR_t}{\gamma S}
+
\frac{p}{\alpha}
-
e
\right)
=
\frac{PR_{\rm eff}}{\gamma S}
+
\frac{p}{\alpha(1-\rho)}
-
\frac{e}{1-\rho}.
\]

Multiplying all profits by the positive constant \(1-\rho\) does
not affect their signs and therefore does not affect any
equilibrium condition. The zero-profit conditions are therefore
equivalent to
\[
(1-\rho)\Pi_M
=
\frac{P\bar R}{S}
-
e,
\]
\[
(1-\rho)\Pi_L
=
p-e,
\]
and
\[
(1-\rho)\Pi_D
=
\frac{P\bar R}{\gamma S}
+
\frac{p}{\alpha}
-
e.
\]

These are exactly the profit functions of the one-period model
(Equation~\ref{eq:profits}) with block reward \(\bar R\) in place
of \(R\). Consequently, the infinite-horizon model is equivalent
to the one-period model of the main text after replacing
\[
R
\longrightarrow
\bar R
=
(1-\rho)R_{\rm eff}.
\]

For Bitcoin's halving schedule, the annuity-equivalent reward is
therefore
\[
\bar R
=
(1-\rho)R_{\rm eff}
=
R\cdot
\frac{1-\rho^H}{1-\rho^H/2}.
\]

In particular, the equilibrium characterization of
Theorem~\ref{thm:MLI_characterization}, the regime thresholds
\(\theta_{\rm low}\) and \(\theta_{\rm high}\), and the attack-cost
result of Theorem~\ref{thm:attack_cost} remain unchanged under
general discounted issuance schedules after replacing \(R\) by
\(\bar R\). Note that the T/I ratio is defined with respect to the
annuity-equivalent reward: $\theta = P\bar{R}/(pQ)$. Under
a halving schedule, $\theta$ should therefore be interpreted
as $P\bar{R}/(pQ)$ rather than $PR_t/(pQ)$ for any specific
period $t$.

\subsection{Transaction Fees}
Our main model abstracts from transaction fees, which are 
payments made by blockchain users to miners in addition to 
the block reward. We show here that transaction fees are 
naturally accommodated within the monetary framework of 
Appendix~\ref{sec:monetary_foundation}, and that their effect reduces 
to a rescaling of the T/I ratio into an ``effective" T/I ratio $\theta^+$, 
leaving all equilibrium characterizations unchanged.

Following the monetary foundation in Appendix~\ref{sec:monetary_foundation},
let $\mathcal E$ denote the token-denominated settlement flow per period.
In the model, $\mathcal E=\theta_{\rm raw}pQ$. Suppose that transaction fees amount to a fraction $\phi\ge0$ of the
settlement flow.\footnote{In practice, not all settlement may occur
on-chain. However, if a constant fraction of settlement activity takes
place on-chain and transaction fees are proportional to on-chain
settlement, then fees are also proportional to total settlement flow.}
Miner revenue therefore becomes
\[
PR+\phi\mathcal E .
\]
Dividing by $pQ$ yields the \emph{effective} token/inference ratio once fees are included,
which we denote by $\theta^+$:
\[
\theta^+ := \frac{PR+\phi\mathcal E}{pQ}
           = \frac{PR}{pQ}+\phi\frac{\mathcal E}{pQ}.
\]
Using $\frac{PR}{pQ}=\theta$ and $\mathcal E=\theta_{\rm raw}pQ$ gives
\[
\theta^+ = \theta+\phi\theta_{\rm raw}.
\]
Since Appendix~\ref{sec:monetary_foundation} shows that
\[
\theta=\frac{\lambda}{\nu V}\theta_{\rm raw},
\]
equivalently $\theta_{\rm raw}=\frac{\nu V}{\lambda}\theta$, it follows that
\[
\theta^+=\left(1+\phi\frac{\nu V}{\lambda}\right)\theta.
\]
Thus incorporating transaction fees is equivalent to replacing $\theta$
with $\theta^+=\left(1+\phi\frac{\nu V}{\lambda}\right)\theta$. Since the
equilibrium characterization of Theorem~\ref{thm:MLI_characterization}
depends only on $\theta$, the characterizations of the main text apply directly
with $\theta^+$ in place of $\theta$.

\end{document}